\newtheoremstyle{plain}
	  {}
	  {}
	  {\itshape}
	  {}
	  {\bfseries}
	  {}
	  {5pt plus 1pt minus 1pt}
	  {}
\newtheoremstyle{definition}
  	  {}
	  {}
	  {\normalfont}
	  {}
	  {\bfseries}
	  {}
	  {5pt plus 1pt minus 1pt}
	  {}
\theoremstyle{plain}
\newtheorem{theorem}{Theorem}
\newtheorem{lemma}{Lemma}
\newtheorem{proposition}{Proposition}
\newtheorem{corollary}{Corollary}
\theoremstyle{definition}
\newtheorem{definition}{Definition}
\newcommand{\refeq}[1]			{(\ref{#1})} 
\newcommand{\reffig}[1]			{Fig. \ref{#1}} 
\newcommand{\refsec}[1]			{Section \ref{#1}}
\newcommand{\refapp}[1]			{Appendix \ref{#1}}
\newcommand{\reftab}[1]			{Table \ref{#1}}
\newcommand{\refthm}[1]			{Theorem \ref{#1}}
\newcommand{\refprop}[1]		{Proposition \ref{#1}}
\newcommand{\reflem}[1]			{Lemma \ref{#1}}
\newcommand{\reffn}[1] 		    {\textsuperscript{\ref{#1}}}
\newcommand{\R}  	{\mathbb{R}} 
\newcommand{\C}     {\mathbb{C}} 
\newcommand{\PDM} 	{S_{++}} 
\newcommand{\PSDM} 	{S_{+}} 
\let\originalleft\left
\let\originalright\right
\renewcommand{\left}{\mathopen{}\mathclose\bgroup\originalleft}
\renewcommand{\right}{\aftergroup\egroup\originalright}
\newcommand{\plist}[1] 	{\left(#1\right)} 
\newcommand{\blist}[1]	{\left[ #1 \right]} 
\newcommand{\clist}[1]	{\left\{#1\right\}} 
\newcommand{\vect}[1]   {\mathrm{#1}}
\newcommand{\mat}[1]    {\mathbf{#1}}
\newcommand{\tr}[1] {{#1}^{\mathrm{T}}} 
\newcommand{\norm}[1]  {\|#1\|}
\newcommand{\diag} {\mathrm{diag}} 
\newcommand{\argmin}{\operatornamewithlimits{arg\ min}} 
\newcommand{\diff} {\mathrm{d}}
\newcommand{\conv}      {\mathrm{conv}} 
\newcommand{\sposn} 	{\vect{x}} 
\newcommand{\sstate}	{\mat{x}} 
\newcommand{\sorder}    {n} 
\newcommand{\sdim}      {d} 
\newcommand{\sgain} 	{\kappa} 
\newcommand{\slambda}   {\lambda} 
\newcommand{\slambdaset} {\boldsymbol{\lambda}}
\newcommand{\scoef}     {\vect{c}} 
\newcommand{\cmat}       {\mat{C}} 
\newcommand{\vmat}       {\mat{V}} 
\newcommand{\vfunc}      {\nu} 
\newcommand{\vvect}      {\nu} 
\newcommand{\vsplx} {\mathcal{VS}} 
\newcommand{\esplx}	{\mathcal{ES}} 
\newcommand{\elp}		{\mathcal{E}} 
\newcommand{\elpctr}	{\vect{c}} 
\newcommand{\elpmat} 	{\mat{\Sigma}} 
\newcommand{\elprad}	{\rho} 
\newcommand{\proj}{\Pi} 
\newcommand{\colspace}{\mathcal{C}} 
\begin{document}

\title{\huge\mbox{Vandermonde Trajectory Bounds for Linear Companion Systems}}

\author{\"Om\"ur Arslan and Aykut \.{I}\c{s}leyen 
\thanks{The authors are with the Department of Mechanical Engineering, Eindhoven University of Technology, P.O. Box 513, 5600 MB Eindhoven, The Netherlands. The authors are also affiliated with the Eindhoven AI Systems Institute. Emails:  \{o.arslan, a.isleyen\}@tue.nl}%
}

\markboth{Technical Report, February~2023}%
{Arslan \MakeLowercase{\textit{et al.}}: Vandermonde Trajectory Bounds for Linear Companion Systems}

\maketitle

\begin{abstract}
Fast and accurate safety assessment and collision checking are essential for motion planning and control of highly dynamic autonomous robotic systems.
Informative, intuitive, and explicit motion trajectory bounds enable explainable and time-critical safety verification of autonomous robot motion.
In this paper, we consider feedback linearization of nonlinear systems in the form of proportional-and-higher-order-derivative (PhD) control corresponding to companion dynamics. 
We introduce a novel analytic convex trajectory bound, called \emph{Vandermonde simplex}, for high-order companion systems, that is given by the convex hull of a finite weighted combination of system position, velocity, and other relevant higher-order state variables.
Our construction of Vandermonde simplexes is built based on expressing the solution trajectory of companion dynamics in a newly introduced family of \emph{Vandermonde basis functions} that offer new insights for understanding companion system motion compared to the classical exponential basis functions.
In numerical simulations, we demonstrate that Vandermonde simplexes offer significantly more accurate motion prediction (e.g., at least an order of magnitude improvement in estimated motion volume) for describing the motion trajectory of companion systems compared to the standard invariant Lyapunov ellipsoids as well as exponential simplexes built based on exponential basis functions.
\end{abstract}

\begin{IEEEkeywords}
Companion systems, proportional-and-higher-order-derivative (PhD) control, feedback motion prediction, Vandermonde basis, Vandermonde simplex, Lyapunov ellipsoid. 
\end{IEEEkeywords}

\section{Introduction}
\label{sec.Introduction}

\IEEEPARstart{M}{otion} prediction plays a key role in safety assessment and constraint satisfaction of autonomous intelligent systems.
Informative and analytical trajectory bounds for motion prediction enable fast and accurate risk assessment and collision checking in motion planning and control of highly dynamic robotic systems around obstacles \cite{isleyen_vandewouw_arslan_RAL2022}.
Existing explicit motion prediction approaches for bounding closed-loop system motion mainly rely on invariant sets (e.g., of Lyapunov functions) under feedback control, where system motion is guaranteed to stay in a motion set associated with control \cite{blanchini_Automatica1999}.
However, such set invariance methods are known to be conservative, because motion prediction is not only valid for the immediate system state but also holds for many other system states contained in the same invariant motion set.  
In this paper, we introduce a novel analytic convex trajectory bound, \emph{Vandermonde simplex}, for linear companion systems --- a common dynamic feedback linearization model of nonlinear systems via embedding proportional-and-higher-order-derivative (PhD) control dynamics \cite{charlet_levine_marino_SCL1989, oriolo_deluca_vendittelli_TCST2002, chang_eun_TAC2017, mistler_benallegue_msirdi_ROMAN2001, dandreanovel_bastin_campion_ICRA1992, zhou_schwager_ICRA2014}.
In addition to its simple and intuitive construction, the proposed Vandermonde simplexes have a stronger direct dependency on the initial system state and control parameters compared to Lyapunov level sets and so can capture companion system motion more accurately, as illustrated in \reffig{fig.VandermondeSimplex}.
This improvement is mainly due to expressing the solution trajectory of companion dynamics in a new family of \emph{Vandermonde basis functions} that
offer new insights about companion system motion and a numerically more stable trajectory bound than the standard exponential basis functions.

\begin{figure}[t]
\centering
\begin{tabular}{@{}c@{}c@{}c@{}}
\includegraphics[width=0.333\columnwidth]{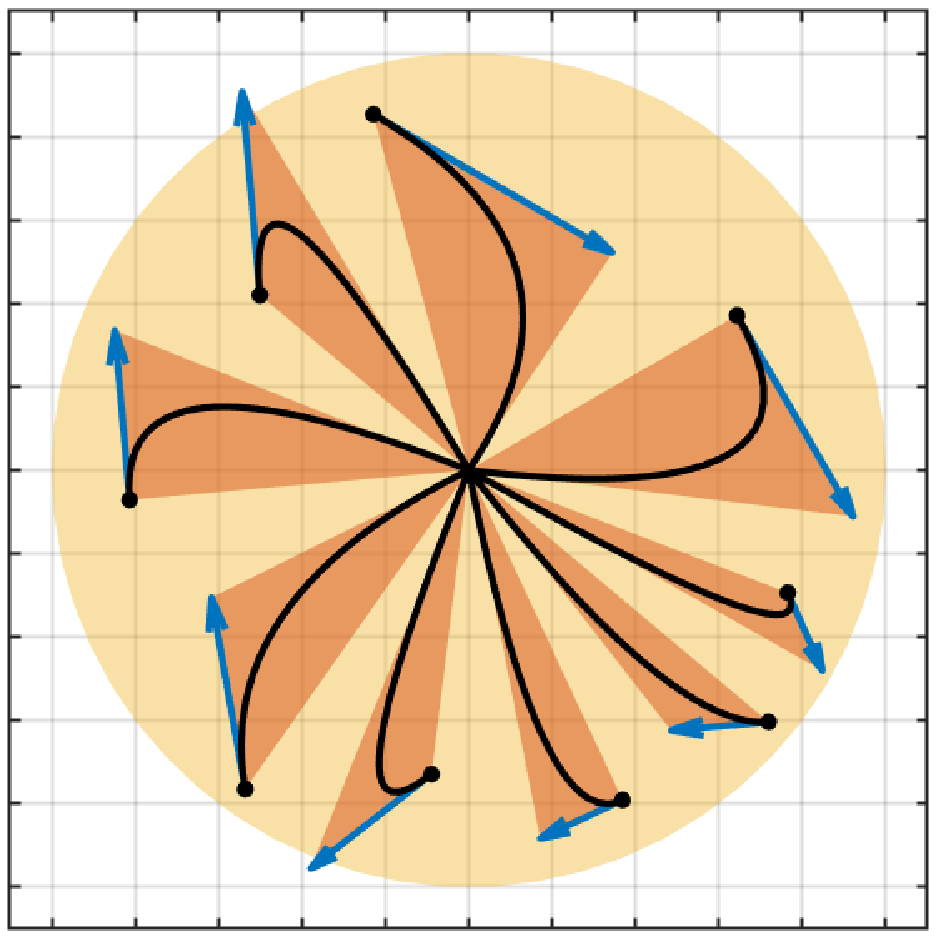} 
&
\includegraphics[width=0.333\columnwidth]{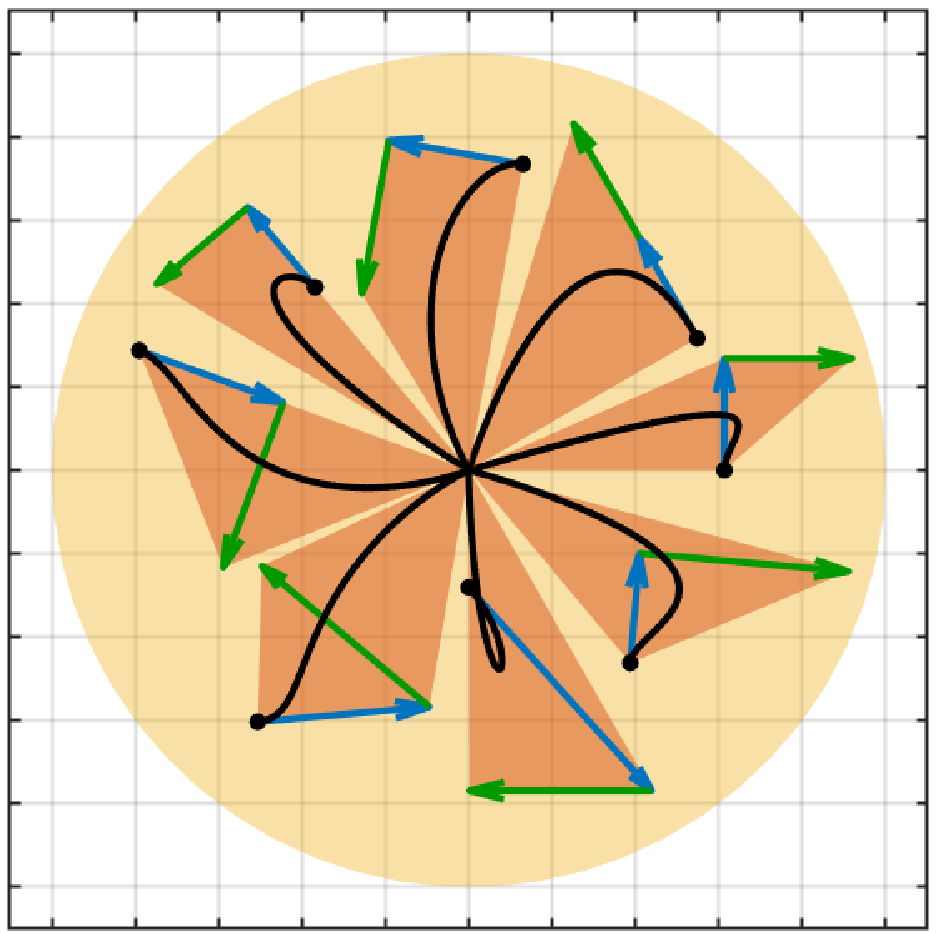} 
&
\includegraphics[width=0.333\columnwidth]{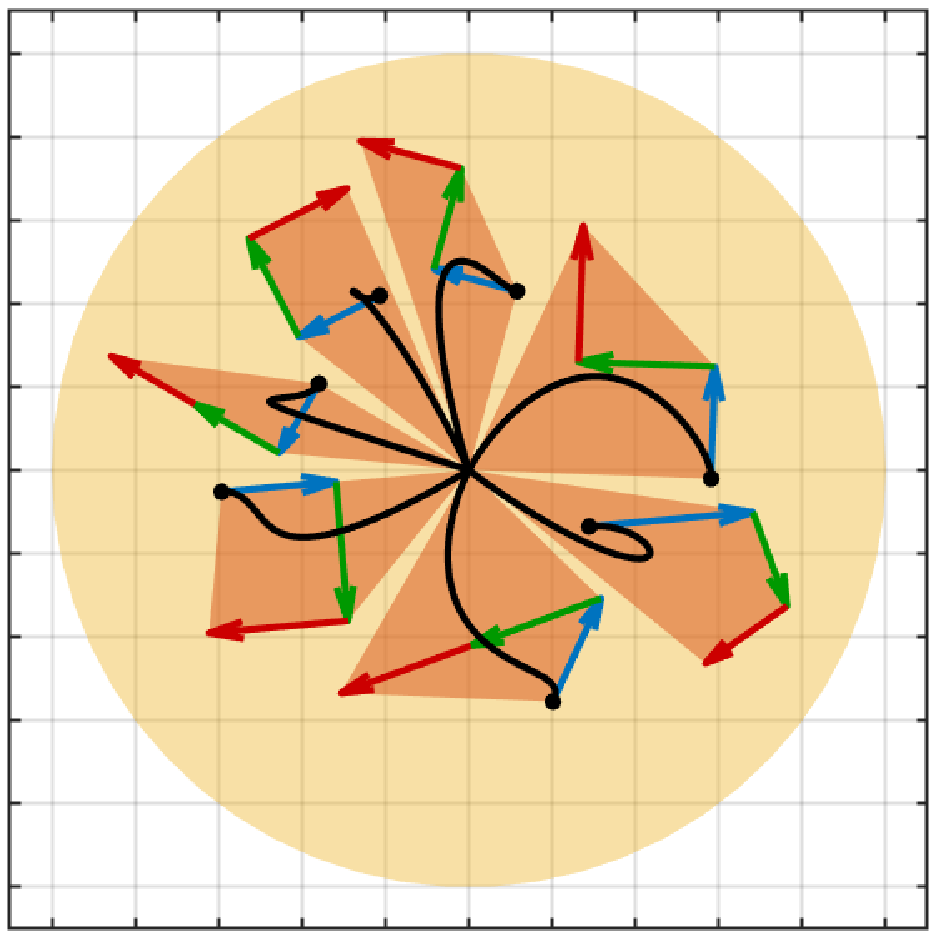} 
\end{tabular}
\vspace{-2mm}
\caption{Vandermonde simplexes (orange) for bounding the motion trajectory of the (left) second-order, (middle) third-order, (right) fourth-order companion systems starting from different initial states contained in the same projected Lyapunov ellipsoid (yellow). Here, the initial system position is indicated by a black dot and its initial scaled velocity (blue), acceleration (green), and jerk (red) are illustrated by arrows, where the scaling corresponds to Vandermonde simplex vertices. Vandermonde simplexes cover a smaller region and so more accurate compared to Lyapunov ellipsoids because Vandermonde simplexes have a stronger dependency on the initial system state and control parameters.}
\label{fig.VandermondeSimplex}
\vspace{-2mm}
\end{figure}

\subsection{Motivation and Related Literature}
\label{sec.Literature}

\subsubsection{Set Invariance in Control}
In addition to their essential use in stability analysis, invariant sets \cite{blanchini_Automatica1999} such as Lyapunov sublevel sets find significant application as a safety guard and verification tool in constrained control and optimization of dynamical systems to meet system constraints \cite{mayne_etal_Automatica2000, bemporad_casavola_mosca_TAC1997, prajna_jadbabaie_HSCC2004, kolmanovsky_garone_dicairano_ACC2014}.
A well-known limitation of invariant sets for safety verification and constraint checking is their conservativeness --- an invariant set, by definition, bounds the expected future system motion starting from any initial state within the set, but is not necessarily specific enough to the immediate system state that might be more relevant to system operation.
Simple shapes (e.g., ellipsoid and polytope) of invariant sets are often preferred in practice for computational efficiency, but this comes with a trade-off between representation complexity and conservativeness \cite{blanchini_Automatica1999}. 
To exploit this trade-off and mitigate conservativeness, invariant sets are applied together with a finite-horizon forward simulation of the system, for example, in model predictive control \cite{mayne_etal_Automatica2000} as a terminal condition to ensure stability and recursive feasibility \cite{lee_cannon_kouvaritakis_Automatica2005, lofberg_Automatica2012, kerrigan_maciejowski_CDC2000}. 
Similarly, existing approaches in controlled invariance based on control Lyapunov and barrier functions generally relax the invariance requirements away from the boundary of the constrain set to limit conservatism \cite{wieland_allgower_SNCS2007, ames_xu_grizzle_tabuada_TAC2017, ames_etal_ECC2019, gurriet_etal_CDC2018}.  
In this paper, by completely abandoning set invariance, we develop simple explicit convex trajectory bounds for companion systems that can be used as an accurate terminal condition or control barrier certificate in model predictive control with dynamic feedback linearization.

\subsubsection{Reachability Analysis}

A widely used computational approach for safety verification \cite{althoff_dolan_TR02014}, domain of attraction analysis \cite{limon_alamo_camacho_Automatica2005}, constrained control synthesis \cite{schurmann_etal_TAC2022}, and discrete abstraction \cite{lafferriere_pappas_yovine_JSC2001} of dynamical control systems is reachability analysis --- the estimation of a set of visited states of a dynamical system associated with some admissible sets of initial/goal states and control inputs over a finite or infinite horizon \cite{althoff_frehse_girard_ARCRAS2021}.
Determining reachable sets exactly is difficult because they often have complex and arbitrary shapes.
Existing reachability analysis methods mainly focus on computationally efficient inner (under) and outer (over) approximation of reachability sets by exploiting the structure of dynamical systems and set representation; for example, using Lyapunov-like invariant sets \cite{prajna_jadbabaie_HSCC2004}, the Hamilton-Jacobi equation \cite{mitchell_tomlin_HSCC2000}, system decomposition \cite{tiwari_HSCC2003}, semidefinite programming \cite{boyd_etal_LMI1994}, and simple geometric shapes (e.g., ellipsoids \cite{kurzhanski_varaiya_HSCC2000}, polytopes \cite{dang_maler_HSCC1998}, and separating hyperplanes  \cite{graettinger_krogh_JOTA1991}).
A common consensus on the use of invariant sets in reachability analysis is they improve computational efficiency but they can be significantly conservative. 
Another related concept to reachability analysis that avoids set invariance is output admissible sets \cite{gilbert_tan_TAC1991} --- an initial system state is output admissible if the resulting system output is contained in a constraint set for all future time.
Although maximal output admissible sets can be effectively computed as linear inequalities for stable and observable discrete-time linear systems, they are difficult to compute for continuous-time linear systems \cite{gilbert_tan_TAC1991}. 
The convex simplicial companion trajectory bounds presented in this paper might serve as a building block and tool for fast and accurate numerical approximation of reachable and output admissible sets of nonlinear systems via dynamic feedback linearization.

\subsubsection{Motion Prediction}

In robotics, motion prediction (i.e., anticipating the future motion of an autonomous agent) has recently received significant attention for  
safety assessment, control, and planning of autonomous robots in dynamic environments around people and other robots \cite{rudenko_etal_IJRR2020}. 
Most existing motion prediction algorithms use physical motion dynamics, characteristic motion patterns, high-level motion planning, and their combinations.
Physics-based motion prediction is often performed using the forward simulation of a simple dynamical system model constructed based on physics laws; for example, by assuming (piecewise) constant velocity/acceleration/turning-rate motion models \cite{schubert_richter_wanielik_ICIF2008}.
Pattern-based motion prediction uses pre-defined and learned motion patterns for modeling future system behavior \cite{schreier_willert_adamy_TITS2016, bennewitz_etal_IJRR2005}.
Planning-based motion prediction identifies a high-level motion objective (e.g., an intended goal position \cite{rehder_kloeden_ICCV2015, rehder_etal_ICRA2018} or a cost function \cite{wulfmeier_etal_IJRR2017}) from observed motion trajectories to combine with motion planning to predict the future system motion.
We believe that control-based feedback motion prediction built based on motion trajectory bounds under specific control offers a new perspective for closing the gap between motion prediction and motion control.  
We show in our recent studies that accurate feedback motion prediction (e.g., Vandermonde simplexes) is essential for safe and agile robot motion planning and control around obstacles \cite{isleyen_vandewouw_arslan_RAL2022, isleyen_vandewouw_arslan_arXiv2022, arslan_arXiv2022}.

\subsection{Contributions and Organization of the Paper}

This paper introduces novel analytic convex trajectory bounds for linear companion systems that can be used for fast and accurate safety assessment, motion planning and control of robotic systems via dynamic feedback linearization \cite{charlet_levine_marino_SCL1989, oriolo_deluca_vendittelli_TCST2002, chang_eun_TAC2017, mistler_benallegue_msirdi_ROMAN2001, dandreanovel_bastin_campion_ICRA1992, zhou_schwager_ICRA2014}. 
In summary, our main contributions are three folds:
\begin{enumerate}[i)] 
\item a new family of Vandermonde basis functions for representing the solution trajectory of companion dynamics that provides new insights about companion motion,
\item a new family of explicit convex  simplicial trajectory bounds, including Vandermonde and exponential simplexes, for linear companion systems, built based on the characteristic properties of companion basis functions,
\item a systematic comparison of Vandermonde simplexes with Lyapunov ellipsoids and exponential simplexes in numerical simulations to demonstrate their effectiveness and accuracy in capturing companion motion. 
\end{enumerate}
Our key technical result is a simple, interpretable, and accurate convex trajectory bound for companion systems, which we present below in \refthm{thm.VandermondeSimplex} and dedicate the rest of the paper to describing its construction and comparison with alternatives.
\begin{theorem} \label{thm.VandermondeSimplex}
\emph{\mbox{\!(Vandermonde Simplexes for Companion Systems)}} Consider an $\sorder^{\text{th}}$-order linear time-invariant dynamical system of the companion form
\begin{align}\label{eq.CompanionDynamicsExplicit}
\sposn^{(\sorder)} = -\sgain_{0, \slambdaset}\sposn^{(0)} - \sgain_{1, \slambdaset} \sposn^{(1)} - \ldots -\sgain_{\sorder-1, \slambdaset} \sposn^{(\sorder-1)},
\end{align}
where $\sposn^{(k)} := \frac{\diff^{k} }{\diff t^k} \sposn$ is the $k^{\text{th}}$ time derivative of the $\sdim$-dimen-sional system variable $\sposn \!\in\! \R^{\sdim}$, and $\sgain_{0, \slambdaset}, \ldots, \sgain_{\sorder-1, \slambdaset} \in \R$  are constant scalar control gains that ensure real negative characteristic polynomial roots $\slambdaset = (\slambda_1,  \ldots, \slambda_\sorder) \in \R_{<0}^{\sorder}$.

The system trajectory $\sposn(t)$, starting at $t = 0$ from any initial system state $\sstate_0 := (\sposn_0^{(0)},  \ldots, \sposn_0^{(\sorder-1)}) \!\in\! \R^{\sorder \times \sdim}$, is contained for all future times $t \geq 0$ in the \emph{Vandermonde simplex} $\vsplx_{\slambdaset} (\sstate_0)$ that is defined as\footnote{Summation over the empty set is assumed to be zero. Hence, the Vandermonde simplex in \refeq{eq.VandermondeSimplex} has the following form
\begin{align*}
\vsplx_{\slambdaset} ( \sstate_0) = \conv \plist{\!0, \sposn^{(0)}_0\!, \sposn^{(0)}_0\! + \!\frac{\sgain_{1, \slambdaset_{\neg \max}}}{\sgain_{0, \slambdaset_{\neg \max}}} \sposn^{(1)}_0\!, \ldots, \sum_{i=0}^{\sorder-1}\! \frac{\sgain_{i, \slambdaset_{\neg \max}}}{\sgain_{0,\slambdaset_{\neg \max}}} \sposn^{(i)}_0\!\!}\!. 
\end{align*} 
 } 
{
\begin{align}\label{eq.VandermondeSimplex}
\sposn(t) \!\in\! \vsplx_{\slambdaset}( \sstate_0)\! := \! \conv\plist{  \scalebox{0.95}{$\sum\limits_{j = 0}^{i-1}\dfrac{\sgain_{j, \slambdaset_{\neg \max}}}{\sgain_{0,\slambdaset_{\neg \max}}}$} \sposn^{(j)}_0 \!\bigg | \scalebox{0.95}{$i = 0, \ldots, \sorder$} \!}\!,\! \!\!
\end{align} 
}%
where $\conv$ denotes the convex hull operator,  $\slambdaset_{\neg \max}$ is the $(\sorder -1)$-element subset of $\slambdaset$ excluding  one maximal element that equals to $\max(\slambdaset)$, and  $\sgain_{0, \slambdaset_{\neg \max}}, \ldots, \sgain_{\sorder-2, \slambdaset_{\neg \max}}$ are the coefficients associated with $\slambdaset_{\neg \max}$, and $\sgain_{\sorder-1, \slambdaset_{\neg \max}} = 1$. 
\end{theorem}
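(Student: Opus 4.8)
The plan is to establish \refthm{thm.VandermondeSimplex} by induction on the order $\sorder$, peeling off the slowest closed-loop mode $\mu:=\max(\slambdaset)$ at each step and reconstructing the full trajectory by variation of parameters. It is convenient to first reduce to the generic case of pairwise distinct roots: the gains in \refeq{eq.CompanionDynamicsExplicit} are (signed) elementary symmetric functions of $\slambda_1,\ldots,\slambda_{\sorder}$, so $\sposn(t)$, the vertices of $\vsplx_{\slambdaset}(\sstate_0)$ (rational in $\slambdaset$ with positive denominator $\sgain_{0,\slambdaset_{\neg\max}}=\prod_\ell(-\slambda_\ell)$), and $\max(\slambdaset)$ all depend continuously on $\slambdaset\in\R_{<0}^{\sorder}$, and $\sposn(t)\in\vsplx_{\slambdaset}(\sstate_0)$ is a closed condition; hence the general case, including a repeated maximal root, follows by a limiting argument. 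Assume then $\slambda_1<\slambda_2<\cdots<\slambda_{\sorder}<0$, so $\slambdaset_{\neg\max}=(\slambda_1,\ldots,\slambda_{\sorder-1})$.

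The base case $\sorder=1$ is immediate, since $\sposn(t)=e^{\slambda_1 t}\sposn^{(0)}_0$ stays on the segment $\conv\{0,\sposn^{(0)}_0\}=\vsplx_{\slambdaset}(\sstate_0)$ for all $t\ge0$ as $e^{\slambda_1 t}\in(0,1]$. For the inductive step, set $y(t):=\dot{\sposn}(t)-\mu\,\sposn(t)$. Factoring the characteristic operator of \refeq{eq.CompanionDynamicsExplicit} as $\prod_{\ell\in\neg\max}(\tfrac{\diff}{\diff t}-\slambda_\ell)\cdot(\tfrac{\diff}{\diff t}-\mu)$ and using that it annihilates $\sposn$, one gets $\prod_{\ell\in\neg\max}(\tfrac{\diff}{\diff t}-\slambda_\ell)\,y=0$; that is, $y$ satisfies the $(\sorder-1)^{\text{th}}$-order companion dynamics with real negative roots $\slambdaset_{\neg\max}$, started from $\hat{\sstate}_0$ with $\hat{\sposn}^{(j)}_0=\sposn^{(j+1)}_0-\mu\,\sposn^{(j)}_0$. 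By the induction hypothesis $y(t)\in\vsplx_{\slambdaset_{\neg\max}}(\hat{\sstate}_0)$ for all $t\ge0$, while $\sposn(t)=e^{\mu t}\sposn^{(0)}_0+\int_0^t e^{\mu(t-\tau)}y(\tau)\,\diff\tau$.

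Expanding $y(\tau)$ over the vertices $0=\hat v_0,\hat v_1,\ldots,\hat v_{\sorder-1}$ of the reduced simplex with the nonnegative weight functions supplied by the induction hypothesis and integrating term by term, $\sposn(t)$ becomes a nonnegative combination of $\sposn^{(0)}_0$ and the $\hat v_i$, the kernel $e^{\mu(t-\tau)}$ being positive. The heart of the proof is to recast this as a convex combination of $0,v_1,\ldots,v_{\sorder}$ with $v_k=\sum_{j=0}^{k-1}\tfrac{\sgain_{j,\slambdaset_{\neg\max}}}{\sgain_{0,\slambdaset_{\neg\max}}}\sposn^{(j)}_0$. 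Substituting $\hat{\sposn}^{(j)}_0=\sposn^{(j+1)}_0-\mu\,\sposn^{(j)}_0$, using the telescoping relation $\sposn^{(j)}_0=\tfrac{\sgain_{0,\slambdaset_{\neg\max}}}{\sgain_{j,\slambdaset_{\neg\max}}}(v_{j+1}-v_j)$ and the gain recursion $\sgain_{j,\slambdaset_{\neg\max}}=\sgain_{j-1,\slambdaset''}-\max(\slambdaset_{\neg\max})\,\sgain_{j,\slambdaset''}$ (with $\slambdaset''$ the set $\slambdaset$ with its two largest roots deleted), one arrives at $\sposn(t)=\sum_{k=1}^{\sorder}\Gamma_k(t)\,v_k$ for explicit scalar weight functions $\Gamma_k$, the same for every coordinate of $\sposn$. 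The theorem then reduces to the inequalities $1\ge\Gamma_1(t)\ge\Gamma_2(t)\ge\cdots\ge\Gamma_{\sorder}(t)\ge0$ for all $t\ge0$, since these make $\sposn(t)=(1-\Gamma_1)\cdot 0+\sum_{k=1}^{\sorder}(\Gamma_k-\Gamma_{k+1})\,v_k$ (with $\Gamma_{\sorder+1}:=0$) a genuine convex combination. These inequalities follow from positivity structure: each difference $\Gamma_k-\Gamma_{k+1}$ is a positive multiple of the iterated convolution $e^{\slambda_1\cdot}*\cdots*e^{\slambda_k\cdot}$ of mode functions — equivalently the divided difference of $\lambda\mapsto e^{\lambda t}$ at $\slambda_1,\ldots,\slambda_k$, which is nonnegative by the Hermite--Genocchi formula — giving $\Gamma_k\ge\Gamma_{k+1}\ge0$; and $\Gamma_1(0)=1$, while the cascade identities $\tfrac{\diff}{\diff t}(e^{\slambda_1\cdot}*\cdots*e^{\slambda_k\cdot})=\slambda_k\,e^{\slambda_1\cdot}*\cdots*e^{\slambda_k\cdot}+e^{\slambda_1\cdot}*\cdots*e^{\slambda_{k-1}\cdot}$ make the telescoping sum for $\dot\Gamma_1$ collapse to $\mu\,(\prod_{\ell=1}^{\sorder-1}(-\slambda_\ell))\,(e^{\slambda_1\cdot}*\cdots*e^{\slambda_{\sorder}\cdot})(t)\le 0$, so $\Gamma_1$ is nonincreasing and $\Gamma_1\le 1$.

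The step I expect to be the \emph{main obstacle} is the algebraic bookkeeping above: after the substitution $y\mapsto\sposn$, identifying the weights $\Gamma_k$ and verifying that the integral-transformed reduced simplex lands, vertex by vertex, inside $\vsplx_{\slambdaset}(\sstate_0)$, where the nested gain recursion and the telescoping vertex structure must interlock exactly. Equivalently, and perhaps more transparently, one can bypass the one-level induction by expressing $\sposn(t)$ directly in the \emph{Vandermonde basis functions} (defined through the first-order cascade $\xi_{\sorder}:=\sposn$, $\xi_{k-1}:=(\tfrac{\diff}{\diff t}-\slambda_k)\xi_k$, whose solution from $\xi_0=0$ is a nested family of convolutions of the $e^{\slambda_k t}$) and reading the simplex vertices off as partial sums of the expansion coefficients; this leaves exactly the same nonnegativity and ordering claims for the $\Gamma_k$, settled by the Hermite--Genocchi nonnegativity of divided differences of $e^{\lambda t}$ together with the cascade ODE identities, and repeated roots are again handled by continuity.
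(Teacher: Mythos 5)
Your high-level plan aligns with the paper's: reduce to distinct roots by continuity, express $\sposn(t)$ as a weighted sum over the simplex vertices $v_0=0,v_1,\ldots,v_{\sorder}$, and show that the barycentric weights $\Gamma_k(t)-\Gamma_{k+1}(t)$ are nonnegative and sum to at most one. The gap is in how you certify that nonnegativity. You assert that each difference $\Gamma_k-\Gamma_{k+1}$ is a positive multiple of the $k$-fold convolution $e^{\slambda_1\cdot}*\cdots*e^{\slambda_k\cdot}$ (equivalently, the divided difference of $\lambda\mapsto e^{\lambda t}$ at $\slambda_1,\ldots,\slambda_k$) and invoke Hermite--Genocchi for its sign. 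That identification is false once $\sorder\geq 3$. Take $\sorder=3$ and $\slambdaset=(-3,-2,-1)$, so $\slambdaset_{\neg\max}=(-3,-2)$, $\sgain_{0,\slambdaset_{\neg\max}}=6$, $\sgain_{1,\slambdaset_{\neg\max}}=5$. One computes $\vfunc_{0,\slambdaset}(t)=e^{-3t}-3e^{-2t}+3e^{-t}$ and $\vfunc_{1,\slambdaset}(t)=\tfrac{3}{2}e^{-3t}-4e^{-2t}+\tfrac{5}{2}e^{-t}$, so the weight on $v_1$ is
\[
\Gamma_1(t)-\Gamma_2(t)\;=\;\vfunc_{0,\slambdaset}(t)-\tfrac{6}{5}\,\vfunc_{1,\slambdaset}(t)\;=\;-\tfrac{4}{5}e^{-3t}+\tfrac{9}{5}e^{-2t},
\]
which is nonnegative for $t\geq 0$ but is \emph{not} a scalar multiple of $e^{\slambda_1 t}=e^{-3t}$ (its $e^{-3t}$ coefficient is negative). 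Only the top weight $\Gamma_{\sorder}=\sgain_{0,\slambdaset_{\neg\max}}\vfunc_{\sorder-1,\slambdaset}$ is a single iterated convolution; the lower weights are genuine mixtures of several convolutions, and a single divided-difference positivity statement does not give their sign.

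This is exactly where the paper's proof does the real work. The relative ordering $\sgain_{k,\slambdaset_{\neg\max}}\vfunc_{k-1,\slambdaset}(t)\geq\sgain_{k-1,\slambdaset_{\neg\max}}\vfunc_{k,\slambdaset}(t)$ of Proposition~\ref{prop.RelativeVandemondeBasisBound} is established by a nested induction over the root set, using the recursive Vandermonde dynamics of Proposition~\ref{prop.VandermondeDynamicsRecursion}, the companion-coefficient recursion of Lemma~\ref{lem.CompanionCoefficientRecursion}, and the comparison lemma on the ODE for the difference $\sgain_{k,\slambdaset_{\neg i}}\vfunc_{k-1,\slambdaset}-\sgain_{k-1,\slambdaset_{\neg i}}\vfunc_{k,\slambdaset}$; this, together with Propositions~\ref{prop.VandermondeBasisNonnegativity} and~\ref{prop.ZerothVandermondeBasisFunction}, feeds the Abel-summation step of Proposition~\ref{prop.SimplicialTrajectoryBound}. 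Your peeling-off-the-slowest-mode induction and your cascade formulation both lead naturally to the recursion of Proposition~\ref{prop.VandermondeDynamicsRecursion}, which is a sound starting point; but after that substitution the pairwise comparison of the weight functions still has to be carried out, and that comparison --- not a one-line divided-difference identity --- is the crux. As written, your proposal is missing that argument, so it does not close.
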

\begin{proof}
See \refapp{app.VandermondeSimplex}.
\end{proof}

The rest of the paper is organized as follows. 
\refsec{sec.CompanionSystems} provides essential background on linear companion systems.
\refsec{sec.CompanionTrajectoryVandermondeBasis} presents Vandermonde basis functions and their important properties for understanding companion motion trajectory.
\refsec{sec.ConvexCompanionTrajectoryBounds} describes how to construct convex trajectory bounds for companion systems using Vandermonde and exponential basis functions as well as the classical Lyapunov theory.
\refsec{sec.NumericalSimulations} demonstrates the effectiveness of Vandermonde simplexes for capturing companion motion compared to exponential simplexes and Lyapunov ellipsoids in numerical simulations.
\refsec{sec.Conclusion} concludes with a summary of our contributions and future research directions.

\section{Linear Companion Systems}
\label{sec.CompanionSystems}

This section provides a brief background on linear companion systems and their trajectory representation using exponential basis functions and Vandermonde matrices. 

\subsection{Companion Dynamics}
\label{sec.CompanionDynamics}

\begin{definition} \label{def.CompanionSystem}
(\emph{Linear Companion Systems})
A linear \emph{companion system} is an $\sorder^{\text{th}}$-order fully-actuated dynamical system, with variable (e.g., position) $\sposn \in \R^{\sdim}$,
 that  evolves in the $\sdim$-dimensional Euclidean space $\R^{\sdim}$ under proportional-and-higher-order-derivative  (PhD) control as 
\begin{align} \label{eq.CompanionDynamics}
\sposn^{(\sorder)} & = - \sum_{k = 0}^{\sorder-1}  \sgain_{k,\slambdaset} \sposn^{(k)}     
\end{align} 
where $\sposn^{(k)} = \frac{\mathrm{d}^{k} }{\mathrm{d}t^k} \sposn$  and $\sgain_{0,\slambdaset}, \ldots, \sgain_{\sorder-1,\slambdaset} \in \R$ are fixed real control gains (a.k.a. \emph{companion coefficients}) that result in the complex characteristic polynomial roots $\slambdaset = (\slambda_1, \ldots, \slambda_{\sorder}) \in \C^{\sorder}$ for the companion dynamics in \refeq{eq.CompanionDynamics}, i.e.,
\begin{align}\label{eq.CharacteristicPolynomial}
\prod_{k=1}^{\sorder} (\slambda - \slambda_i) =  \slambda^\sorder + \sum_{k = 0}^{\sorder-1}  \sgain_{k,\slambdaset} \slambda^k.
\end{align}
\end{definition}

\bigskip

Note that  the companion coefficient $\sgain_{k,\slambdaset}$ can be explicitly determined using the characteristic polynomial roots $\slambdaset$  as%
\footnote{Multiplication over the empty set is assumed to be one. Hence, the control gain $\sgain_{k, \slambdaset}$ in \refeq{eq.ControlGain} is also well defined for $k = \sorder$ and yields $\sgain_{n, \slambdaset} =1$.}\,%
\footnote{Determining control gains $\sgain_{0, \slambdaset}, \ldots, \sgain_{\sorder-1, \slambdaset}$ from characteristic polynomial roots $\slambda_1, \ldots, \slambda_\sorder$ is numerically more stable than vice versa. In MATLAB, one can easily perform these conversions as
\begin{align*}
[1, \sgain_{\sorder-1, \slambdaset}, \ldots, \sgain_{0, \slambdaset}] &= \mathtt{poly}([\slambda_1, \ldots, \slambda_\sorder]), 
\\
[\slambda_1, \ldots, \slambda_\sorder] &= \mathtt{roots}([1, \sgain_{\sorder-1, \slambdaset}, \ldots, \sgain_{0, \slambdaset}]).
\end{align*}  
}
\begin{align}\label{eq.ControlGain}
\sgain_{k,\slambdaset} = (-1)^{ \sorder - k}\sum_{\substack{I \subseteq \{1, \ldots, \sorder\} \\ |I| = \sorder - k} } \prod_{i \in I} \slambda_i, 
\end{align}    
where $k \in \clist{0, \ldots, \sorder}$ and $|.|$ denotes the number of elements of a set.
It is also convenient to have  $\sgain_{\sorder, \slambdaset} = 1$ and  $\sgain_{k, \slambdaset} = 0$ for any $k \not \in [0, \sorder]$  for the recursive use of companion coefficients (see \reflem{lem.CompanionCoefficientRecursion}).

Using state-space representation, one can alternatively represent the $\sorder^{\text{th}}$-order companion system in \refeq{eq.CompanionDynamics} as a first-order higher-dimensional dynamical system as \cite{chen_LinearSystemTheory1998}
\begin{align} \label{eq.StateSpaceCompanionDynamics}
\dot{\sstate} = \cmat_{\slambdaset} \sstate
\end{align}
where $\sstate \in \R^{\sorder \times \sdim}$ and $\cmat_{\slambdaset} \in \R^{\sorder \times \sorder}$, respectively, denote the system state and the companion matrix that are defined as
\begin{align}\label{eq.CompanionStateMatrix}
\sstate = \!
\scalebox{0.85}{$\begin{bmatrix}
\sposn^{(0)} \\
\sposn^{(1)} \\
\sposn^{(2)}\\
\vdots \\
\sposn^{(\sorder-1)}
\end{bmatrix}$} , 
\,\, 
\cmat_{\slambdaset} = \! 
\scalebox{0.85}{$
\left[
\begin{array}{@{}c@{\hspace{2mm}}c@{\hspace{2mm}}c@{\hspace{2mm}}c@{\hspace{2mm}}c@{}} 
0 & 1 & 0 &...& 0 \\
0 & 0 & 1 &...& 0 \\
\vdots&\vdots& \vdots &\ddots&\vdots \\
0 & 0 &  0 & \dots  & 1 \\
-\sgain_{0,\slambdaset}&-\sgain_{1,\slambdaset} & -\sgain_{2,\slambdaset} & \dots&-\sgain_{\sorder -1, \slambdaset} 
\end{array}
\right]
$}.\!\!\!
\end{align}
Note that  the characteristic polynomial of $\cmat_{\slambdaset}$ is given by \refeq{eq.CharacteristicPolynomial} and so the eigenvalues  of $\cmat_{\slambdaset}$ are $\slambdaset = (\slambda_1, \ldots, \slambda_\sorder)$.
Also, by abusing the notation, the companion state $\sstate$ is sometimes represented as a $\sorder\sdim \times 1$ vector in $\R^{\sorder\sdim \times 1}$ instead of a $\sorder \times \sdim$ matrix in $\R^{\sorder \times \sdim}$, which shall be clear from the content.

\subsection{Companion Trajectory}
\label{sec.CompanionTrajectory}

In linear system theory \cite{chen_LinearSystemTheory1998}, it is well known that the solution of the state-space companion dynamics in \refeq{eq.StateSpaceCompanionDynamics}, starting at $t = 0$ from any initial state $\sstate_0=(\sposn_{0}^{(0)}, \ldots, \sposn_{0}^{(\sorder-1)}) \in \R^{\sorder \times \sdim}$, can be determined using matrix exponent as 
\begin{align}
\sstate(t) = e^{\cmat_{\slambdaset} t} \sstate_0, \quad   \quad \forall t \geq 0.
\end{align}
If the companion matrix $\cmat_{\slambdaset}$ has distinct eigenvalues (i.e., $\slambda_i \neq \slambda_j$ for all $i \neq j$), then  it is explicitly diagonalizable as
\begin{align}\label{eq.CompanionEigenDecomposition}
\cmat_{\slambdaset} = \vmat_{\slambdaset} \diag(\slambdaset) \vmat_{\slambdaset}^{-1} 
\end{align}
where $\diag(\slambdaset)$ denotes the diagonal matrix with vector $\slambdaset$ on the diagonal, and $\vmat_{\slambdaset}$ is the Vandermonde matrix defined as%
\begin{align} \label{eq.VandermondeMatrix}
\vmat_{\slambdaset} &:= \begin{bmatrix} 
1 & 1 & \dots & 1 \\
\slambda_1 & \slambda_2 & \dots & \slambda_\sorder \\ 
\slambda_1^{2} & \slambda_2^{2}& \dots & \slambda_\sorder^{2} \\
\vdots & \vdots & \ddots & \vdots \\ 
\lambda_1^{\sorder-1} & \lambda_2^{\sorder-1} & \dots & \lambda_\sorder^{\sorder-1} 
\end{bmatrix}
\end{align}
whose inverse is explicitly given by\cite{neagoe_SPL1996}%
\footnote{%
The elements of $\vmat_{\slambdaset}$ and  $\vmat_{\slambdaset}^{-1}$ on the $i^{\text{th}}$ row and $j^{\text{th}}$ column, where $i,j \in \{1, \ldots, \sorder\}$,  are given by
\begin{align*}
\blist{\vmat_{\slambdaset}\big.}_{ij} &= \slambda_{j}^{i-1}, 
\quad \text{and} \quad  
\blist{\big.\vmat_{\slambdaset}^{-1}}_{ij} = (-1)^{\sorder-1}\frac{\sgain_{j-1, \slambdaset_{\neg i}}}{\prod\limits_{k\neq i}(\slambda_k - \slambda_i)}. 
\end{align*}
In  \cite{neagoe_SPL1996}, the inverse of Vandermonde matrices is expressed slightly differently as \mbox{$\blist{\vmat_{\slambdaset}^{-1}}_{ij} \!\! \!=\! (-1)^{j-1}\!\frac{\sigma_{\sorder - j, \slambdaset_{\neg i}} }{\!\prod\limits_{k\neq i}\!(\slambda_k - \slambda_i)}$} in terms of
\begin{align*}
\sigma_{k,\slambdaset} := \sum_{\substack{I \subseteq \{1, \ldots, \sorder\}\\ |I| = k}} \prod_{i \in I} \slambda_i \nonumber
\end{align*}
which is related to the companion coefficients $\sgain_{k, \slambdaset}$ in \refeq{eq.ControlGain} as
\begin{align*}
\sigma_{k, \slambdaset} =  (-1)^{k} \sgain_{\sorder - k, \slambdaset}, \quad \text{and} \quad 
\sgain_{k, \slambdaset} =  (-1)^{\sorder - k} \sigma_{\sorder - k, \slambdaset}.
\end{align*}
} 
\begin{align} \label{eq.VandermondeInverse}
\vmat_{\slambdaset}^{-1} &= 
\scalebox{0.9}{$
(-1)^{\sorder-1} \!
\blist{
 \begin{array}{@{}c@{\hspace{1mm}}c@{\hspace{1mm}}c@{\hspace{1mm}}c@{}}
\frac{\sgain_{0, \slambdaset_{\neg 1}}}{\prod\limits_{k\neq 1}(\slambda_k - \slambda_1)} & \frac{\sgain_{1, \slambdaset_{\neg 1}}}{\prod\limits_{k\neq 1}(\slambda_k - \slambda_1)} & \ldots & \frac{\sgain_{\sorder-1, \slambdaset_{\neg 1}}}{\prod\limits_{k\neq 1}(\slambda_k - \slambda_1)}  
\\
\frac{\sgain_{0, \slambdaset_{\neg 2}}}{\prod\limits_{k\neq 2}(\slambda_k - \slambda_2)} & \frac{\sgain_{1, \slambdaset_{\neg 2}}}{\prod\limits_{k\neq 2}(\slambda_k - \slambda_2)} & \ldots & \frac{\sgain_{\sorder-1, \slambdaset_{\neg 2}}}{\prod\limits_{k\neq 2}(\slambda_k - \slambda_2)}\\
\vdots & \vdots  & \ddots & \vdots \\
\frac{\sgain_{0, \slambdaset_{\neg \sorder}}}{\prod\limits_{k\neq \sorder}(\slambda_k - \slambda_\sorder)} & \frac{\sgain_{1, \slambdaset_{\neg \sorder}}}{\prod\limits_{k\neq \sorder}(\slambda_k - \slambda_\sorder)} & \ldots & \frac{\sgain_{\sorder-1, \slambdaset_{\neg \sorder}}}{\prod\limits_{k\neq \sorder}(\slambda_k - \slambda_\sorder)}   
\end{array}
}
$}
\!,
\!\!\!
\end{align}
where $\slambdaset_{\neg i} := \plist{\slambda_1, \ldots, \slambda_{i-1}, \slambda_{i+1}, \ldots, \slambda_\sorder}$ is obtained by removing  $i^{\text{th}}$ element from $\slambdaset$, and the companion coefficients $\sgain_{k, \slambdaset}$ are defined as in \refeq{eq.ControlGain}.  
Hence, for any pairwise distinct characteristic polynomial roots $\slambdaset=\plist{\slambda_1, \ldots, \slambda_\sorder} \in \C^{\sorder}$, the  solution of the  state-space companion dynamics in \refeq{eq.StateSpaceCompanionDynamics} can be obtained  using the eigendecomposition of $\cmat_{\slambdaset}$ in \refeq{eq.CompanionEigenDecomposition} as 
\begin{align}
\sstate(t) = e^{\cmat t} \sstate_0 = \vmat_{\slambdaset} \diag(e^{\slambdaset t}) \vmat_{\slambdaset}^{-1} \sstate_0 
\end{align} 
where $e^{\slambdaset t} = [e^{\slambda_1 t}, \ldots, e^{\slambda_{\sorder} t}]$. 
Therefore, the system trajectory $\sposn(t)$ that solves the $\sorder^{\text{th}}$-order companion dynamics in \refeq{eq.CompanionDynamics} is given by
\begin{subequations} \label{eq.SystemTrajectoryExponentialBasis}
\begin{align}
\sposn(t) & =  \begin{bmatrix}
1 & 0 &\dots & 0
\end{bmatrix} \sstate(t) 
\\
&=   \begin{bmatrix}
1 & 0 &\dots & 0
\end{bmatrix} \vmat_{\slambdaset} \diag(e^{\slambdaset t}) \vmat_{\slambdaset}^{-1} \sstate_0 
\\
& = \begin{bmatrix}
1 & 1 &\dots & 1
\end{bmatrix} \diag(e^{\slambdaset t}) \vmat_{\slambdaset}^{-1} \sstate_0, 
\\
& = e^{\slambdaset t}  \vmat_{\slambdaset}^{-1} \sstate_0,
\end{align}
\end{subequations}
which is often  written using exponential basis functions as
\begin{align}\label{eq.TrajectoryExponentialBasis}
\sposn(t) &= \sum_{k=1}^{\sorder} \scoef_{k, \slambdaset}(\sstate_0) e^{\slambda_k t} 
\end{align}
using exponential trajectory coefficients $\scoef_{k, \slambdaset}(\sstate_0) \in \R^{\sdim}$ that corresponds to the $k^{\text{th}}$ row of $ \vmat_{\slambdaset}^{-1} \sstate_0$ and can be obtained using \refeq{eq.VandermondeInverse} as    
\begin{align}\label{eq.ExponentialTrajectoryCoefficients}
\scoef_{k, \slambdaset}(\sstate_0) = \frac{(-1)^{\sorder -1}}{\prod_{i \neq k}(\slambda_i - \slambda_k)}\sum_{i=1}^{\sorder-1} \sgain_{i,\slambdaset_{\neg k}} \sposn_0^{(k)}.
\end{align}
Even though it has a simple explicit form, the companion trajectory $\sposn(t)$ expressed in exponential basis in \refeq{eq.TrajectoryExponentialBasis} becomes numerical instability if the difference between any pair of eigenvalues in $\slambdaset$ is very small, which is mainly due to the singularity of exponential trajectory coefficients $\scoef_{k, \slambdaset}(\sstate_0) \in \R^{\sdim}$ in \refeq{eq.ExponentialTrajectoryCoefficients}.
In the following section, as an alternative to exponential basis, we introduce a new family of basis functions, called \emph{Vandermonde basis}, for expressing the companion motion trajectory that transfers such numerical stability issues from trajectory coefficients to basis functions, which becomes useful for constructing companion trajectory bounds in \refsec{sec.ConvexCompanionTrajectoryBounds}.

\section{Linear Companion System Trajectory \\ via Vandermonde Basis}
\label{sec.CompanionTrajectoryVandermondeBasis}

In this section, we introduce a new family of Vandermonde basis functions for expressing companion motion trajectories and present their important (nonnegativity, relative ordering, and boundedness) properties that are essential for understanding and bounding companion motion.  

\subsection{Vandermonde Basis Functions}
\label{sec.VandermondeBasisFunctions}

\begin{figure*}
\centering
\begin{tabular}{@{}ccc@{}}
\includegraphics[width=0.31\textwidth]{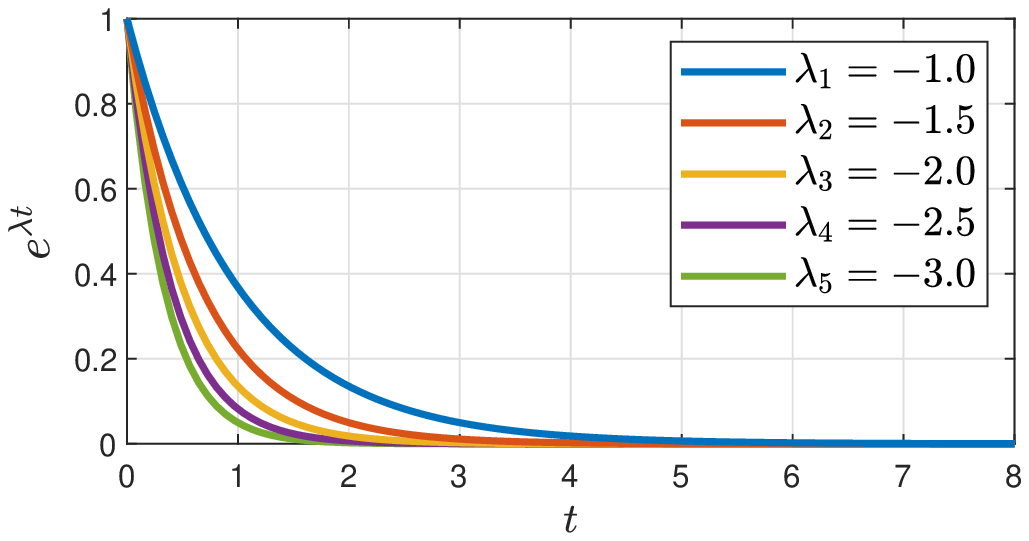} 
&
\includegraphics[width=0.31\textwidth]{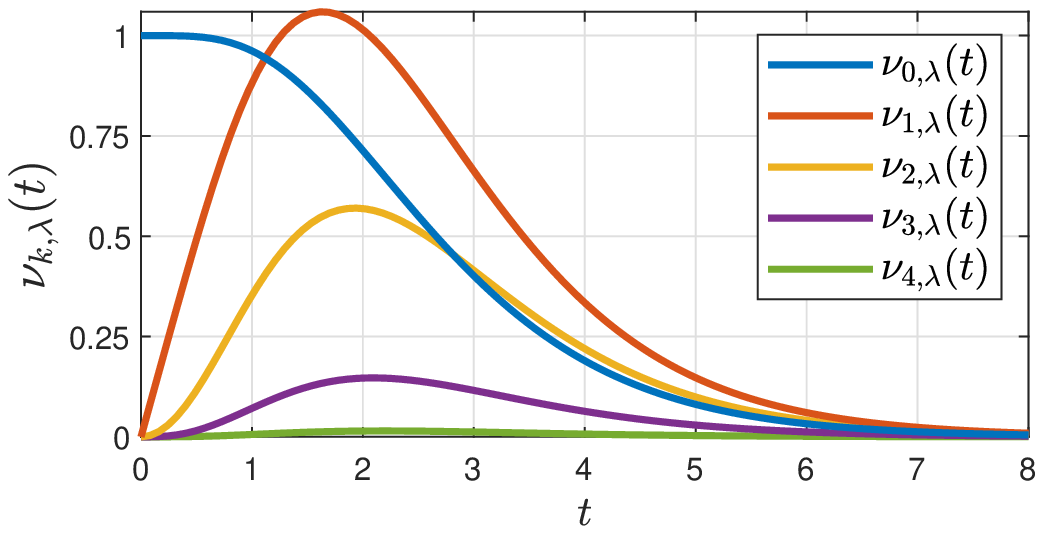} 
& 
\includegraphics[width=0.31\textwidth]{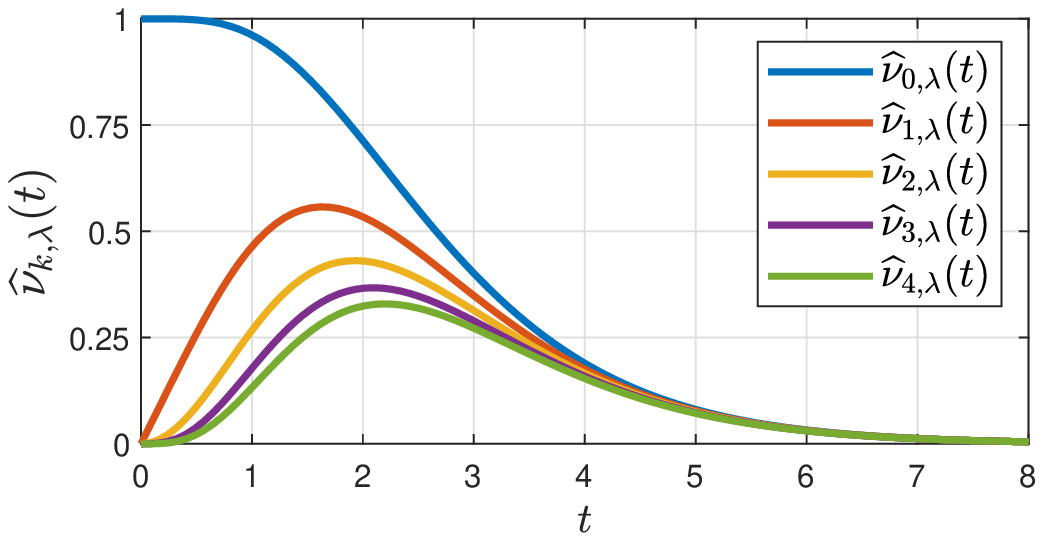} 
\end{tabular}
\vspace{-2mm}
\caption{Exponent (left) and  Vandermonde (middle) basis functions for linear companion systems, where $\slambdaset =\plist{-1.0, -1.5, -2.0, -2.5, -3.0}$. (right) Scaled Vandermonde  basis functions $\widehat{\vfunc}_{k, \slambdaset}(t):= \frac{\sgain_{0, \slambdaset_{\neg \max}}}{\sgain_{k, \slambdaset_{\neg \max}}}\vfunc_{k, \slambdaset}(t)$ demonstrate their ordering relation, i.e., $\widehat{\vfunc}_{i, \slambdaset}(t) \geq \widehat{\vfunc}_{j, \slambdaset}(t) $ for any $ 0 \leq i \leq j \leq |\slambdaset|-1$. }
\label{fig.ExponentialVandermondeBasisFunctions}
\vspace{-2mm}
\end{figure*}

\begin{definition} \label{def.VandermondeBasisFunctions}
(\emph{Vandermonde Basis})
For any complex vector $\slambdaset=(\slambda_1, \ldots, \slambda_\sorder) \in \C^\sorder$ with distinct elements, (i.e., $\slambda_k \neq \slambda_l$ for all $k \neq l$), the \emph{Vandermonde basis functions}, denoted by
\begin{align}\label{eq.VandermondeBasisVector}
\vvect_{\slambdaset}(t) = [\vfunc_{0, \slambdaset}(t), \ldots, \vfunc_{\sorder-1, \slambdaset}(t)] \in \C^{\sorder}
\end{align}
are defined as a linear transformation of exponential basis functions, illustrated in \reffig{fig.ExponentialVandermondeBasisFunctions}, 
\begin{align}\label{eq.ExponentialBasisVector}
e^{\slambdaset t} = [e^{\slambda_1 t}, \ldots, e^{\slambda_\sorder t}] \in \C^{\sorder}
\end{align}
via the inverse Vandermonde matrix $\vmat_{\slambdaset}^{-1}$ as 
\begin{align} \label{eq.Exponential2Vandermonde}
\vvect_{\slambdaset}(t) = e^{\slambdaset t} \vmat_{\slambdaset}^{-1}.  
\end{align}   
Hence, the change of basis from Vandermonde functions to exponentials is given by the Vandermonde matrix $\vmat_{\slambdaset}$ as
\begin{align}\label{eq.Vandermonde2Exponential}
e^{\slambdaset t}  = \vvect_{\slambdaset}(t) \vmat_{\slambdaset}.
\end{align}
\end{definition}

It follows from the explicit form of the Vandermonde matrix in \refeq{eq.VandermondeMatrix} and its inverse in \refeq{eq.VandermondeInverse} that Vandermonde basis functions can be expressed as a weighted combination of exponential basis functions  (and vice versa) as
\begin{align}
\vfunc_{k, \slambdaset}(t) &= (-1)^{\sorder -1} \sum_{i =1}^{\sorder} \frac{\sgain_{k, \slambdaset_{\neg i}}}{\prod_{j \neq i}(\slambda_j - \slambda_i)} e^{\slambda_i t} \label{eq.VandermondeSumOfExponential}
\\
e^{\slambda_k t}  &= \sum_{i = 0}^{\sorder -1} \slambda_k^i \vfunc_{i, \slambdaset}(t) \label{eq.ExponentialSumOfVandermonde}
\end{align}
where $\sgain_{k, \slambdaset_{\neg i}}$ is the companion coefficient defined in \refeq{eq.ControlGain} associated with $\slambdaset_{\neg i} = \plist{\slambda_1, \ldots, \slambda_{i-1}, \slambda_{i+1}, \ldots, \slambda_{\sorder}}$.
We also find it useful to define $\vfunc_{k,\slambdaset}(t) = 0$ for $k < 0$ and $k \geq \sorder$ for the recursive use of Vandermonde basis functions later.

Accordingly, one can expresses the trajectory of $\sorder^{th}$-order linear companion systems  in \refeq{eq.CompanionDynamics} using Vandermonde basis as: 

\begin{proposition}\label{prop.CompanionTrajectoryVandermondeBasis}
\emph{(Companion Trajectory in Vandermonde Basis)}
For an $\sorder^{th}$-order linear companion system in \refeq{eq.CompanionDynamics} with distinct characteristic polynomial roots $\slambdaset=(\slambda_1, \ldots, \slambda_{\sorder}) \in \C^{\sorder}$ with $\slambda_k \leq \slambda_l$ for all $k \neq l$, the system trajectory  $\sposn(t)$, starting  at $t=0$ from any initial state $\sstate_0 = (\sposn_0^{(0)}, \ldots,\sposn_0^{(\sorder-1)}) \in \R^{\sorder \times \sdim}$, is given in terms of Vandermonde basis functions $\vfunc_{\slambdaset}(t)$  by 
\begin{align} \label{eq.CompanionTrajectoryVandermondeBasis}
\sposn(t)= \vvect_{\slambdaset}(t) \sstate_0 = \sum_{k=0}^{\sorder-1} \vfunc_{k, \slambdaset}(t) \sposn_0^{(k)}  \quad \forall t \geq 0.
\end{align}
\end{proposition}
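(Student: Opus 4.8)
The plan is to obtain the claim as an immediate rewriting of the exponential‑basis trajectory formula already established in \refsec{sec.CompanionTrajectory}. First I would recall that, because the characteristic roots $\slambdaset$ are pairwise distinct, the companion matrix $\cmat_{\slambdaset}$ admits the eigendecomposition \refeq{eq.CompanionEigenDecomposition}, so $\sstate(t) = \vmat_{\slambdaset}\diag(e^{\slambdaset t})\vmat_{\slambdaset}^{-1}\sstate_0$; projecting onto the first block coordinate via $\sposn(t) = [1, 0, \ldots, 0]\,\sstate(t)$ and using that the first row of the Vandermonde matrix $\vmat_{\slambdaset}$ in \refeq{eq.VandermondeMatrix} is the all‑ones vector yields $\sposn(t) = e^{\slambdaset t}\vmat_{\slambdaset}^{-1}\sstate_0$, which is exactly \refeq{eq.SystemTrajectoryExponentialBasis}.

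Next I would substitute the definition of the Vandermonde basis row vector, $\vvect_{\slambdaset}(t) = e^{\slambdaset t}\vmat_{\slambdaset}^{-1}$ from \refeq{eq.Exponential2Vandermonde}, to conclude $\sposn(t) = \vvect_{\slambdaset}(t)\sstate_0$ for all $t \geq 0$. Writing out this product with $\vvect_{\slambdaset}(t) = [\vfunc_{0,\slambdaset}(t), \ldots, \vfunc_{\sorder-1,\slambdaset}(t)]$ and with $\sstate_0 = (\sposn_0^{(0)}, \ldots, \sposn_0^{(\sorder-1)})$ stacked row‑wise then gives $\sposn(t) = \sum_{k=0}^{\sorder-1}\vfunc_{k,\slambdaset}(t)\,\sposn_0^{(k)}$, as asserted.

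There is essentially no hard step here: the proposition is a change‑of‑basis restatement of \refeq{eq.SystemTrajectoryExponentialBasis} through \refeq{eq.Exponential2Vandermonde}, and only distinctness of the roots (not their sign or any ordering) is used. The only points requiring care are bookkeeping: matching the $0$‑based indexing of the Vandermonde basis functions $\vfunc_{k,\slambdaset}$ with the $0$‑based indexing of the derivative stack $\sposn_0^{(k)}$ in $\sstate_0$, and being consistent about whether $\sstate_0$ is treated as an $\sorder \times \sdim$ matrix or an $\sorder\sdim \times 1$ vector so that $\vvect_{\slambdaset}(t)\sstate_0$ is the $\sdim$‑dimensional trajectory value. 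As an optional self‑contained alternative, one could verify the formula directly: expanding each $\vfunc_{k,\slambdaset}$ in exponentials via \refeq{eq.VandermondeSumOfExponential} shows $\sum_{k}\vfunc_{k,\slambdaset}(t)\sposn_0^{(k)}$ is a linear combination of the modes $e^{\slambda_i t}$ and hence solves \refeq{eq.CompanionDynamics}, while evaluating it and its first $\sorder-1$ derivatives at $t=0$ (using \refeq{eq.ExponentialSumOfVandermonde}, equivalently that $\vvect_{\slambdaset}^{(j)}(0)$ is the $j^{\text{th}}$ standard unit row vector) recovers the prescribed $\sstate_0$, so uniqueness of solutions of \refeq{eq.StateSpaceCompanionDynamics} identifies it with $\sposn(t)$.
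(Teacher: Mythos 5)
Your proposal is correct and follows essentially the same route as the paper: combine the exponential-basis trajectory \refeq{eq.SystemTrajectoryExponentialBasis} with the change of basis \refeq{eq.Exponential2Vandermonde} to get $\sposn(t) = e^{\slambdaset t}\vmat_{\slambdaset}^{-1}\sstate_0 = \vvect_{\slambdaset}(t)\sstate_0$. The extra direct-verification sketch at the end is a nice sanity check but is not needed.
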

\begin{proof}
The result is a direct consequence of the classical trajectory solution expressed in exponential basis in \refeq{eq.SystemTrajectoryExponentialBasis} and the change of basis from exponentials to Vandermonde functions in  \refeq{eq.Exponential2Vandermonde}, i.e., $\sposn(t) = e^{\slambdaset t}  \vmat_{\slambdaset}^{-1}\sstate_{0} =  \vect{v}_{\slambdaset} (t) \sstate_{0}$.
\end{proof}

In practice, one often prefers to avoid underdamped second-order companion dynamics to prevent oscillatory system motion. 
Hence, using the companion motion trajectory in \refeq{eq.CompanionTrajectoryVandermondeBasis} expressed in Vandermonde basis, the notion of non-underdamped second-order companion systems can be intuitively extended to higher-order companion systems as non-overshooting.

\begin{definition}\label{def.NonovershootingCompanionSystem}
(\emph{Nonovershooting Companion Systems}) 
An $\sorder^{th}$-order linear companion system with distinct characteristic polynomial roots $\slambdaset=(\slambda_1, \ldots, \slambda_{\sorder}) \in \C^{\sorder}$ with $\slambda_k \neq \slambda_l$ for all $k \neq l$ is said to be \emph{nonovershooting} if the associated Vandermonde basis functions are nonnegative, i.e.,
\begin{align}
\vfunc_{k,\slambdaset}(t) \geq 0,  \quad \forall \, t \geq 0, \,  k = 0, \ldots, \sorder-1. 
\end{align}
\end{definition}

As one might expect from the second-order system case, a companion system is nonovershoting if its characteristic polynomial roots are real and negative (see \refprop{prop.VandermondeBasisNonnegativity}).

\subsection{Vandermonde Basis Properties}
\label{sec.VandermondeBasisProperties}

As an alternative to \refeq{eq.VandermondeSumOfExponential}, Vandermonde basis functions can be determined using \refeq{eq.Vandermonde2Exponential} and Cramer's rule so that it is easy to observe $\vfunc_{0,\slambdaset}(0) = 1$ and $\vfunc_{k, \slambdaset}(0) = 0$ for all $k \neq 0$. 

\begin{proposition}\label{prop.VandermondeBasisCramerRule}
\emph{(Cramer's Rule of Vandermonde Basis)}
For any $\slambdaset=(\slambda_1, \ldots, \slambda_\sorder) \in \C^\sorder$ with  $\slambda_k \neq \slambda_l$ for all $k \neq l$, each Vandermonde basis function is given by
\begin{align}\label{eq.VandermondeBasisCramerRule}
\quad \vfunc_{k, \slambdaset}(t) = \frac{\det \vmat_{k, \slambdaset}(t)}{\det \vmat_{\slambdaset}} \quad  \quad \forall \, k = 0, \ldots, (\sorder-1),
\end{align}
where  the row-exponential Vandermonde matrix $\vmat_{k, \slambdaset}(t)$ is obtained by replacing the $(k+1)^{th}$ row of the standard Vandermonde matrix $\vmat_{\slambdaset}$ with the exponential basis $e^{\slambdaset t}$ as~%
\footnote{The $i^{\text{th}}$-row and $j^{\text{th}}$-column element of $\vmat_{k, \slambdaset}(t)$ is given by 
\begin{align*}
\blist{\vmat_{k,\slambdaset}(t)\Big.}_{ij} & = 
\left \{ 
\begin{array}{cl}
e^{\slambda_j t}, & \mathrm{if} \,\,   i = k+1, \\
\lambda_j^{i-1}, & \mathrm{otherwise.}
\end{array} 
\right. 
\end{align*}
}
\begin{align}\label{eq.ExponentialVandermondeMatrix}
\vmat_{k, \slambdaset}(t) &= 
\begin{bmatrix} 
1 & 1 & \ldots & 1 \\
\slambda_1 & \slambda_2 & \ldots & \slambda_{\sorder} \\
\vdots & \vdots & \ddots & \vdots \\
\slambda_1^{k-1} & \slambda_2^{k-1} & \ldots & \slambda_{\sorder}^{k-1} \\
e^{\slambda_1 t} & e^{\slambda_2 t} & \ldots & e^{\slambda_\sorder t} \\ 
\slambda_1^{k+1} & \slambda_2^{k+1} & \ldots & \slambda_{\sorder}^{k+1} \\
\vdots & \vdots & \ddots & \vdots \\
\slambda_1^{\sorder-1} & \slambda_2^{\sorder-1} & \ldots & \slambda_{\sorder}^{\sorder-1}   
\end{bmatrix}  
\end{align}   
\end{proposition}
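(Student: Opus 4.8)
The plan is to read the change-of-basis identity $e^{\slambdaset t} = \vvect_{\slambdaset}(t)\,\vmat_{\slambdaset}$ from \refeq{eq.Vandermonde2Exponential} as a square linear system in the unknown scalars $\vfunc_{0,\slambdaset}(t), \ldots, \vfunc_{\sorder-1,\slambdaset}(t)$ and then to apply Cramer's rule. First I would transpose this relation to bring it into the standard form $\tr{\vmat_{\slambdaset}}\,\tr{\vvect_{\slambdaset}(t)} = \tr{(e^{\slambdaset t})}$, i.e.\ $A x = b$ with coefficient matrix $A = \tr{\vmat_{\slambdaset}}$, unknown $x = \tr{\vvect_{\slambdaset}(t)}$, and right-hand side $b = \tr{(e^{\slambdaset t})}$. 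Since the entries of $\slambdaset$ are pairwise distinct, the Vandermonde determinant $\det \vmat_{\slambdaset} = \prod_{i<j}(\slambda_j - \slambda_i)$ is nonzero, so $A$ is invertible — which is also why $\vmat_{\slambdaset}^{-1}$ in \refdef{def.VandermondeBasisFunctions} is well defined — and Cramer's rule applies.

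Next, by Cramer's rule the $(k+1)^{\text{th}}$ component of $x$ equals $\det A^{(k+1)} / \det A$, where $A^{(k+1)}$ is $A$ with its $(k+1)^{\text{th}}$ column replaced by $b$. The remaining step is to translate this back through the transpose: $\det A = \det \tr{\vmat_{\slambdaset}} = \det \vmat_{\slambdaset}$, and $\tr{(A^{(k+1)})}$ is precisely $\vmat_{\slambdaset}$ with its $(k+1)^{\text{th}}$ \emph{row} replaced by $e^{\slambdaset t}$, which is exactly the row-exponential Vandermonde matrix $\vmat_{k,\slambdaset}(t)$ of \refeq{eq.ExponentialVandermondeMatrix}; hence $\det A^{(k+1)} = \det \vmat_{k,\slambdaset}(t)$. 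Combining these identifications gives $\vfunc_{k,\slambdaset}(t) = \det \vmat_{k,\slambdaset}(t) / \det \vmat_{\slambdaset}$ for every $k = 0, \ldots, \sorder-1$, which is the claim. As a consistency check I would note that at $t = 0$ the row $e^{\slambdaset \cdot 0} = [1, \ldots, 1]$ coincides with the first row of $\vmat_{\slambdaset}$, so $\vmat_{0,\slambdaset}(0) = \vmat_{\slambdaset}$ while $\vmat_{k,\slambdaset}(0)$ has two identical rows for $k \neq 0$, recovering $\vfunc_{0,\slambdaset}(0) = 1$ and $\vfunc_{k,\slambdaset}(0) = 0$ as asserted in the text preceding the proposition.

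There is essentially no hard step here: once the identity \refeq{eq.Vandermonde2Exponential} is viewed as a linear system, the result is a one-line application of Cramer's rule. The only points that need care are bookkeeping ones — keeping the row-vector versus column-vector conventions of \refdef{def.VandermondeBasisFunctions} straight under transposition, and matching the $0$-based index $k$ of the basis functions with the $1$-based row/column index $k+1$ of the matrices. I would also state explicitly that the distinctness hypothesis on $\slambdaset$ is exactly what guarantees $\det \vmat_{\slambdaset} \neq 0$, so that the quotient in \refeq{eq.VandermondeBasisCramerRule} is well defined.
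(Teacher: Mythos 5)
Your proposal is correct and takes the same route as the paper: the paper's proof is a one-liner invoking Cramer's rule on the linear system \refeq{eq.Vandermonde2Exponential}, and your argument is exactly that, with the transposition bookkeeping and the nonvanishing of $\det \vmat_{\slambdaset}$ spelled out explicitly.
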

\begin{proof}
The result follows from the linear basis transformation \refeq{eq.Vandermonde2Exponential} and Cramer's rule for solving linear equations.
\end{proof}

As opposed to the exponential basis vector $e^{\slambdaset t}$ in \refeq{eq.ExponentialBasisVector}, the Vandermonde basis vector $\vvect_{\slambdaset}(t)$ in \refeq{eq.VandermondeBasisVector} does not depend on the order of elements of $\slambdaset = (\slambda_1, \ldots, \slambda_\sorder)$.

\begin{proposition}\label{prop.VandermondeBasisOrderIndependence}
\emph{(Root-Order Invariance of Vandermonde Basis)}
For any complex vector $\slambdaset \in \C^{\sdim}$  with distinct elements (i.e. $\slambda_k \neq \slambda_l$ for all $k \neq l$), the Vandermonde basis vector $\vvect_{\slambdaset}(t)$ is invariant under any permutation (i.e., rearrangement) of elements of $\slambdaset$ into $\widehat{\slambdaset}$, i.e., $\vvect_{\slambdaset}(t) = \vvect_{\widehat{\slambdaset}}(t)$.
\end{proposition}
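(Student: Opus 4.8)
The plan is to trace how the two building blocks $e^{\slambdaset t}$ and $\vmat_{\slambdaset}$ transform under a rearrangement of the roots and to observe that these transformations cancel in the product $e^{\slambdaset t}\vmat_{\slambdaset}^{-1}$ that defines $\vvect_{\slambdaset}(t)$ in \refeq{eq.Exponential2Vandermonde}. Concretely, let $\widehat{\slambdaset}$ be any permutation of $\slambdaset$, encoded by a permutation matrix $P \in \R^{\sorder \times \sorder}$ so that $\widehat{\slambdaset} = \slambdaset P$ when $\slambdaset$ is viewed as a row vector. Permuting the entries of $\slambdaset$ permutes the columns of the Vandermonde matrix \refeq{eq.VandermondeMatrix} (whose column index is the root index and whose row index is the power), so the entrywise definitions immediately give $e^{\widehat{\slambdaset} t} = e^{\slambdaset t} P$ and $\vmat_{\widehat{\slambdaset}} = \vmat_{\slambdaset} P$, hence $\vmat_{\widehat{\slambdaset}}^{-1} = \tr{P}\,\vmat_{\slambdaset}^{-1}$ using $P^{-1} = \tr{P}$ for a permutation matrix.

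First I would then substitute these identities into the definition: $\vvect_{\widehat{\slambdaset}}(t) = e^{\widehat{\slambdaset} t}\,\vmat_{\widehat{\slambdaset}}^{-1} = e^{\slambdaset t}\,P\,\tr{P}\,\vmat_{\slambdaset}^{-1} = e^{\slambdaset t}\,\vmat_{\slambdaset}^{-1} = \vvect_{\slambdaset}(t)$, since $P\,\tr{P} = I$, which is exactly the assertion. An equivalent, inverse-free derivation uses the Cramer's rule representation of \refprop{prop.VandermondeBasisCramerRule}: the same column permutation $P$ acts on both $\vmat_{\slambdaset}$ and the row-exponential matrix $\vmat_{k,\slambdaset}(t)$ in \refeq{eq.ExponentialVandermondeMatrix}, so both $\det\vmat_{\slambdaset}$ and $\det\vmat_{k,\slambdaset}(t)$ pick up the common factor $\det P = \pm 1$, which cancels in the ratio $\vfunc_{k,\slambdaset}(t) = \det\vmat_{k,\slambdaset}(t)/\det\vmat_{\slambdaset}$, giving $\vfunc_{k,\widehat{\slambdaset}}(t) = \vfunc_{k,\slambdaset}(t)$ for every $k = 0,\ldots,\sorder-1$.

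There is no genuinely hard step; the only thing to watch is the bookkeeping of which index the permutation acts on --- namely that the single permutation matrix $P$ that reorders the columns of $\vmat_{\slambdaset}$ (and of $\vmat_{k,\slambdaset}(t)$) is the same one that reorders the entries of the exponential row vector $e^{\slambdaset t}$, so that the two copies of $P$, respectively the two sign changes, line up and cancel. It is also worth emphasizing in the writeup that this root-order invariance of $\vvect_{\slambdaset}(t)$ is fully consistent with the fact that neither $e^{\slambdaset t}$ nor $\vmat_{\slambdaset}$ is individually permutation invariant; it is precisely their product $e^{\slambdaset t}\vmat_{\slambdaset}^{-1}$ that is.
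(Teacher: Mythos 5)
Your proof is correct, and both of your routes get there cleanly. But the mechanism is genuinely different from the paper's. The paper appeals to the explicit expansion \refeq{eq.VandermondeSumOfExponential}, $\vfunc_{k,\slambdaset}(t) = (-1)^{\sorder-1}\sum_{i=1}^{\sorder}\frac{\sgain_{k,\slambdaset_{\neg i}}}{\prod_{j\neq i}(\slambda_j-\slambda_i)}e^{\slambda_i t}$, and simply observes that this expression (together with the companion coefficients in \refeq{eq.ControlGain}, which are elementary symmetric polynomials up to sign) is manifestly a symmetric function of $\slambda_1,\ldots,\slambda_\sorder$, since the inner quantities are symmetric in the remaining indices and the outer sum runs over all $i$. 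You instead work from the defining product $\vvect_{\slambdaset}(t)=e^{\slambdaset t}\vmat_{\slambdaset}^{-1}$ and track how a permutation matrix $P$ acts on each factor, showing $P\tr{P}=I$ cancels; your Cramer's-rule variant does the same bookkeeping at the level of $\det P=\pm 1$. What the paper's approach buys is brevity once \refeq{eq.VandermondeSumOfExponential} is in hand. What your approach buys is that it does not require that explicit expansion at all --- it shows structurally why the invariance holds even though neither $e^{\slambdaset t}$ nor $\vmat_{\slambdaset}^{-1}$ is individually permutation-invariant, which is a point worth having on record and which the paper's one-liner does not make visible.
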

\begin{proof}
By definition, both companion coefficients $\sgain_{k, \slambdaset}$ in \refeq{eq.ControlGain} and Vandermonde basis functions $\vfunc_{k, \slambdaset}(t)$ in \refeq{eq.VandermondeSumOfExponential} are independent of the order of elements of $\slambdaset$.
\end{proof}

Since Vandermonde matrices are the eigenbasis of companion matrices as seen in \refeq{eq.CompanionEigenDecomposition}, Vandermonde basis functions obey companion dynamics.
\begin{proposition}\label{prop.VandermondeDynamics}
\emph{(Vandermonde Basis Dynamics)} For any $\slambdaset \in \C^{\sorder}$ with pairwise distinct elements (i.e., $\slambda_k \neq \slambda_l$ for all $k \neq l$), the Vandermonde basis vector $\vvect_{\slambdaset}(t) \in \R^{1 \times \sorder}$ satisfies the (transposed) companion dynamics in \refeq{eq.CompanionDynamics} as
\begin{align}\label{eq.VandermondeDynamics}
\dot{\vvect}_{\slambdaset}(t) = \vvect_{\slambdaset}(t) \cmat_{\slambdaset}  \quad \forall t \geq 0
\end{align}
with the initial condition $\vvect_{\slambdaset}(0) = [1, 0, \ldots, 0]$, where $\cmat_{\slambdaset}$ is the associated companion matrix defined in \refeq{eq.CompanionStateMatrix}. 
\end{proposition}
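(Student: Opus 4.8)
The plan is to derive the differential equation directly from the basis-change definition $\vvect_{\slambdaset}(t) = e^{\slambdaset t} \vmat_{\slambdaset}^{-1}$ in \refeq{eq.Exponential2Vandermonde}, reducing everything to properties of the matrix exponential and the eigendecomposition \refeq{eq.CompanionEigenDecomposition}. First I would differentiate \refeq{eq.Exponential2Vandermonde} term by term: since $\frac{\diff}{\diff t} e^{\slambdaset t} = e^{\slambdaset t} \diag(\slambdaset)$ (entrywise, $\frac{\diff}{\diff t} e^{\slambda_i t} = \slambda_i e^{\slambda_i t}$), and $\vmat_{\slambdaset}^{-1}$ is constant, we get $\dot{\vvect}_{\slambdaset}(t) = e^{\slambdaset t} \diag(\slambdaset) \vmat_{\slambdaset}^{-1}$. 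Now insert $\vmat_{\slambdaset}^{-1}\vmat_{\slambdaset} = I$ to write $e^{\slambdaset t} \diag(\slambdaset) \vmat_{\slambdaset}^{-1} = e^{\slambdaset t} \vmat_{\slambdaset}^{-1} \bigl(\vmat_{\slambdaset} \diag(\slambdaset) \vmat_{\slambdaset}^{-1}\bigr) = \vvect_{\slambdaset}(t)\, \cmat_{\slambdaset}$, where the last equality uses exactly the eigendecomposition \refeq{eq.CompanionEigenDecomposition}. This establishes \refeq{eq.VandermondeDynamics}.

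For the initial condition, evaluate \refeq{eq.Exponential2Vandermonde} at $t=0$: $e^{\slambdaset \cdot 0} = [1,1,\ldots,1]$, the all-ones row vector, which is precisely the first row of $\vmat_{\slambdaset}$ (see \refeq{eq.VandermondeMatrix}). Hence $\vvect_{\slambdaset}(0) = [1,1,\ldots,1]\,\vmat_{\slambdaset}^{-1} = e_1^{\mathrm{T}} \vmat_{\slambdaset}\vmat_{\slambdaset}^{-1} = e_1^{\mathrm{T}} = [1,0,\ldots,0]$, the first standard basis row vector. This also matches the Cramer's-rule observation following \refprop{prop.VandermondeBasisCramerRule} that $\vfunc_{0,\slambdaset}(0)=1$ and $\vfunc_{k,\slambdaset}(0)=0$ for $k\neq 0$.

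There is no genuine obstacle here — the statement is essentially a restatement of the fact that the rows of $\vmat_{\slambdaset}^{-1}$, read against the exponential basis, form a fundamental solution set of the transposed companion system, and the argument is a two-line manipulation. The only point requiring mild care is the hypothesis: the identities \refeq{eq.Exponential2Vandermonde} and \refeq{eq.CompanionEigenDecomposition} both presuppose that $\slambdaset$ has pairwise distinct entries so that $\vmat_{\slambdaset}$ is invertible, which is exactly the standing assumption of the proposition. One could alternatively phrase the same proof componentwise via \refeq{eq.VandermondeSumOfExponential}, differentiating the scalar expansion $\vfunc_{k,\slambdaset}(t) = (-1)^{\sorder-1}\sum_i \frac{\sgain_{k,\slambdaset_{\neg i}}}{\prod_{j\neq i}(\slambda_j-\slambda_i)} e^{\slambda_i t}$ and matching coefficients of each $e^{\slambda_i t}$ against the rows of $\cmat_{\slambdaset}$, but this reproduces the same content with more index bookkeeping, so I would present the compact matrix version and relegate the scalar check to a footnote if a reviewer wants it.
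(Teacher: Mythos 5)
Your proof is correct and the dynamics part follows exactly the paper's route: differentiate $\vvect_{\slambdaset}(t) = e^{\slambdaset t}\vmat_{\slambdaset}^{-1}$, pull out $\diag(\slambdaset)$, and reassemble $\cmat_{\slambdaset}$ via the eigendecomposition \refeq{eq.CompanionEigenDecomposition}. The only (minor) divergence is in verifying $\vvect_{\slambdaset}(0) = [1,0,\ldots,0]$: the paper invokes the Cramer's-rule representation \refeq{eq.VandermondeBasisCramerRule} and notes that $\vmat_{k,\slambdaset}(0)$ has a repeated all-ones row for $k\neq 0$, while you observe directly that $e^{\slambdaset\cdot 0}=[1,\ldots,1]$ is the first row of $\vmat_{\slambdaset}$ so that $[1,\ldots,1]\vmat_{\slambdaset}^{-1}=\evect_1^{\mathrm{T}}$ --- a cleaner one-liner that avoids determinants, but it is the same elementary fact dressed differently.
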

\begin{proof}
By differentiating $\vvect_{\slambdaset}(t) = e^{\slambdaset t} \vmat_{\slambdaset}^{-1}$, one can verify that
\begin{align}
\dot{\vvect}_{\slambdaset}(t) &=   e^{\slambdaset t} \diag(\slambdaset) \vmat_{\slambdaset}^{-1}  \\
&=\vvect_{\slambdaset}(t) \vmat_{\slambdaset} \diag(\slambdaset) \vmat_{\slambdaset}^{-1} = \vvect_{\slambdaset}(t) \cmat_{\slambdaset}
\end{align}
where $\frac{\text{d}}{\text{d}t}e^{\slambdaset t} = e^{\slambdaset t}\diag(\slambdaset)$ and $e^{\slambdaset t} = \vvect_{\slambdaset}(t) \vmat_{\slambdaset}$.

Moreover, one can observe from \refeq{eq.ExponentialVandermondeMatrix} that $\vmat_{0,\slambdaset}(0) = \vmat_{\slambdaset}$, and $\vmat_{k,\slambdaset}(0)$ has two identical rows of all ones for $k \neq 0$. 
Hence, using \refeq{eq.VandermondeBasisCramerRule}, one conclude that $\vvect_{\slambdaset}(0) = [1, 0, \ldots, 0]$, which completes the proof.
\end{proof}

According to the Vandermonde basis dynamics in \refeq{eq.VandermondeDynamics}, each Vandermonde basis function $\vfunc_{k, \slambdaset}(t)$ evolves over time as
\begin{align}
\dot{\vfunc}_{k, \slambdaset}(t) = \vfunc_{k-1, \slambdaset}(t) - \sgain_{k,\slambdaset} \vfunc_{\sorder-1, \slambdaset}(t)
\end{align} 
for $k = 0, \ldots, \sorder-1$, where  $\vfunc_{-1, \slambdaset}(t)=0$ and $\sgain_{k,\slambdaset}$ is the companion coefficient defined as in \refeq{eq.ControlGain}.
To effectively handle the coupling between Vandermonde basis functions, we find it useful to describe Vandermonde basis dynamics recursively. 

\begin{proposition}\label{prop.VandermondeDynamicsRecursion}
\emph{(Recursive Vandermonde Basis Dynamics)} 
For any $\slambdaset = \plist{\slambda_1, \ldots, \slambda_\sorder} \in \C^{\sorder}$ with $\slambda_i \neq \slambda_j$ for all $i \neq j$, the Vandermonde basis functions satisfy for any $k= 0, \ldots, \sorder-1$
\begin{align}\label{eq.VandermondeDynamicsRecursion}
\dot{\vfunc}_{k, \slambdaset} (t) = \slambda_l \vfunc_{k, \slambdaset}(t) - \slambda_l \vfunc_{k, \slambdaset_{\neg l}} (t) +  \vfunc_{k-1, \slambdaset_{\neg l}} (t)
\end{align}   
where $l \in \clist{ 1, \ldots, \sorder}$,  $\slambdaset_{\neg l} = \plist{\slambda_1, \ldots, \slambda_{l-1}, \slambda_{l+1}, \ldots, \slambda_\sorder}$, and $\vfunc_{k, \slambdaset}(t) = 0$ for $k < 0$ and $k \geq  \sorder $. 
\end{proposition}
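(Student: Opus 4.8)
The plan is to establish the recursion \eqref{eq.VandermondeDynamicsRecursion} by relating the Vandermonde basis functions associated with the full root vector $\slambdaset$ to those associated with the reduced vector $\slambdaset_{\neg l}$, and then differentiating. The key observation is that the exponential basis functions, unlike the Vandermonde basis functions, trivially decouple: $\frac{\diff}{\diff t} e^{\slambda_i t} = \slambda_i e^{\slambda_i t}$. So I would first derive an algebraic identity expressing $\vfunc_{k,\slambdaset}(t)$ in terms of $\vfunc_{k,\slambdaset_{\neg l}}(t)$ and $\vfunc_{k-1,\slambdaset_{\neg l}}(t)$ — essentially the ``adding one root'' update rule for Vandermonde basis functions — and then the dynamics recursion will follow by a short differentiation argument using $\dot{\vfunc}_{k,\slambdaset_{\neg l}}(t) = \vfunc_{k-1,\slambdaset_{\neg l}}(t) - \sgain_{k,\slambdaset_{\neg l}}\vfunc_{\sorder-2,\slambdaset_{\neg l}}(t)$ (the non-recursive dynamics from \refprop{prop.VandermondeDynamics} applied to the $(\sorder-1)$-element vector $\slambdaset_{\neg l}$).

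To get the algebraic identity, I would use the expansion \eqref{eq.VandermondeSumOfExponential}, which writes $\vfunc_{k,\slambdaset}(t) = (-1)^{\sorder-1}\sum_{i=1}^{\sorder} \frac{\sgain_{k,\slambdaset_{\neg i}}}{\prod_{j\neq i}(\slambda_j - \slambda_i)} e^{\slambda_i t}$, together with the companion-coefficient recursion from \reflem{lem.CompanionCoefficientRecursion} (referenced in the excerpt), which should give something like $\sgain_{k,\slambdaset} = \sgain_{k-1,\slambdaset_{\neg l}} - \slambda_l\,\sgain_{k,\slambdaset_{\neg l}}$ — the standard relation obtained by multiplying $\prod_{i}(\slambda-\slambda_i) = (\slambda - \slambda_l)\prod_{i\neq l}(\slambda-\slambda_i)$ and matching coefficients. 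Substituting this into the sum and carefully splitting the $i=l$ term from the rest, the denominators $\prod_{j\neq i}(\slambda_j-\slambda_i)$ over the full index set should factor through the partial-fraction identity $\frac{1}{\slambda_l - \slambda_i} + \text{(reduced denominator)}$, yielding precisely a clean relation of the form $\vfunc_{k,\slambdaset}(t) = $ (combination of $e^{\slambda_l t}$, $\vfunc_{k,\slambdaset_{\neg l}}(t)$, and $\vfunc_{k-1,\slambdaset_{\neg l}}(t)$). Alternatively — and possibly more cleanly — I could integrate the ODE directly: since $\vfunc_{k,\slambdaset}$ is a linear combination of the $e^{\slambda_i t}$, one can verify \eqref{eq.VandermondeDynamicsRecursion} by checking it holds for each exponential mode $e^{\slambda_m t}$ separately using \eqref{eq.ExponentialSumOfVandermonde}, i.e. $e^{\slambda_m t} = \sum_{i=0}^{\sorder-1}\slambda_m^i \vfunc_{i,\slambdaset}(t)$, and the matching relation $e^{\slambda_m t} = \sum_{i=0}^{\sorder-2}\slambda_m^i\vfunc_{i,\slambdaset_{\neg l}}(t)$ for modes $m\neq l$.

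The cleanest route is probably the latter: fix $l$, and note that both sides of \eqref{eq.VandermondeDynamicsRecursion} are linear combinations of $e^{\slambda_1 t},\ldots,e^{\slambda_\sorder t}$, so it suffices to match coefficients of each $e^{\slambda_m t}$. For $m = l$, one uses $\dot{\vfunc}_{k,\slambdaset}$'s $e^{\slambda_l t}$-coefficient is $\slambda_l$ times that of $\vfunc_{k,\slambdaset}$, while $\vfunc_{k,\slambdaset_{\neg l}}$ and $\vfunc_{k-1,\slambdaset_{\neg l}}$ contain no $e^{\slambda_l t}$ term, so the $m=l$ identity is immediate. For $m\neq l$, I would express the $e^{\slambda_m t}$-coefficient of $\vfunc_{k,\slambdaset}$ via \eqref{eq.VandermondeSumOfExponential} and of $\vfunc_{k,\slambdaset_{\neg l}}$, $\vfunc_{k-1,\slambdaset_{\neg l}}$ similarly, and reduce the required identity to the companion-coefficient recursion $\sgain_{k,\slambdaset_{\neg m}} = \sgain_{k-1,\slambdaset_{\neg m, \neg l}} - \slambda_l\sgain_{k,\slambdaset_{\neg m,\neg l}}$ combined with the denominator factorization $\prod_{j\neq m, j\neq l}(\slambda_j-\slambda_m)\cdot(\slambda_l-\slambda_m) = \prod_{j\neq m}(\slambda_j-\slambda_m)$. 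The main obstacle I anticipate is purely bookkeeping: keeping the index sets ($\slambdaset$ versus $\slambdaset_{\neg l}$ versus $\slambdaset_{\neg m}$, and their intersections), the sign conventions $(-1)^{\sorder-1}$ versus $(-1)^{\sorder-2}$, and the boundary cases $k=0$ (where $\vfunc_{-1,\cdot}\equiv 0$) and $k=\sorder-1$ all consistent — there is no conceptual difficulty, but the algebra is error-prone and needs the companion-coefficient recursion lemma stated with the right index shifts.
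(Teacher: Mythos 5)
Your preferred mode-matching route is essentially the paper's own argument, reorganized: the paper differentiates the exponential expansion \eqref{eq.VandermondeSumOfExponential}, writes $\slambda_i = \slambda_l + (\slambda_i - \slambda_l)$, cancels the $(\slambda_i - \slambda_l)$ factor against the denominator for $i\neq l$, and applies the companion-coefficient recursion \reflem{lem.CompanionCoefficientRecursion} — exactly the ingredients (exponential expansion, denominator factorization $\prod_{j\neq m}(\slambda_j-\slambda_m) = (\slambda_l-\slambda_m)\prod_{j\neq m,l}(\slambda_j-\slambda_m)$, and the recursion $\sgain_{k,\slambdaset_{\neg m}} = -\slambda_l\sgain_{k,\slambdaset_{\neg m\neg l}} + \sgain_{k-1,\slambdaset_{\neg m\neg l}}$) that you identify, just carried out as a global rewrite of the sum rather than a term-by-term coefficient comparison. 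Your outline is correct, including the observation that $\vfunc_{\cdot,\slambdaset_{\neg l}}$ has no $e^{\slambda_l t}$ component so the $m=l$ mode is trivial.
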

\begin{proof}
See \refapp{app.VandermondeDynamicsRecursion}.
\end{proof}

An important shared property of Vandermonde and exponential basis functions is nonnegativity, as shown in \reffig{fig.ExponentialVandermondeBasisFunctions}.

\begin{proposition}\label{prop.VandermondeBasisNonnegativity}
\emph{(Nonnegative Vandermonde Basis)}
For any real negative $\slambdaset = \plist{\slambda_1, \ldots, \slambda_\sorder} \in \R_{<0}^{\sorder}$ with  distinct elements (i.e., $\slambda_i \neq \slambda_j$ for all $i \neq j$), the Vandermonde basis functions are nonnegative, i.e.,
\begin{align} \label{eq.VandermondeBasisNonnegativity}
\vfunc_{k,\slambdaset}(t) \geq 0 \quad \quad \forall k=0, \ldots, \sorder-1.
\end{align}
\end{proposition}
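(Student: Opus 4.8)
The plan is to prove nonnegativity of the Vandermonde basis functions $\vfunc_{k,\slambdaset}(t)$ by induction on the order $\sorder$, using the recursive Vandermonde basis dynamics of \refprop{prop.VandermondeDynamicsRecursion} to pass from an $(\sorder-1)$-element root vector to an $\sorder$-element one. The base case $\sorder=1$ is immediate: with $\slambdaset=(\slambda_1)$, one has $\vfunc_{0,\slambdaset}(t)=e^{\slambda_1 t} > 0$. For the inductive step, fix a real negative $\slambdaset=(\slambda_1,\ldots,\slambda_\sorder)$ with distinct entries, pick an index $l$, and write $\slambdaset_{\neg l}$ for the reduced $(\sorder-1)$-element vector; by the induction hypothesis, $\vfunc_{k,\slambdaset_{\neg l}}(t)\geq 0$ for all $k=0,\ldots,\sorder-2$ and all $t\geq 0$ (and $\vfunc_{k,\slambdaset_{\neg l}}(t)=0$ outside that range, which is consistent with the recursive convention).

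The key step is to read \refeq{eq.VandermondeDynamicsRecursion} as a scalar linear ODE in $\vfunc_{k,\slambdaset}(t)$ driven by a known forcing term:
\begin{align*}
\dot{\vfunc}_{k,\slambdaset}(t) - \slambda_l\,\vfunc_{k,\slambdaset}(t) = \vfunc_{k-1,\slambdaset_{\neg l}}(t) - \slambda_l\,\vfunc_{k,\slambdaset_{\neg l}}(t) =: g_k(t).
\end{align*}
Since $\slambda_l < 0$, the term $-\slambda_l\,\vfunc_{k,\slambdaset_{\neg l}}(t)$ is nonnegative by the induction hypothesis, and $\vfunc_{k-1,\slambdaset_{\neg l}}(t)$ is nonnegative as well (it vanishes when $k=0$), so $g_k(t)\geq 0$ for all $t\geq 0$. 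Integrating the ODE with the variation-of-constants formula gives
\begin{align*}
\vfunc_{k,\slambdaset}(t) = e^{\slambda_l t}\,\vfunc_{k,\slambdaset}(0) + \int_0^t e^{\slambda_l (t-\tau)}\, g_k(\tau)\, \diff\tau.
\end{align*}
By \refprop{prop.VandermondeDynamics} (or the remark following \refprop{prop.VandermondeBasisCramerRule}) the initial condition is $\vfunc_{k,\slambdaset}(0) = 1$ for $k=0$ and $\vfunc_{k,\slambdaset}(0)=0$ for $k\neq 0$, hence $\vfunc_{k,\slambdaset}(0)\geq 0$ in all cases; the integrand $e^{\slambda_l(t-\tau)} g_k(\tau)$ is nonnegative on $[0,t]$; therefore $\vfunc_{k,\slambdaset}(t)\geq 0$, closing the induction over $k=0,\ldots,\sorder-1$ and hence over the order.

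The main obstacle I anticipate is bookkeeping rather than conceptual: one must check that the recursive convention $\vfunc_{k,\slambdaset}(t)=0$ for $k<0$ and $k\geq\sorder$ is applied consistently (so that the $k=0$ case of $g_k$ really is just $-\slambda_l\,\vfunc_{0,\slambdaset_{\neg l}}(t)$, with no spurious $\vfunc_{-1,\slambdaset_{\neg l}}$ term), and that \refprop{prop.VandermondeDynamicsRecursion} genuinely holds for the full range $k=0,\ldots,\sorder-1$ — in particular for $k=\sorder-1$, where $\vfunc_{k,\slambdaset_{\neg l}}(t)=\vfunc_{\sorder-1,\slambdaset_{\neg l}}(t)=0$ by convention, so the recursion degenerates to $\dot{\vfunc}_{\sorder-1,\slambdaset}(t)=\slambda_l\,\vfunc_{\sorder-1,\slambdaset}(t)+\vfunc_{\sorder-2,\slambdaset_{\neg l}}(t)$, still of the required form. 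A secondary point is that the distinctness hypothesis on $\slambdaset$ is needed only so that $\vmat_{\slambdaset}$ is invertible and the Vandermonde basis is well-defined; it plays no further role, and one could alternatively phrase the whole argument directly in terms of the Cramer's-rule determinant formula \refeq{eq.VandermondeBasisCramerRule} combined with a sign analysis, but the ODE route is cleaner and avoids determinant sign juggling.
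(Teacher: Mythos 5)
Your proposal is correct and takes essentially the same route as the paper: both proofs induct on the number of roots, invoke the recursive Vandermonde dynamics of Proposition~\ref{prop.VandermondeDynamicsRecursion}, use the induction hypothesis together with $\slambda_l<0$ to show the forcing term $g_k(t)=\vfunc_{k-1,\slambdaset_{\neg l}}(t)-\slambda_l\vfunc_{k,\slambdaset_{\neg l}}(t)$ is nonnegative, and conclude $\vfunc_{k,\slambdaset}(t)\ge 0$ from the nonnegative initial condition. The only cosmetic difference is that you write down the variation-of-constants solution explicitly while the paper phrases the same step via the comparison lemma applied to the differential inequality $\dot\vfunc_{k,\slambdaset}\ge\slambda_l\vfunc_{k,\slambdaset}$; these are equivalent for a scalar linear ODE, and your base case at $\sorder=1$ is a harmless variant of the paper's empty base case at $|\slambdaset|=0$.
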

\begin{proof}
See \refapp{app.VandermondeBasisNonnegativity}.
\end{proof}

Another common characteristic feature of Vandermonde and exponential basis functions are their relative order, see \reffig{fig.ExponentialVandermondeBasisFunctions}.

\begin{proposition}\label{prop.RelativeVandemondeBasisBound}
\emph{(Relative Vandemonde Basis Bounds)}
For any real negative $\slambdaset = \plist{\slambda_1, \ldots, \slambda_\sorder} \in \R_{<0}^{\sorder}$ with  distinct elements (i.e., $\slambda_k \neq \slambda_l$ for all $k \neq l$), the Vandermonde basis functions $\vfunc_{k, \slambdaset}(t)$ in \refeq{eq.VandermondeBasisCramerRule} are relatively bounded by each other in terms of the companion coefficients $\sgain_{k, \slambdaset}$ in \refeq{eq.ControlGain} as \reffn{fn.RelativeVandemondeBasisBound}
\begin{align}\label{eq.RelativeVandemondeBasisBound}
\sgain_{k, \slambdaset_{\neg i}} \vfunc_{k-1, \slambdaset}(t) \geq \sgain_{k-1, \slambdaset_{\neg i}}\vfunc_{k,\slambdaset} (t). 
\end{align}
for any $k = 1, \ldots \sorder -1$ and $i = 1, \ldots, \sorder$, where $\slambdaset_{\neg i} = \plist{\slambda_{1}, \ldots, \slambda_{i-1}, \slambda_{i+1}, \ldots, \slambda_{\sorder}}$.
\end{proposition}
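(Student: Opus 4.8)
The plan is to convert the claimed two‑term inequality into a ``diagonal'' nonnegativity statement for a companion system of one order lower, and then prove that statement by induction on the order via an integrating‑factor argument driven by the recursive dynamics of \refprop{prop.VandermondeDynamicsRecursion}. The bridge is a splitting identity that peels off one root at a time: for any distinct $\slambdaset \in \C^{\sorder}$ and any index $i$,
\begin{align*}
\vfunc_{m,\slambdaset}(t) = \vfunc_{m,\slambdaset_{\neg i}}(t) + \sgain_{m,\slambdaset_{\neg i}}\,\vfunc_{\sorder-1,\slambdaset}(t), \qquad m = 0,\ldots,\sorder-1 .
\end{align*}
This follows from \refeq{eq.ExponentialSumOfVandermonde}: for each root $\slambda_j$ with $j \neq i$ one has $\sum_{m=0}^{\sorder-1}\slambda_j^m\vfunc_{m,\slambdaset}(t) = e^{\slambda_j t} = \sum_{m=0}^{\sorder-2}\slambda_j^m\vfunc_{m,\slambdaset_{\neg i}}(t)$, so the degree‑$(\sorder-1)$ polynomial $\lambda \mapsto \vfunc_{\sorder-1,\slambdaset}(t)\lambda^{\sorder-1} + \sum_{m=0}^{\sorder-2}\bigl(\vfunc_{m,\slambdaset}(t) - \vfunc_{m,\slambdaset_{\neg i}}(t)\bigr)\lambda^{m}$ vanishes at the $\sorder-1$ distinct values $\{\slambda_j : j \neq i\}$ and hence equals $\vfunc_{\sorder-1,\slambdaset}(t)\prod_{j\neq i}(\lambda-\slambda_j)$; comparing with the characteristic polynomial \refeq{eq.CharacteristicPolynomial} of $\slambdaset_{\neg i}$ and matching coefficients of $\lambda^m$ gives the identity.

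Substituting this splitting into $\sgain_{k,\slambdaset_{\neg i}}\vfunc_{k-1,\slambdaset}(t) - \sgain_{k-1,\slambdaset_{\neg i}}\vfunc_{k,\slambdaset}(t)$, the $\vfunc_{\sorder-1,\slambdaset}(t)$ contributions cancel and the left‑hand side of \refeq{eq.RelativeVandemondeBasisBound} reduces to $\sgain_{k,\slambdaset_{\neg i}}\vfunc_{k-1,\slambdaset_{\neg i}}(t) - \sgain_{k-1,\slambdaset_{\neg i}}\vfunc_{k,\slambdaset_{\neg i}}(t)$. Therefore it suffices to prove: for every $m \geq 1$, every $\boldsymbol{\mu} \in \R_{<0}^{m}$ with distinct entries, and every $k \in \{1,\ldots,m\}$,
\begin{align*}
D_k(\boldsymbol{\mu},t) := \sgain_{k,\boldsymbol{\mu}}\vfunc_{k-1,\boldsymbol{\mu}}(t) - \sgain_{k-1,\boldsymbol{\mu}}\vfunc_{k,\boldsymbol{\mu}}(t) \;\geq\; 0, \qquad \forall\, t \geq 0 ,
\end{align*}
since taking $\boldsymbol{\mu} = \slambdaset_{\neg i}$ (so $m = \sorder-1$) recovers \refeq{eq.RelativeVandemondeBasisBound}.

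I would establish $D_k(\boldsymbol{\mu},t)\geq 0$ by induction on $m = |\boldsymbol{\mu}|$. For $k = m$ this is immediate, since $D_m(\boldsymbol{\mu},t) = \vfunc_{m-1,\boldsymbol{\mu}}(t) \geq 0$ by \refprop{prop.VandermondeBasisNonnegativity} (which also covers the base case $m=1$, where $D_1(\boldsymbol{\mu},t) = \vfunc_{0,\boldsymbol{\mu}}(t) = e^{\mu_1 t}$), so it remains to treat $1 \leq k \leq m-1$. Fixing an index $l$ and differentiating $D_k(\boldsymbol{\mu},t)$ with the recursive Vandermonde dynamics \refeq{eq.VandermondeDynamicsRecursion} and the companion‑coefficient recursion $\sgain_{k,\boldsymbol{\mu}} = \sgain_{k-1,\boldsymbol{\mu}_{\neg l}} - \mu_l\,\sgain_{k,\boldsymbol{\mu}_{\neg l}}$ (\reflem{lem.CompanionCoefficientRecursion}, or directly from \refeq{eq.CharacteristicPolynomial}), and regrouping, one obtains
\begin{align*}
\dot D_k(\boldsymbol{\mu},t) = \mu_l\, D_k(\boldsymbol{\mu},t) + D_{k-1}(\boldsymbol{\mu}_{\neg l},t) + \mu_l^2\, D_k(\boldsymbol{\mu}_{\neg l},t) - \mu_l\bigl(\sgain_{k,\boldsymbol{\mu}_{\neg l}}\vfunc_{k-2,\boldsymbol{\mu}_{\neg l}}(t) - \sgain_{k-2,\boldsymbol{\mu}_{\neg l}}\vfunc_{k,\boldsymbol{\mu}_{\neg l}}(t)\bigr).
\end{align*}
The first two forcing terms are nonnegative by the inductive hypothesis (and \refprop{prop.VandermondeBasisNonnegativity} for the boundary index $k = m-1$); for the third I would use that all companion coefficients of the negative‑real‑root set $\boldsymbol{\mu}_{\neg l}$ are strictly positive together with the elementary identity $\sgain_{k-1,\boldsymbol{\mu}_{\neg l}}\bigl(\sgain_{k,\boldsymbol{\mu}_{\neg l}}\vfunc_{k-2,\boldsymbol{\mu}_{\neg l}} - \sgain_{k-2,\boldsymbol{\mu}_{\neg l}}\vfunc_{k,\boldsymbol{\mu}_{\neg l}}\bigr) = \sgain_{k,\boldsymbol{\mu}_{\neg l}}D_{k-1}(\boldsymbol{\mu}_{\neg l},t) + \sgain_{k-2,\boldsymbol{\mu}_{\neg l}}D_k(\boldsymbol{\mu}_{\neg l},t) \geq 0$ to deduce $\sgain_{k,\boldsymbol{\mu}_{\neg l}}\vfunc_{k-2,\boldsymbol{\mu}_{\neg l}} - \sgain_{k-2,\boldsymbol{\mu}_{\neg l}}\vfunc_{k,\boldsymbol{\mu}_{\neg l}} \geq 0$, whence the whole forcing is nonnegative because $-\mu_l > 0$. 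Since also $D_k(\boldsymbol{\mu},0) \geq 0$ — it equals $\sgain_{1,\boldsymbol{\mu}} > 0$ when $k = 1$ and $0$ when $k \geq 2$, as $\vvect_{\boldsymbol{\mu}}(0) = [1,0,\ldots,0]$ by \refprop{prop.VandermondeDynamics} — the integrating‑factor representation $D_k(\boldsymbol{\mu},t) = e^{\mu_l t}D_k(\boldsymbol{\mu},0) + \int_0^t e^{\mu_l(t-s)}(\text{nonnegative forcing})\,\diff s$ gives $D_k(\boldsymbol{\mu},t) \geq 0$, closing the induction.

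The main obstacle is the bookkeeping in the last step: substituting the two recursions into $\dot D_k(\boldsymbol{\mu},t)$ and recognizing that the forcing term reorganizes precisely into the three controllable pieces $D_{k-1}(\boldsymbol{\mu}_{\neg l})$, $\mu_l^2 D_k(\boldsymbol{\mu}_{\neg l})$, and $-\mu_l\bigl(\sgain_{k,\boldsymbol{\mu}_{\neg l}}\vfunc_{k-2,\boldsymbol{\mu}_{\neg l}} - \sgain_{k-2,\boldsymbol{\mu}_{\neg l}}\vfunc_{k,\boldsymbol{\mu}_{\neg l}}\bigr)$, the last of which is dominated by the first two thanks to positivity of the companion coefficients. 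The boundary indices $k = 1$ and $k = m$, and the cases in which some $\sgain_{j,\boldsymbol{\mu}_{\neg l}}$ or $\vfunc_{j,\boldsymbol{\mu}_{\neg l}}$ vanishes, require a brief separate check but are routine under the stated conventions ($\sgain_{j,\cdot}=0$, $\vfunc_{j,\cdot}=0$ for out‑of‑range $j$).
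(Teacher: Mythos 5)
Your proof is correct, and it takes a genuinely different route from the paper's. The paper attacks \refeq{eq.RelativeVandemondeBasisBound} directly in its ``off‑diagonal'' form: it differentiates $\sgain_{k,\slambdaset_{\neg i}}\vfunc_{k-1,\slambdaset} - \sgain_{k-1,\slambdaset_{\neg i}}\vfunc_{k,\slambdaset}$ via the recursive dynamics \refeq{eq.VandermondeDynamicsRecursion} at a second index $j \neq i$, and the inductive step has to juggle both excluded indices $i$ and $j$, bounding a residual $\Delta_k$ with two applications of the (ratio form of the) hypothesis on $\slambdaset_{\neg j}$ relative to $\slambdaset_{\neg i \neg j}$ plus the companion‑coefficient recursion of \reflem{lem.CompanionCoefficientRecursion}. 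Your route instead inserts a preliminary structural step --- the splitting identity $\vfunc_{m,\slambdaset} = \vfunc_{m,\slambdaset_{\neg i}} + \sgain_{m,\slambdaset_{\neg i}}\vfunc_{\sorder-1,\slambdaset}$, proved cleanly by Lagrange interpolation from \refeq{eq.ExponentialSumOfVandermonde} --- under which the $\vfunc_{\sorder-1,\slambdaset}$ contributions cancel and the off‑diagonal expression collapses exactly to the ``diagonal'' quantity $D_k(\slambdaset_{\neg i},t)$, i.e.\ the looser bound of footnote~\ref{fn.RelativeVandemondeBasisBound} applied to the one‑smaller root set. The remaining induction then tracks only one root set and one index, with the forcing reorganizing into $D_{k-1}(\boldsymbol{\mu}_{\neg l}) + \mu_l^2 D_k(\boldsymbol{\mu}_{\neg l}) - \mu_l\bigl(\sgain_{k,\boldsymbol{\mu}_{\neg l}}\vfunc_{k-2,\boldsymbol{\mu}_{\neg l}} - \sgain_{k-2,\boldsymbol{\mu}_{\neg l}}\vfunc_{k,\boldsymbol{\mu}_{\neg l}}\bigr)$, the last piece handled by your algebraic identity $\sgain_{k-1,\boldsymbol{\mu}_{\neg l}}\bigl(\sgain_{k,\boldsymbol{\mu}_{\neg l}}\vfunc_{k-2} - \sgain_{k-2,\boldsymbol{\mu}_{\neg l}}\vfunc_k\bigr) = \sgain_{k,\boldsymbol{\mu}_{\neg l}}D_{k-1} + \sgain_{k-2,\boldsymbol{\mu}_{\neg l}}D_k$; I verified all three regroupings and they hold. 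What your approach buys: the splitting identity is a worthwhile structural fact on its own (Vandermonde bases for $\sorder$ roots decompose into those for $\sorder-1$ roots plus a scalar multiple of the top basis function), and it makes transparent the logical relation between the proposition's tight bound and the footnote's loose bound --- they are literally equal once one root is peeled off. What the paper's approach buys: it works directly with the inequality as stated and avoids introducing and justifying the auxiliary diagonal quantity, at the cost of heavier two‑index bookkeeping in the induction.
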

\begin{proof}
See \refapp{app.RelativeVandemondeBasisBound}.
\end{proof}

\addtocounter{footnote}{1}\footnotetext{\label{fn.RelativeVandemondeBasisBound}
Vandermonde basis functions can also be relatively bounded as 
\begin{align}
\sgain_{k, \slambdaset} \vfunc_{k-1, \slambdaset}(t) \geq \sgain_{k-1, \slambdaset}\vfunc_{k,\slambdaset} (t). \nonumber
\end{align}
which is less accurate compared to the tight bound in \refeq{eq.RelativeVandemondeBasisBound}.
For example, for the second-order companion system with negative real eigenvalues $\slambdaset = \plist{\slambda_{\min}, \slambda_{\max}} \in \R_{< 0}^{2}$, one has
\begin{align}
\frac{\vfunc_{0,\slambdaset}}{\vfunc_{1,\slambdaset}} \geq \frac{\sgain_{0, \slambdaset_{\neg\max}}}{\sgain_{1, \slambdaset_{\neg\max}}} = -\slambda_{\min} > \frac{\sgain_{0,\slambdaset}}{\sgain_{1, \slambdaset}} = -\slambda_{\min}\frac{\slambda_{\max}}{\slambda_{\min} + \slambda_{\max}}. \nonumber 
\end{align} 
}

The ordering relation of Vandermonde basis function in  \refprop{prop.RelativeVandemondeBasisBound} is tight at the limit.
\begin{proposition}\label{prop.VandermondeBasisRatioLimit}
\emph{(Vandermonde Basis Ratio Limit)}
For any real negative $\slambdaset = \plist{\slambda_1, \ldots, \slambda_\sorder} \in \R_{<0}^{\sorder}$ with  distinct elements (i.e., $\slambda_i \neq \slambda_j$ for all $i \neq j$), the relative Vandermonde bound $\sgain_{k, \slambdaset_{\neg \max}}\vfunc_{k-1,\slambdaset}(t) \geq \sgain_{k-1, \slambdaset_{\neg \max}}\vfunc_{k, \slambdaset}(t)$ is tight and becomes an equality as $t \rightarrow \infty$ , i.e,
\begin{align}
\lim_{t \rightarrow \infty} \frac{\vfunc_{k-1, \slambdaset}(t)}{\vfunc_{k, \slambdaset}(t)} = \frac{\sgain_{k-1, \slambdaset_{\neg \max}}}{\sgain_{k, \slambdaset_{\neg \max}}}
\end{align}
\end{proposition}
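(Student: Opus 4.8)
The plan is to work directly with the Cramer's-rule representation of the Vandermonde basis functions in \refeq{eq.VandermondeBasisCramerRule}, expressing the ratio $\vfunc_{k-1,\slambdaset}(t)/\vfunc_{k,\slambdaset}(t)$ as $\det\vmat_{k-1,\slambdaset}(t)/\det\vmat_{k,\slambdaset}(t)$, and then extracting the dominant exponential behaviour as $t\to\infty$. Since all roots are real and negative, order them so that $\slambda_{\max}=\max(\slambdaset)$ is the (unique, by distinctness) slowest-decaying mode; then $e^{\slambda_{\max}t}$ dominates every other $e^{\slambda_i t}$ as $t\to\infty$. I would expand each determinant $\det\vmat_{k,\slambdaset}(t)$ along its exponential row $e^{\slambdaset t}$ (the $(k+1)^{\text{th}}$ row), obtaining $\det\vmat_{k,\slambdaset}(t)=\sum_{i=1}^{\sorder}(-1)^{i+k+1}e^{\slambda_i t}\,M_{k,i}$ where $M_{k,i}$ is the minor obtained by deleting the exponential row and the $i^{\text{th}}$ column — and this minor is itself a Vandermonde-type determinant in the variables $\slambdaset_{\neg i}$ with the $k^{\text{th}}$ power-row missing.

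The key computation is then the asymptotics: as $t\to\infty$, both $\det\vmat_{k-1,\slambdaset}(t)$ and $\det\vmat_{k,\slambdaset}(t)$ are dominated by their $i=\max$ terms, so
\begin{align*}
\lim_{t\to\infty}\frac{\vfunc_{k-1,\slambdaset}(t)}{\vfunc_{k,\slambdaset}(t)}
= \lim_{t\to\infty}\frac{\det\vmat_{k-1,\slambdaset}(t)}{\det\vmat_{k,\slambdaset}(t)}
= \frac{(-1)^{k+\max}\,M_{k-1,\max}}{(-1)^{k+1+\max}\,M_{k,\max}}
= -\frac{M_{k-1,\max}}{M_{k,\max}},
\end{align*}
where the leading coefficient $M_{k,\max}$ is the determinant of the matrix whose rows are the power-vectors $\slambda^{0},\slambda^{1},\ldots,\widehat{\slambda^{k}},\ldots,\slambda^{\sorder-1}$ evaluated on the $(\sorder-1)$-tuple $\slambdaset_{\neg\max}$ — i.e. a Vandermonde determinant with the $k^{\text{th}}$ row omitted. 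Such "generalized Vandermonde" determinants with one missing power-row are exactly the standard Vandermonde determinant times an elementary symmetric polynomial, and the ratio $M_{k-1,\max}/M_{k,\max}$ of two such determinants (differing only in which power-row, $k-1$ or $k$, is absent) collapses to a ratio of elementary symmetric polynomials in $\slambdaset_{\neg\max}$, which by \refeq{eq.ControlGain} is precisely $\pm\sgain_{k-1,\slambdaset_{\neg\max}}/\sgain_{k,\slambdaset_{\neg\max}}$. Matching the sign via the combinatorial identity $\sgain_{k,\slambdaset} = (-1)^{\sorder-k}\sigma_{\sorder-k,\slambdaset}$ (recorded in the footnote to \refeq{eq.VandermondeInverse}) gives the claimed limit with the correct sign.

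An alternative and perhaps cleaner route, which I would present if the determinant bookkeeping gets heavy, is to start from the exponential-basis expansion \refeq{eq.VandermondeSumOfExponential}: $\vfunc_{k,\slambdaset}(t)=(-1)^{\sorder-1}\sum_{i=1}^{\sorder}\frac{\sgain_{k,\slambdaset_{\neg i}}}{\prod_{j\neq i}(\slambda_j-\slambda_i)}e^{\slambda_i t}$. Pull out the $i=\max$ term, divide numerator and denominator by $e^{\slambda_{\max}t}$, and note every remaining term carries a factor $e^{(\slambda_i-\slambda_{\max})t}\to 0$; the limit is then immediately $\sgain_{k-1,\slambdaset_{\neg\max}}/\sgain_{k,\slambdaset_{\neg\max}}$ since the common factor $(-1)^{\sorder-1}/\prod_{j\neq\max}(\slambda_j-\slambda_{\max})$ cancels in the ratio. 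One still must check this factor is nonzero (true, by distinctness) and that $\vfunc_{k,\slambdaset}(t)>0$ for large $t$ so the ratio is well-defined — but that follows from the leading term $\sgain_{k,\slambdaset_{\neg\max}}>0$ (all elements of $\slambdaset_{\neg\max}$ are negative, so $\sigma_{\sorder-1-k,\slambdaset_{\neg\max}}$ has sign $(-1)^{\sorder-1-k}$ and the sign factors align) together with the nonnegativity already established in \refprop{prop.VandermondeBasisNonnegativity}.

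The main obstacle I anticipate is purely sign/index bookkeeping: correctly tracking the $(-1)$ powers through the cofactor expansion (or through the $\prod_{j\neq i}(\slambda_j-\slambda_i)$ denominators) to land on $\sgain_{k-1,\slambdaset_{\neg\max}}/\sgain_{k,\slambdaset_{\neg\max}}$ rather than its negative, especially since the relation $\sgain_{k,\slambdaset}=(-1)^{\sorder-k}\sigma_{\sorder-k,\slambdaset}$ shifts the sign in a way that depends on the parity of $\sorder-k$. Everything else — the dominance of $e^{\slambda_{\max}t}$, the cancellation of common factors, the nonvanishing of the leading coefficient — is routine once the representation is chosen. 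I would therefore verify the sign on the second-order case $\slambdaset=(\slambda_{\min},\slambda_{\max})$ against the explicit formula in footnote \ref{fn.RelativeVandemondeBasisBound}, where $\vfunc_{0,\slambdaset}/\vfunc_{1,\slambdaset}\to -\slambda_{\min}=\sgain_{0,\slambdaset_{\neg\max}}/\sgain_{1,\slambdaset_{\neg\max}}$, as a consistency check before writing the general argument.
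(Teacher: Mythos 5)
Your secondary route --- expanding $\vfunc_{k,\slambdaset}(t)$ via \refeq{eq.VandermondeSumOfExponential}, normalizing by $e^{\slambda_{\max}t}$, and letting the subdominant exponentials decay --- is exactly the paper's proof: the prefactor $(-1)^{\sorder-1}/\prod_{j\neq\max}(\slambda_j-\slambda_{\max})$ appears identically in numerator and denominator, cancels, and only $\sgain_{k-1,\slambdaset_{\neg\max}}/\sgain_{k,\slambdaset_{\neg\max}}$ survives; the paper is content to cite the monotonicity of $e^{\slambda t}$ and the strict separation $0>\slambda_{\max}>\max(\slambdaset_{\neg\max})$ and stop there, so your additional remark that $\sgain_{k,\slambdaset_{\neg\max}}>0$ (so the ratio is eventually well-defined) is a small but genuine improvement in rigor. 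Your primary route through Cramer's rule and cofactor expansion of $\det\vmat_{k,\slambdaset}(t)$ is also correct and reaches the same limit, but it routes through the generalized-Vandermonde (one-missing-power-row) identity linking the minors $M_{k,\max}$ to elementary symmetric polynomials, plus the $(-1)^{i+k+1}$ and $\sgain_{k,\slambdaset}=(-1)^{\sorder-k}\sigma_{\sorder-k,\slambdaset}$ sign reconciliation you yourself flag as the main hazard; this is noticeably more bookkeeping for the same conclusion, though it is the more structural argument and would pay off if one wanted the full asymptotic expansion rather than just the leading ratio. For this statement, the exponential-basis route is the one that closes cleanly in a few lines, which is presumably why the paper chose it.
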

\begin{proof}
See \refapp{app.VandermondeBasisRatioLimit}
\end{proof}

Last but not least, both exponential and Vandermonde basis functions are bounded above,  as illustrated in \reffig{fig.ExponentialVandermondeBasisFunctions}.

\begin{proposition}\label{prop.ZerothVandermondeBasisFunction}
\emph{(Vandermonde Basis Upper Bound)}
For any real negative $\slambdaset = \plist{\slambda_1, \ldots, \slambda_\sorder} \in \R_{<0}^{\sorder}$ with  distinct elements (i.e., $\slambda_k \neq \slambda_l$ for all $k \neq l$), the $0^{th}$ Vandermonde basis function is tightly upper bounded by 1, i.e.,
\begin{align}
\vfunc_{0,\slambdaset}(t) \leq 1 \quad \text{and}\quad \dot{\vfunc}_{0,\slambdaset}(t) \leq 0  \quad \forall  t \geq 0.
\end{align}
where equality holds at $t=0$, i.e., $\vfunc_{0,\slambdaset}(0)=1$.  
\end{proposition}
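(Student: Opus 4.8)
The plan is to reduce the whole statement to the scalar evolution equation for $\vfunc_{0,\slambdaset}$ and then invoke the nonnegativity result of \refprop{prop.VandermondeBasisNonnegativity}. First I would use the componentwise form of the Vandermonde basis dynamics obtained right after \refprop{prop.VandermondeDynamics}, namely $\dot{\vfunc}_{k,\slambdaset}(t) = \vfunc_{k-1,\slambdaset}(t) - \sgain_{k,\slambdaset}\vfunc_{\sorder-1,\slambdaset}(t)$ with the convention $\vfunc_{-1,\slambdaset}(t)=0$. Specializing to $k=0$ collapses this to
\begin{align*}
\dot{\vfunc}_{0,\slambdaset}(t) = -\sgain_{0,\slambdaset}\,\vfunc_{\sorder-1,\slambdaset}(t), \qquad \forall\, t \geq 0,
\end{align*}
so that the entire claim hinges only on the sign of the leading companion coefficient $\sgain_{0,\slambdaset}$ and of $\vfunc_{\sorder-1,\slambdaset}(t)$.

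Next I would pin down the sign of $\sgain_{0,\slambdaset}$. By \refeq{eq.ControlGain} with $k=0$ the only admissible index set is $I=\{1,\ldots,\sorder\}$, hence $\sgain_{0,\slambdaset} = (-1)^{\sorder}\prod_{i=1}^{\sorder}\slambda_i$; since each $\slambda_i<0$ we have $\prod_{i}\slambda_i = (-1)^{\sorder}\prod_{i}|\slambda_i|$, and therefore $\sgain_{0,\slambdaset}=\prod_{i}|\slambda_i|>0$. Combining this with \refprop{prop.VandermondeBasisNonnegativity}, which gives $\vfunc_{\sorder-1,\slambdaset}(t)\geq 0$ for all $t\geq 0$, immediately yields $\dot{\vfunc}_{0,\slambdaset}(t)\leq 0$ for all $t\geq 0$, i.e.\ $\vfunc_{0,\slambdaset}$ is non-increasing on $[0,\infty)$.

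Finally I would integrate this inequality starting from the known initial value $\vfunc_{0,\slambdaset}(0)=1$ — which follows from the Cramer's-rule representation in \refprop{prop.VandermondeBasisCramerRule}, since $\vmat_{0,\slambdaset}(0)=\vmat_{\slambdaset}$ whereas $\vmat_{k,\slambdaset}(0)$ has two identical all-ones rows for $k\neq 0$ — to obtain
\begin{align*}
\vfunc_{0,\slambdaset}(t) = 1 - \sgain_{0,\slambdaset}\int_{0}^{t}\vfunc_{\sorder-1,\slambdaset}(s)\,\diff s \leq 1, \qquad \forall\, t \geq 0,
\end{align*}
with equality precisely at $t=0$, which establishes both the bound and its tightness. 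I do not anticipate any genuine obstacle in this argument; the only places that warrant a moment of care are the sign bookkeeping for $\sgain_{0,\slambdaset}$ and checking that the $k=0$ reduction of the basis dynamics is legitimate (it is, thanks to the stated convention $\vfunc_{-1,\slambdaset}\equiv 0$).
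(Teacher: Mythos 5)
Your argument is correct and follows essentially the same route as the paper: reduce the $k=0$ component of the Vandermonde basis dynamics to $\dot{\vfunc}_{0,\slambdaset}(t) = -\sgain_{0,\slambdaset}\,\vfunc_{\sorder-1,\slambdaset}(t)$, note that $\sgain_{0,\slambdaset} = \prod_{l}(-\slambda_l) > 0$ and $\vfunc_{\sorder-1,\slambdaset}(t) \geq 0$ by \refprop{prop.VandermondeBasisNonnegativity}, and combine with the initial value $\vfunc_{0,\slambdaset}(0)=1$ from \refprop{prop.VandermondeDynamics}. Your writeup is merely a bit more explicit (e.g., the integral form of the monotonicity conclusion), but the key ingredients and their ordering coincide with the paper's proof.
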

\begin{proof}
See \refapp{app.ZerothVandermondeBasisFunction}.
\end{proof}

\section{Explicit Convex Trajectory Bounds \\ for Linear Companion Systems}
\label{sec.ConvexCompanionTrajectoryBounds}

In this section, we describe how to construct convex simplicial trajectory bounds for linear companion systems using the common characteristic features of Vandermonde and exponential basis functions, as an accurate alternative to ellipsoidal invariant sublevel sets built based on the Lyapunov theory.

\subsection{Simplicial Trajectory Bounds for Companion Systems}

Three important shared properties of Vandermonde and exponential basis functions are nonnegativity, relative ordering, and boundedness.
This allows for constructing convex simplicial trajectory bounds on companion system motion based on a convex combination of trajectory coefficients, which is inspired by the convexity of B\'ezier curves \cite{arslan_tiemessen_TRO2022}. 

\begin{proposition}\label{prop.SimplicialTrajectoryBound}
\emph{(Simplicial Companion Trajectory Bounds)}
Suppose the solution trajectory $\sposn(t)$ of the $\sorder^{\text{th}}$-order companion dynamics in \refeq{eq.CompanionDynamics}, starting at $t = 0$ from an initial state $\sstate_0 = (\sposn^{(0)}_0, \ldots, \sposn^{(\sorder-1)}_0) \in \R^{\sorder \times \sdim}$, can be expressed  using some scalar basis functions $\theta_{0}(t), \ldots, \theta_{\sorder-1}(t)$ as  
\begin{align}
\sposn(t) = \sum_{i=0}^{\sorder-1} \vect{y}_i(\sstate_0) \theta_{i} (t)
\end{align}
where $\vect{y}_0 (\sstate_0), \ldots, \vect{y}_{\sorder -1} (\sstate_0) \in \R^{\sdim}$ are trajectory coefficients depending on the initial state $\sstate_0$, and the basis functions $\theta_{0}(t), \ldots, \theta_{\sorder-1}(t)$ satisfy
\begin{enumerate}[\quad i)]
\item (Nonnegativity) $\theta_{i}(t) \geq 0 $ for all $i$,
\item (Relative Ordering) $\beta_i \theta_{i}(t) \geq \beta_{j} \theta_{j}(t)$ for all $i \leq j$ with some fixed positive scalars $\beta_i, \beta_j > 0$, 
\item (Boundedness) $\theta_0 (t) \leq 1$.
\end{enumerate}
Then, the companion system trajectory $\sposn(t)$ is bounded for all $t \geq 0$ by a simplicial\footnote{Our naming convention is inspired by the fact that if the companion system is moving in the $\sorder$-dimensional Euclidean space, i.e., $\sdim = \sorder$, then the motion prediction in \refeq{eq.SimplicialTrajectoryBound} corresponds to a simplex (i.e., a hyper-triangle).} convex region determined by trajectory coefficients $\vect{y}_0 (\sstate_0), \ldots, \vect{y}_{\sorder -1} (\sstate_0)$ as~\reffn{fn.SimplicialTrajectoryBound}
\begin{align}\label{eq.SimplicialTrajectoryBound}
\sposn(t) \in \conv \plist{\sum_{j=0}^{i-1} \frac{\beta_0}{\beta_j} \vect{y}_j(\sstate_0)  \Bigg | i = 0, \ldots, \sorder}
\end{align} 
where $\conv$ denotes the convex hull operator.
\end{proposition}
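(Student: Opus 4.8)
The plan is to express the trajectory $\sposn(t)$ as an explicit convex combination of the simplex vertices $\vvrtx_i := \sum_{j=0}^{i-1} \frac{\vbound_0}{\vbound_j} \vect{y}_j(\sstate_0)$ for $i = 0, \ldots, \sorder$. Note $\vvrtx_0 = 0$ (empty sum) and each subsequent vertex adds one more scaled coefficient. The natural idea is an Abel summation (summation by parts): rewrite $\sposn(t) = \sum_{i=0}^{\sorder-1} \vect{y}_i \theta_i(t)$ by grouping telescoping differences of the $\vvrtx_i$. Since $\vvrtx_{i+1} - \vvrtx_i = \frac{\vbound_0}{\vbound_i} \vect{y}_i$, we have $\vect{y}_i = \frac{\vbound_i}{\vbound_0}(\vvrtx_{i+1} - \vvrtx_i)$, so $\sposn(t) = \sum_{i=0}^{\sorder-1} \frac{\vbound_i}{\vbound_0} \theta_i(t) (\vvrtx_{i+1} - \vvrtx_i)$. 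Applying summation by parts to this telescoping-weighted sum collects coefficients of each $\vvrtx_i$.

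Carrying out the Abel rearrangement, I expect the coefficient of $\vvrtx_i$ to come out as $w_i := \frac{\vbound_{i-1}}{\vbound_0}\theta_{i-1}(t) - \frac{\vbound_i}{\vbound_0}\theta_i(t)$ for $i = 1, \ldots, \sorder-1$, with the boundary terms $w_0 = 1 - \frac{\vbound_0}{\vbound_0}\theta_0(t) = 1 - \theta_0(t)$ for the vertex $\vvrtx_0 = 0$, and $w_\sorder = \frac{\vbound_{\sorder-1}}{\vbound_0}\theta_{\sorder-1}(t)$ for the last vertex $\vvrtx_\sorder$. The key verifications are then: (a) \textbf{nonnegativity of each $w_i$}: for the interior indices this is exactly the relative ordering hypothesis $\vbound_{i-1}\theta_{i-1}(t) \geq \vbound_i\theta_i(t)$ (after dividing by $\vbound_0 > 0$); $w_0 \geq 0$ is the boundedness hypothesis $\theta_0(t) \leq 1$ together with $\theta_0(t) \geq 0$; and $w_\sorder \geq 0$ follows from nonnegativity of $\theta_{\sorder-1}(t)$ and positivity of the $\vbound$'s. (b) \textbf{the weights sum to one}: the sum $\sum_{i=0}^{\sorder} w_i$ telescopes — the $\frac{\vbound_i}{\vbound_0}\theta_i$ terms cancel in consecutive pairs, leaving just the constant $1$ from $w_0$.

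Once both facts are established, $\sposn(t) = \sum_{i=0}^{\sorder} w_i \vvrtx_i$ with $w_i \geq 0$ and $\sum_i w_i = 1$, which is by definition a point in $\conv(\vvrtx_0, \ldots, \vvrtx_\sorder)$, giving exactly \refeq{eq.SimplicialTrajectoryBound}. I would present this cleanly by first stating the vertex differences, then writing the single display that performs the summation by parts, then the two one-line checks (nonnegativity term-by-term, and the telescoping sum).

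The main obstacle is purely bookkeeping: getting the index boundaries of the Abel summation right so that the $i=0$ term correctly produces the coefficient $1-\theta_0(t)$ on the zero vertex and the $i=\sorder$ term correctly produces $\frac{\vbound_{\sorder-1}}{\vbound_0}\theta_{\sorder-1}(t)$ on the final vertex, with no off-by-one errors in between. There is no deep difficulty — the three hypotheses on the $\theta_i$ were clearly reverse-engineered precisely to make this convex-combination identity work — but care is needed because the vertex set has $\sorder+1$ elements indexed $0$ through $\sorder$ while the basis has $\sorder$ functions indexed $0$ through $\sorder-1$, so the shift between "coefficient index" and "vertex index" must be tracked consistently.
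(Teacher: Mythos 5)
Your argument is correct and is essentially the same as the paper's: both proofs perform the same Abel summation (summation by parts) to rewrite $\sposn(t) = \sum_i w_i\vvrtx_i$, use relative ordering for nonnegativity of the interior $w_i$, nonnegativity of $\theta_{\sorder-1}$ for the last weight, and the telescoping sum plus $\theta_0 \leq 1$ to place the leftover mass $1-\theta_0(t)$ on the origin vertex $\vvrtx_0 = 0$. The only cosmetic difference is bookkeeping: the paper sets $\beta_\sorder = 0$, $\theta_\sorder = 0$ and expands term-by-term, whereas you work with vertex differences $\vvrtx_{i+1}-\vvrtx_i$ — the resulting weights are identical (note that the raw coefficient of $\vvrtx_0$ from your rearrangement is $-\theta_0(t)$, and you are implicitly using $\vvrtx_0 = 0$ to shift it to $1-\theta_0(t)$; worth stating explicitly).
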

\begin{proof}
See \refapp{app.SimplicialTrajectoryBound}.
\end{proof}

\addtocounter{footnote}{1}\footnotetext{\label{fn.SimplicialTrajectoryBound}
Summation over the empty set is assumed to be zero. Hence, the simplicial companion trajectory bound in \refeq{eq.SimplicialTrajectoryBound} has the following form
\begin{align*}
\conv \plist{0,  \sum_{j=0}^{0} \frac{\beta_0}{\beta_j} \vect{y}_j(\sstate_0), \ldots, \sum_{j=0}^{\sorder-1} \frac{\beta_0}{\beta_j} \vect{y}_j(\sstate_0)}.
\end{align*}
}

In addition to applying \refprop{prop.SimplicialTrajectoryBound} for Vandermonde basis functions in \refthm{thm.VandermondeSimplex}, we construct exponential simplexes for bounding the companion trajectory using the exponential basis function, as illustrated in \reffig{fig.VandermondeExponentialSimplexEigenValue}.

\begin{proposition} \label{prop.ExponentialSimplex}
\emph{\!(\mbox{Exponential Simplexes for Companion Systems})}
For any real negative distinct ordered $\slambdaset=(\slambda_1, \ldots, \slambda_\sorder) \in \R^{\sorder}$ with $\slambda_1 < \slambda_2 < \ldots < \slambda_\sorder < 0$, the solution trajectory  $\sposn(t)=\sum_{j=0}^{\sorder-1} \scoef_{j, \slambdaset}(\sstate_0) e^{\slambda_j t}$ of the companion dynamics in \refeq{eq.CompanionDynamics} starting at $t=0$ from $\sstate_0$ is contained in
\begin{align}\label{eq.ExponentialSimplex}
\sposn(t) \in \esplx_{\slambdaset}(\sstate_0):= \conv \plist{\sum_{j=1}^{i} \scoef_{j,\slambdaset}(\sstate_0) \Bigg | i = 0, \ldots, \sorder}
\end{align}  
where $\scoef_{1,\slambdaset}(\sstate_0), \ldots, \scoef_{\sorder, \slambdaset}(\sstate_0)$ are defined as in \refeq{eq.ExponentialTrajectoryCoefficients}. 
\end{proposition}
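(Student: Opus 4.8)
The strategy is to deduce \refprop{prop.ExponentialSimplex} from \refprop{prop.SimplicialTrajectoryBound} by taking the scalar basis functions there to be the exponential modes $e^{\slambda_k t}$ themselves. The three structural hypotheses required by \refprop{prop.SimplicialTrajectoryBound} --- nonnegativity, relative ordering, and boundedness --- are all elementary for exponentials, so the only point that needs attention is the \emph{labeling}: \refprop{prop.SimplicialTrajectoryBound} wants its first basis function $\theta_0$ to be the dominant (largest) one and bounded above by $1$, so one must list the modes from the slowest-decaying to the fastest-decaying.

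Concretely, I would start from the exponential-basis representation of the trajectory in \refeq{eq.TrajectoryExponentialBasis}, namely $\sposn(t) = \sum_{k=1}^{\sorder}\scoef_{k,\slambdaset}(\sstate_0)\,e^{\slambda_k t}$ with $\scoef_{k,\slambdaset}(\sstate_0)$ as in \refeq{eq.ExponentialTrajectoryCoefficients}. Since $\slambda_1 < \slambda_2 < \ldots < \slambda_\sorder < 0$, I would set $\theta_i(t) := e^{\slambda_{\sorder-i}\, t}$ and $\vect{y}_i(\sstate_0) := \scoef_{\sorder-i,\slambdaset}(\sstate_0)$ for $i = 0, \ldots, \sorder-1$, so that $\sposn(t) = \sum_{i=0}^{\sorder-1}\vect{y}_i(\sstate_0)\,\theta_i(t)$. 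Then I would verify the hypotheses of \refprop{prop.SimplicialTrajectoryBound} with all weights $\vbound_0 = \ldots = \vbound_{\sorder-1} = 1$: (i) nonnegativity is immediate because $e^{\slambda_k t} > 0$; (ii) relative ordering holds because $i \leq j$ implies $\sorder - i \geq \sorder - j$, hence $\slambda_{\sorder-i} \geq \slambda_{\sorder-j}$, hence $\theta_i(t) = e^{\slambda_{\sorder-i} t} \geq e^{\slambda_{\sorder-j} t} = \theta_j(t)$ for every $t \geq 0$; and (iii) boundedness holds since $\theta_0(t) = e^{\slambda_\sorder t} \leq 1$ for all $t \geq 0$ as $\slambda_\sorder < 0$, with equality at $t = 0$.

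With these three facts in hand, \refprop{prop.SimplicialTrajectoryBound} gives directly $\sposn(t) \in \conv\plist{\sum_{j=0}^{i-1}\vect{y}_j(\sstate_0) \,\big|\, i = 0, \ldots, \sorder}$ for all $t \geq 0$; reindexing the inner summation by $k = \sorder - j$ expresses each vertex as a partial sum of the exponential coefficients $\scoef_{k,\slambdaset}(\sstate_0)$ accumulated from the slowest mode down to the fastest, which is exactly the vertex set of $\esplx_{\slambdaset}(\sstate_0)$ in \refeq{eq.ExponentialSimplex} (up to relabeling of the modes, and noting that the empty-sum vertex is the origin and the full-sum vertex is $\sposn(0) = \sposn_0^{(0)}$, both shared by the two descriptions). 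I do not anticipate any genuine difficulty here --- the exponential basis trivially possesses all the properties demanded by \refprop{prop.SimplicialTrajectoryBound} --- so the only thing to be careful about is this index bookkeeping: aligning the convex-hull vertex enumeration of \refprop{prop.SimplicialTrajectoryBound}, which accumulates from the coefficient of the dominant mode, with the order in which $\slambda_1, \ldots, \slambda_\sorder$ are listed.
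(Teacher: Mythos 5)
Your choice of instantiation $\theta_i(t) = e^{\slambda_{\sorder-i}t}$, $\vect{y}_i(\sstate_0) = \scoef_{\sorder-i,\slambdaset}(\sstate_0)$, and your verification of the nonnegativity, relative-ordering, and boundedness hypotheses of \refprop{prop.SimplicialTrajectoryBound}, are all correct and follow the same route as the paper's (one-line) proof. The gap is in your final sentence, where you assert that the resulting vertex set is ``exactly'' that of \refeq{eq.ExponentialSimplex}. It is not. Your instantiation of \refprop{prop.SimplicialTrajectoryBound} yields the vertices $0$, $\scoef_{\sorder,\slambdaset}(\sstate_0)$, $\scoef_{\sorder,\slambdaset}(\sstate_0)+\scoef_{\sorder-1,\slambdaset}(\sstate_0)$, $\ldots$, $\sum_{k=1}^{\sorder}\scoef_{k,\slambdaset}(\sstate_0)$, with the partial sums accumulating from the coefficient of the \emph{slowest} mode $e^{\slambda_\sorder t}$ downward, whereas the set in \refeq{eq.ExponentialSimplex} accumulates in the opposite direction, starting from $\scoef_{1,\slambdaset}(\sstate_0)$. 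The two vertex lists therefore differ in every intermediate entry. ``Relabeling the modes'' does not reconcile them: the labeling is pinned down by the assumption $\slambda_1 < \slambda_2 < \ldots < \slambda_\sorder < 0$, and each coefficient $\scoef_{k,\slambdaset}$ is bound to the exponent $\slambda_k$ via \refeq{eq.ExponentialTrajectoryCoefficients}.

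The mismatch is not cosmetic. Take $\sdim=\sorder=2$, $\slambdaset=(-2,-1)$, and choose $\sstate_0$ so that $\scoef_{1,\slambdaset}(\sstate_0)=(1,0)$ and $\scoef_{2,\slambdaset}(\sstate_0)=(0,1)$; then $\sposn(t)=(e^{-2t},e^{-t})$. Since $e^{-2t}<e^{-t}$ for $t>0$, the trajectory stays strictly above the diagonal, so it is \emph{not} contained in $\conv\plist{(0,0),(1,0),(1,1)}$, which is exactly the right-hand side of \refeq{eq.ExponentialSimplex} for this example, but it \emph{is} contained in $\conv\plist{(0,0),(0,1),(1,1)}$, which is what your Proposition-6 instantiation actually produces. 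So the accumulation order you derived is the correct one, and the index in \refeq{eq.ExponentialSimplex} must be reversed --- the bound your argument proves is $\sposn(t)\in\conv\plist{\sum_{j=\sorder-i+1}^{\sorder}\scoef_{j,\slambdaset}(\sstate_0)\,\big|\, i=0,\ldots,\sorder}$, equivalently accumulation in decreasing order of $k$, consistent with the stated ordering $\slambda_1<\ldots<\slambda_\sorder$. You should conclude with this vertex set and explicitly flag the discrepancy with the printed formula, rather than absorb it into an ``up to relabeling'' clause that the hypotheses do not permit.
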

\begin{proof}
The result directly follows from \refprop{prop.SimplicialTrajectoryBound} and the fact that $0 \leq e^{\slambda_1 t}< \ldots <e^{\slambda_\sorder t} \leq 1$.
\end{proof} 

As seen in \reffig{fig.VandermondeExponentialSimplexEigenValue}, Vandermonde and exponential simplexes are strongly related to each other because their vertices are related by a linear transformation which is singular for identical characteristic polynomial roots due to \refeq{eq.ExponentialTrajectoryCoefficients}.

\begin{figure}[t]
\begin{tabular}{@{}c@{\hspace{0.5mm}}c@{\hspace{0.5mm}}c@{}}
\includegraphics[width=0.16\textwidth]{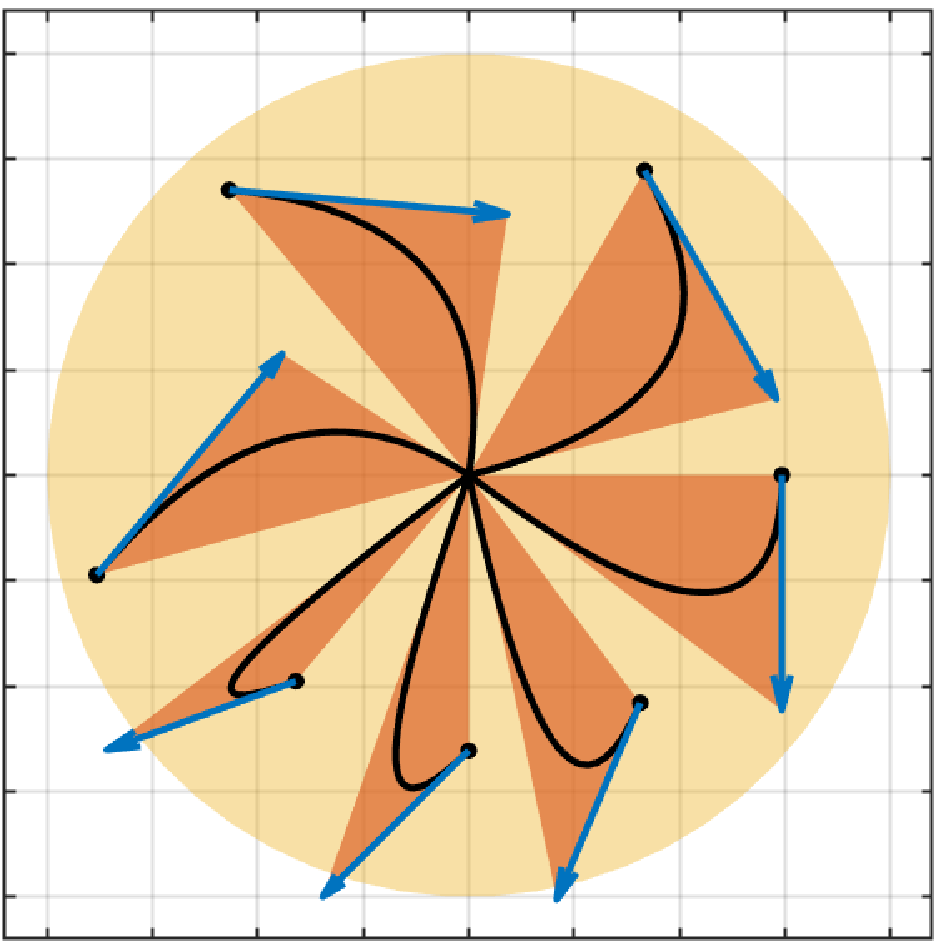} 
&
\includegraphics[width=0.16\textwidth]{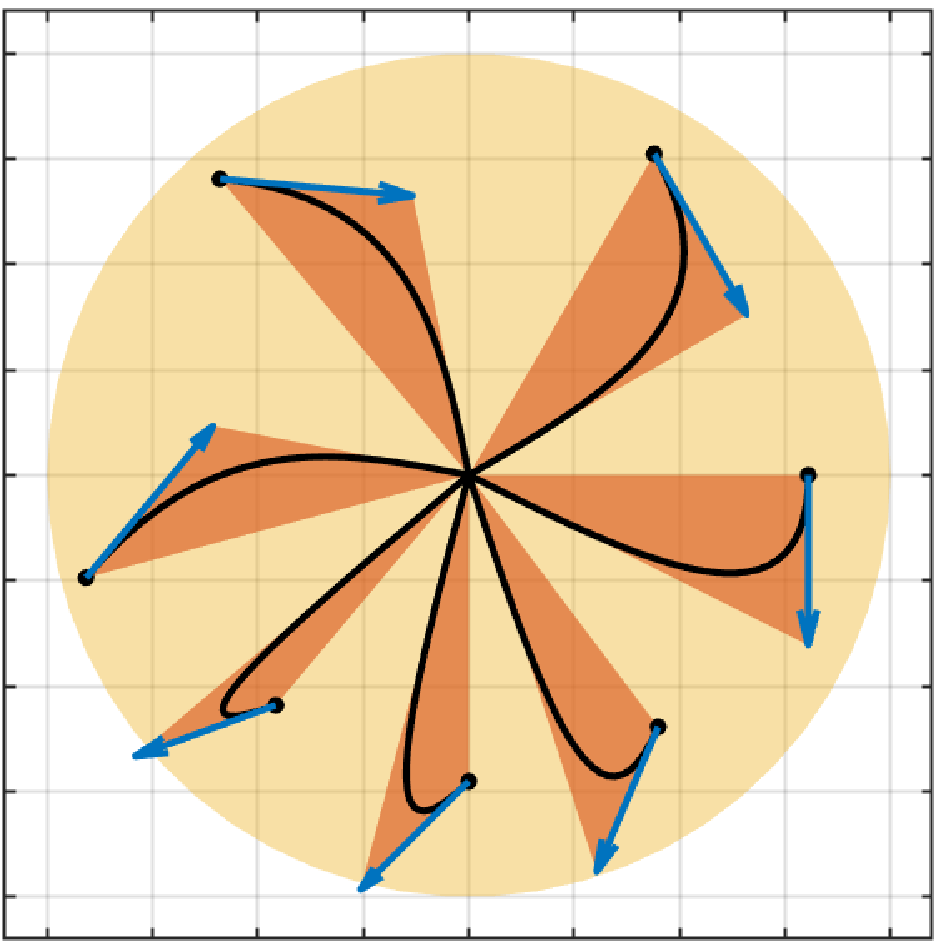} 
&
\includegraphics[width=0.16\textwidth]{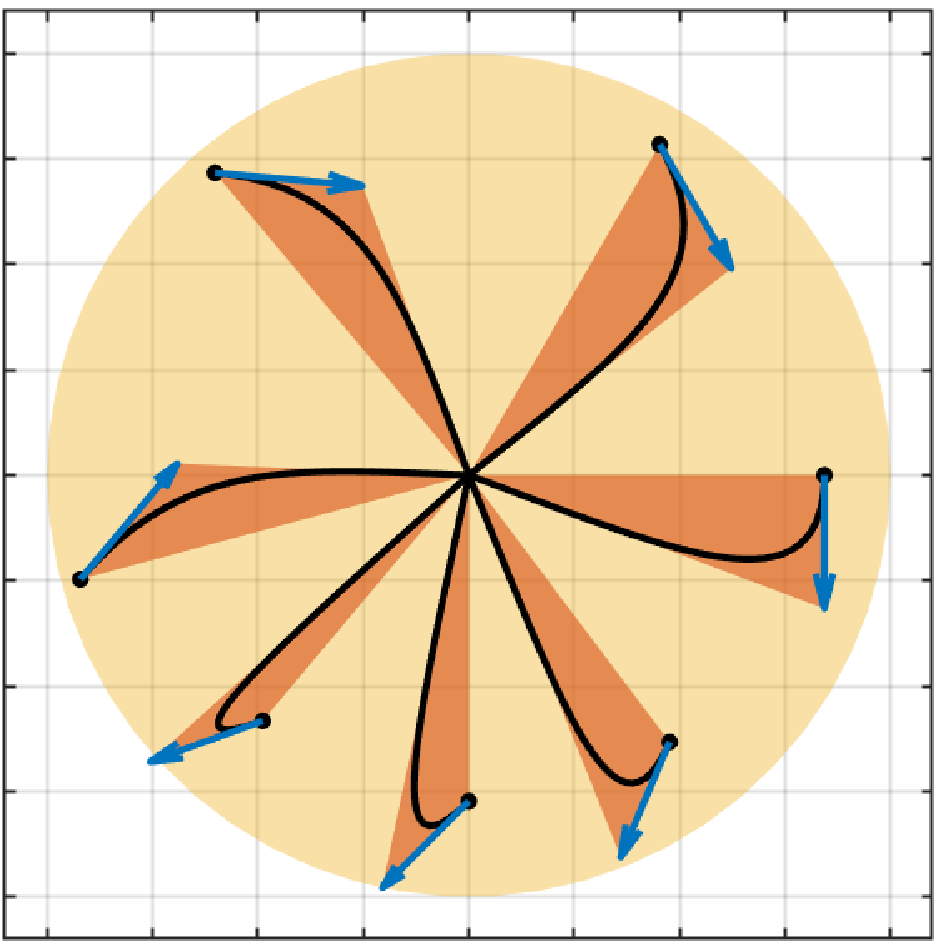} 
\\
\includegraphics[width=0.16\textwidth]{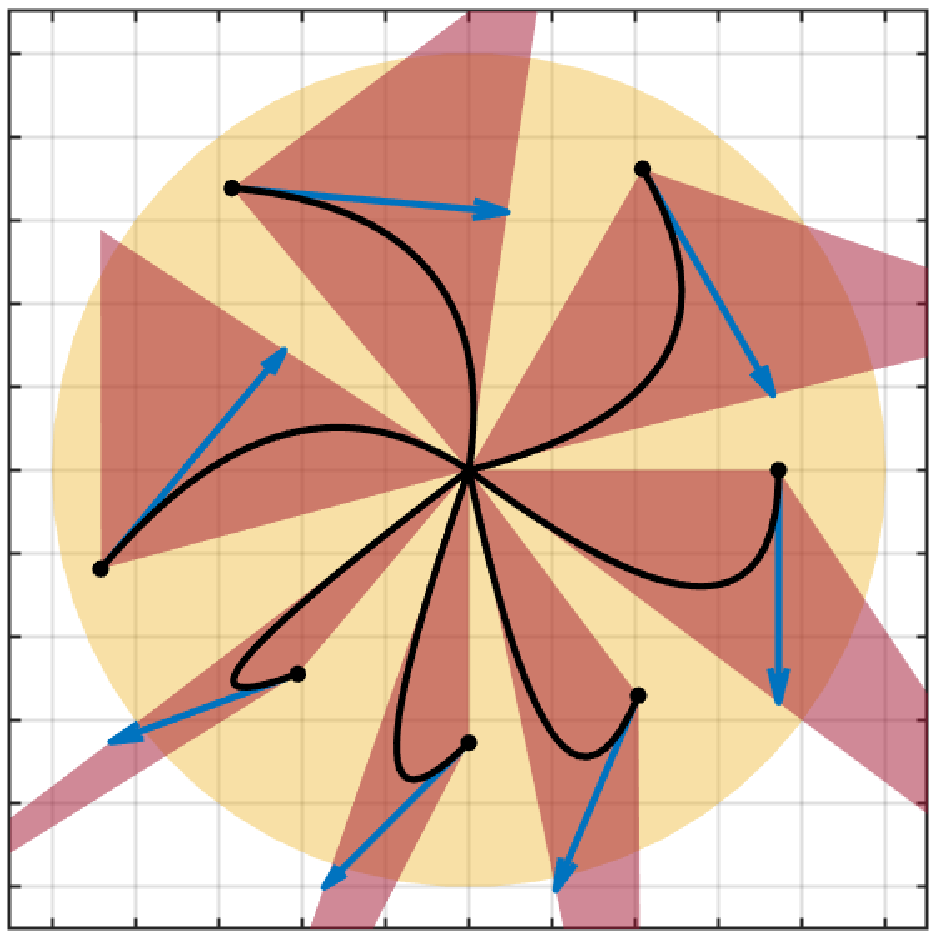} 
&
\includegraphics[width=0.16\textwidth]{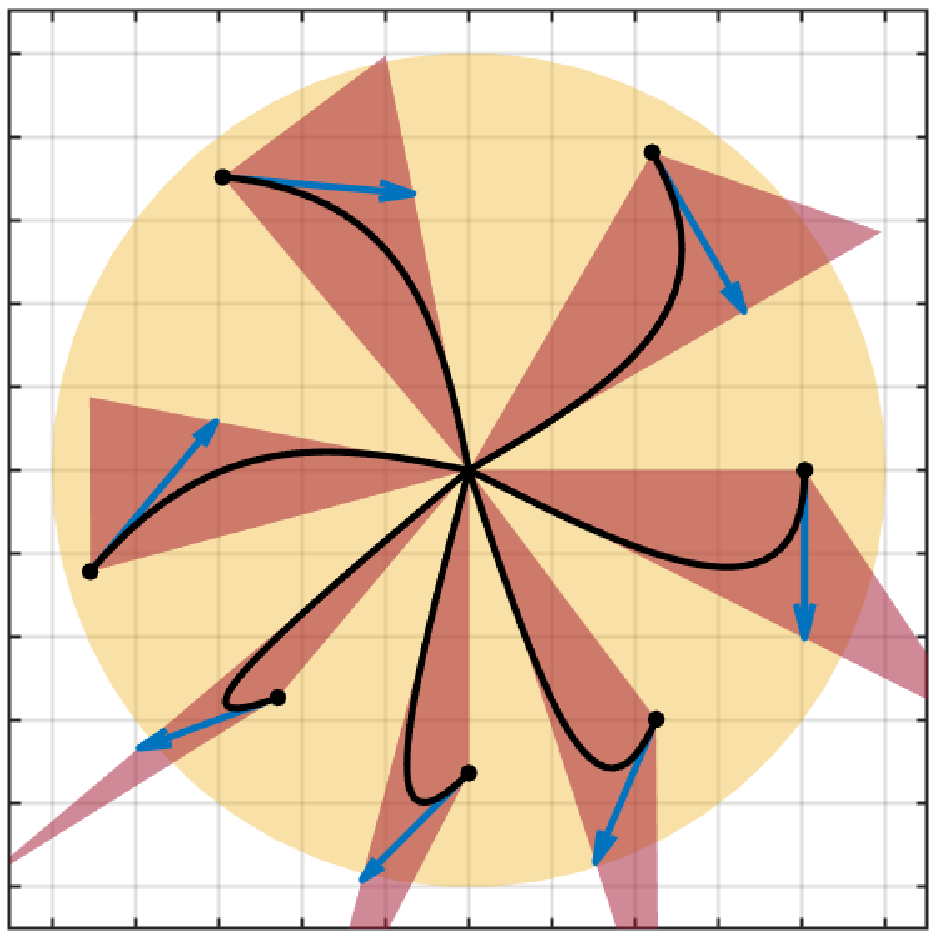} 
&
\includegraphics[width=0.16\textwidth]{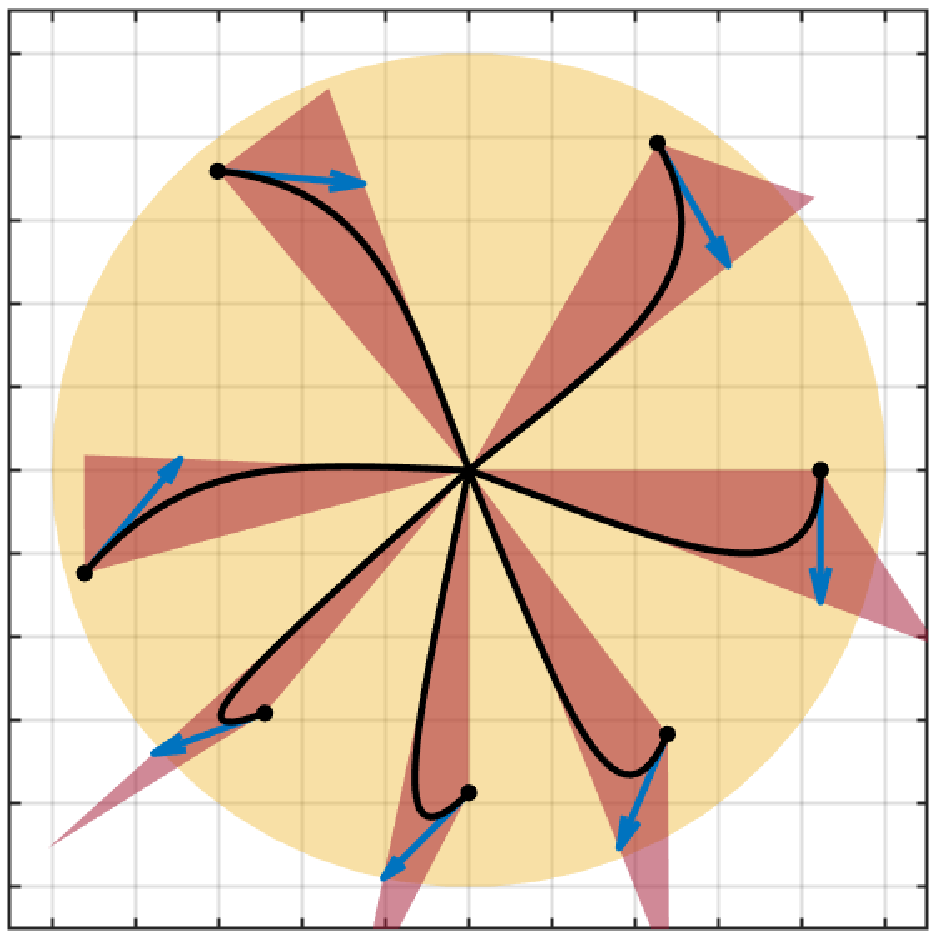} 
\end{tabular}
\caption{The effect of characteristic polynomial roots on (top) Vandermonde and (bottom) exponential simplexes that bound the motion trajectory of the second-order companion system: (left) $\slambdaset = (\lambda_1, \lambda_2)  = (-1, -2)$, (middle) $\slambdaset=(\lambda_1, \lambda_2) = (-1, -3)$, and (right) $\slambdaset = (\lambda_1, \lambda_2) = (-1,-4)$. Lyapunov ellipsoids (yellow) are constructed using $\mat{D} = \mat{I}_{\sorder \sdim \times \sorder \sdim}$, and all initial system states have the same Lyapunov level value. The system starts from an initial position denoted by a black dot and with a velocity shown by a blue arrow that is scaled by a factor of $\frac{\sgain_{1, \slambdaset_{\neg \max}}}{\sgain_{0, \slambdaset_{\neg \max}}} = \frac{1}{| \min(\slambdaset)| }$. }
\label{fig.VandermondeExponentialSimplexEigenValue}
\end{figure}

\vspace{-3mm}

\subsection{Projected Lyapunov Ellipsoids for Companion Systems}
\label{sec.ProjectedLyapunovEllipsoid}

To demonstrate the significance and accuracy of the proposed simplicial companion trajectory bounds, we consider invariant Lyapunov level sets that are widely used for constrained control and optimization of dynamical systems \cite{blanchini_Automatica1999}.
Using the Lyapunov stability \cite{chen_LinearSystemTheory1998} of linear state-space companion dynamics in \refeq{eq.StateSpaceCompanionDynamics}, a simple analytic ellipsoidal trajectory bound for companion systems can be built based on orthogonal projections of invariant sublevel sets of a quadratic Lyapunov function, as illustrated in \reffig{fig.LyapunovEllipsoidsDecayMatrix}.

\begin{proposition} \label{prop.LyapunovEllipsoid}
\emph{(Projected Lyapunov Ellipsoids)}
For any real Hurwitz companion matrix $\cmat_{\slambdaset}$ associated with eigenvalues $\slambdaset = (\slambda_1, \ldots, \slambda_\sorder) \in \C^{\sorder}$ with strictly negative real parts, the solution trajectory $\sposn(t)$ of the companion dynamics in \refeq{eq.CompanionDynamics} starting at $t =0$ from any initial state $\sstate_0 \in \R^{\sorder \sdim}$ is contained  in the projected Lypunov ellipsoid that is defined as
\begin{align}
\sposn(t) \in \elp\plist{\mat{0}, \mat{I}_{\sdim \times \sorder \sdim} \mat{P}^{-1} \mat{I}_{\sorder \sdim \times \sdim}, \norm{\sstate_0}_{\mat{P}}} \quad \forall t \geq 0
\end{align}
where $\elp(\elpctr, \elpmat, \elprad) \!:=\!\! \clist{\!\elpctr \!+ \!\elprad \elpmat^{\frac{1}{2}} \vect{x} \Big| \vect{x} \!\in\! \R^{\sdim}\!, \norm{\vect{x}} \!\leq\! 1\!}$ is the ellipsoid, centered at $\elpctr \in \R^{\sdim}$ and associated with a positive semidefinite matrix\reffn{fn.MatrixSquareRoot} $\elpmat \in \PSDM^{\sdim}$ and a nonnegative scalar $\elprad \geq 0$, and $\mat{P} \in \PDM^{\sorder\sdim}$ is a symmetric positive-definite Lyapunov matrix that uniquely satisfies the Lyapunov equation   
\begin{align}\label{eq.CompanionLyapunovEquation}
\tr{(\cmat_{\slambdaset}\otimes \mat{I}_{\sdim \times \sdim})} \mat{P} + \mat{P} (\cmat_{\slambdaset}\otimes \mat{I}_{\sdim \times \sdim}) + \tr{\mat{D}}\mat{D} = \mat{0} 
\end{align}
for some decay matrix $\mat{D} \in \R^{m \times \sorder \sdim}$  such that $(\cmat_{\slambdaset}\otimes \mat{I}_{\sdim \times \sdim})$ is observable.
Here, $\otimes$ denotes the Kronecker product,  $\mat{I}_{\sorder \sdim \times \sdim}$ is the $\sorder \sdim \times \sdim$ dimensional identity matrix, and $\norm{\vect{x}}_{\mat{W}} := \sqrt{\tr{\vect{x}} \mat{W} \vect{x}}= \norm{\mat{W}^{\frac{1}{2}} \vect{x}}$ is the weighted Euclidean norm associated with a positive definite matrix $\mat{W} \in \PDM^{n}$ and  $\norm{.}$ denotes the standard Euclidean norm.
\end{proposition}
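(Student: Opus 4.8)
The plan is to reduce the statement to the classical Lyapunov invariance argument for the first-order state-space realization \refeq{eq.StateSpaceCompanionDynamics} and then to push the resulting state-space ellipsoid through the linear projection onto the position coordinates. No genuinely new idea is needed; the work is in organizing the standard pieces and keeping the Kronecker-product/projection bookkeeping straight.

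First I would invoke the standard Lyapunov stability theorem for linear time-invariant systems \cite{chen_LinearSystemTheory1998}: since $\cmat_{\slambdaset}$ is Hurwitz, so is $\cmat_{\slambdaset}\otimes\mat{I}_{\sdim \times \sdim}$ (its eigenvalues are those of $\cmat_{\slambdaset}$, each with multiplicity $\sdim$, hence all with strictly negative real part), and observability of the pair $(\cmat_{\slambdaset}\otimes\mat{I}_{\sdim \times \sdim},\mat{D})$ guarantees that \refeq{eq.CompanionLyapunovEquation} has a unique solution $\mat{P}\in\PDM^{\sorder\sdim}$, with positive definiteness coming precisely from observability. Taking $V(\sstate):=\tr{\sstate}\mat{P}\sstate=\norm{\sstate}_{\mat{P}}^{2}$ as a Lyapunov function along the state trajectory $\sstate(t)$ of $\dot{\sstate}=(\cmat_{\slambdaset}\otimes\mat{I}_{\sdim \times \sdim})\sstate$, differentiating and substituting \refeq{eq.CompanionLyapunovEquation} gives $\dot V(\sstate(t))=-\norm{\mat{D}\sstate(t)}^{2}\leq 0$, so $V$ is non-increasing and $\norm{\sstate(t)}_{\mat{P}}\leq\norm{\sstate_0}_{\mat{P}}$ for all $t\geq 0$.

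Next I would translate this norm bound into ellipsoidal containment. From the definition $\elp(\elpctr,\elpmat,\elprad)=\clist{\elpctr+\elprad\elpmat^{\frac{1}{2}}\vect{x}\mid\norm{\vect{x}}\leq 1}$, the inequality $\norm{\sstate}_{\mat{P}}\leq\elprad$ is equivalent to $\sstate\in\elp(\mat{0},\mat{P}^{-1},\elprad)$ (take $\vect{x}=\frac{1}{\elprad}\mat{P}^{\frac{1}{2}}\sstate$ and use $\norm{\mat{P}^{\frac{1}{2}}\sstate}=\norm{\sstate}_{\mat{P}}$), hence $\sstate(t)\in\elp(\mat{0},\mat{P}^{-1},\norm{\sstate_0}_{\mat{P}})$ for all $t\geq 0$. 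Since the position variable is the linear image $\sposn(t)=\mat{I}_{\sdim \times \sorder \sdim}\sstate(t)$ of the state under the selection matrix $\mat{M}:=\mat{I}_{\sdim \times \sorder \sdim}$, which has full row rank and satisfies $\tr{\mat{M}}=\mat{I}_{\sorder \sdim \times \sdim}$, I would then apply the standard fact that a full-row-rank linear map carries an origin-centered ellipsoid to an origin-centered ellipsoid, $\mat{M}\,\elp(\mat{0},\elpmat,\elprad)=\elp(\mat{0},\mat{M}\elpmat\tr{\mat{M}},\elprad)$. With $\elpmat=\mat{P}^{-1}$ this yields $\sposn(t)\in\elp(\mat{0},\mat{I}_{\sdim \times \sorder \sdim}\mat{P}^{-1}\mat{I}_{\sorder \sdim \times \sdim},\norm{\sstate_0}_{\mat{P}})$, which is exactly the asserted bound; note the resulting shape matrix is the top-left $\sdim\times\sdim$ block of $\mat{P}^{-1}\in\PDM^{\sorder\sdim}$, hence positive definite and a legitimate ellipsoid matrix.

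The only step that is not pure bookkeeping is the ellipsoid-projection identity, and there only the containment direction is actually needed: given $\norm{\vect{x}}\leq 1$, one must exhibit $\vect{z}$ with $\norm{\vect{z}}\leq 1$ and $\mat{M}\mat{P}^{-\frac{1}{2}}\vect{x}=(\mat{M}\mat{P}^{-1}\tr{\mat{M}})^{\frac{1}{2}}\vect{z}$; writing $\mat{A}:=\mat{M}\mat{P}^{-\frac{1}{2}}$ (full row rank) forces $\vect{z}=(\mat{A}\tr{\mat{A}})^{-\frac{1}{2}}\mat{A}\vect{x}$, and then $\norm{\vect{z}}^{2}=\tr{\vect{x}}\,\tr{\mat{A}}(\mat{A}\tr{\mat{A}})^{-1}\mat{A}\,\vect{x}\leq\norm{\vect{x}}^{2}\leq 1$ because $\tr{\mat{A}}(\mat{A}\tr{\mat{A}})^{-1}\mat{A}$ is the orthogonal projector onto the row space of $\mat{A}$ and so has spectrum in $\{0,1\}$. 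Everything else — the Kronecker-product eigenvalue transfer, positive definiteness of $\mat{P}$ via observability (a cited result), the computation of $\dot V$, and the norm/ellipsoid equivalence — is routine, so I expect stating the projection identity cleanly to be the only point deserving care.
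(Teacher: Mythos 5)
Your proposal is correct and follows essentially the same route as the paper's proof: construct the quadratic Lyapunov function $V_{\mat{P}}(\sstate)=\tr{\sstate}\mat{P}\sstate$, show $\dot V_{\mat{P}}=-\norm{\mat{D}\sstate}^2\leq 0$ so the state trajectory stays inside $\elp(\mat{0},\mat{P}^{-1},\norm{\sstate_0}_{\mat{P}})$, and then map that state-space ellipsoid to position space via the selection matrix $\mat{I}_{\sdim\times\sorder\sdim}$. The only cosmetic difference is that you argue the final step through the affine-image-of-an-ellipsoid fact (cf. \reflem{lem.EllipsoidAffineTransformation}) with a direct containment argument, whereas the paper invokes its orthogonal-projection lemma \reflem{lem.EllipsoidOrthogonalProjection}; since $\mat{I}_{\sdim\times\sorder\sdim}$ has orthonormal rows the two are the same computation, and both yield the claimed shape matrix $\mat{I}_{\sdim\times\sorder\sdim}\mat{P}^{-1}\mat{I}_{\sorder\sdim\times\sdim}$.
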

\begin{proof}
See \refapp{app.LyapunovEllipsoid}.
\end{proof}

As expected, the accuracy of Lyapunov ellipsoids depends on the selection of the decay matrix $\mat{D}$.
In our numerical studies, we observe that the identity decay matrix $\mat{D} = \mat{I}_{\sorder \sdim \times \sorder \sdim}$ gives the best performance, as  seen in Figures \ref{fig.VandermondeExponentialSimplexEigenValue}-\ref{fig.LyapunovEllipsoidsDecayMatrix}.

\begin{figure}[t]
\begin{tabular}{@{}c@{\hspace{0.5mm}}c@{\hspace{0.5mm}}c@{}}
\includegraphics[width=0.16\textwidth]{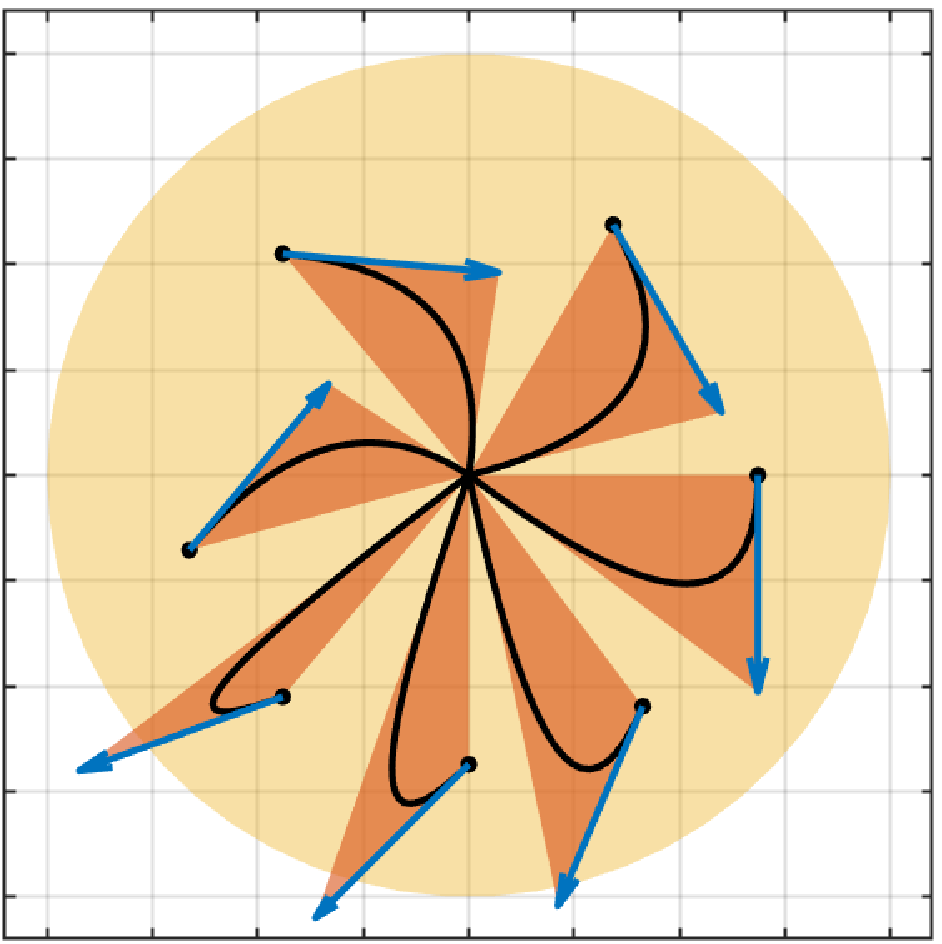} 
& 
\includegraphics[width=0.16\textwidth]{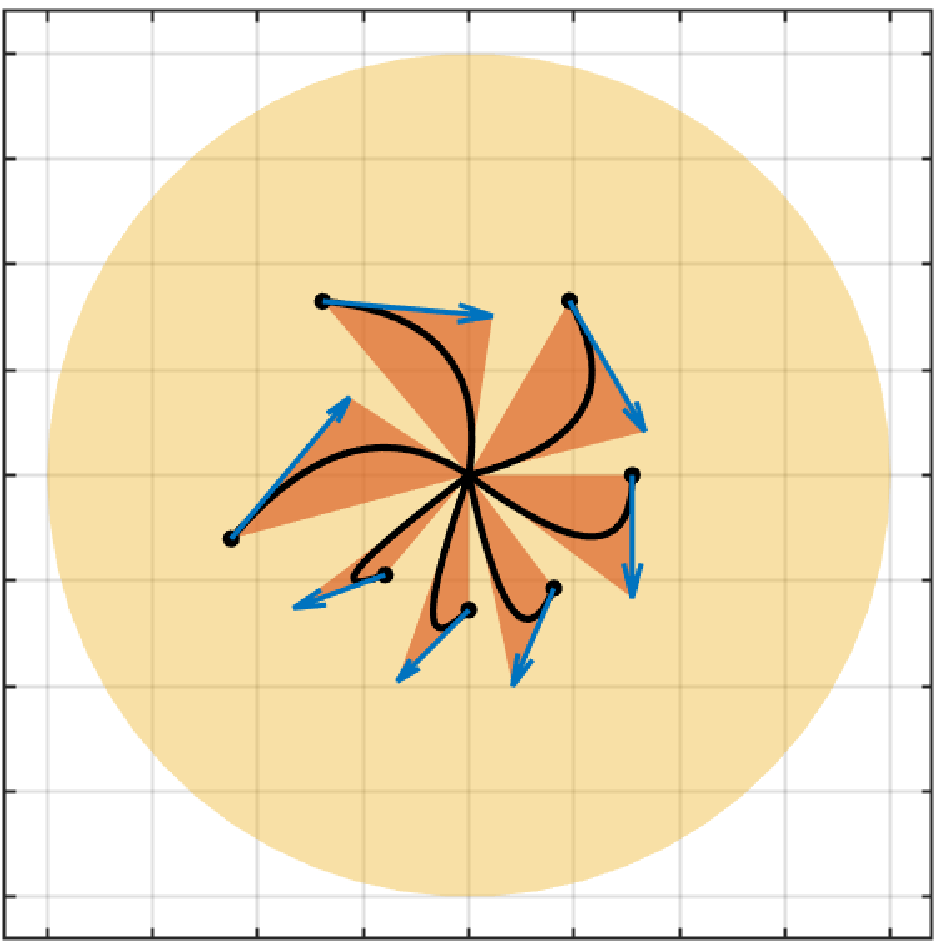} 
&
\includegraphics[width=0.16\textwidth]{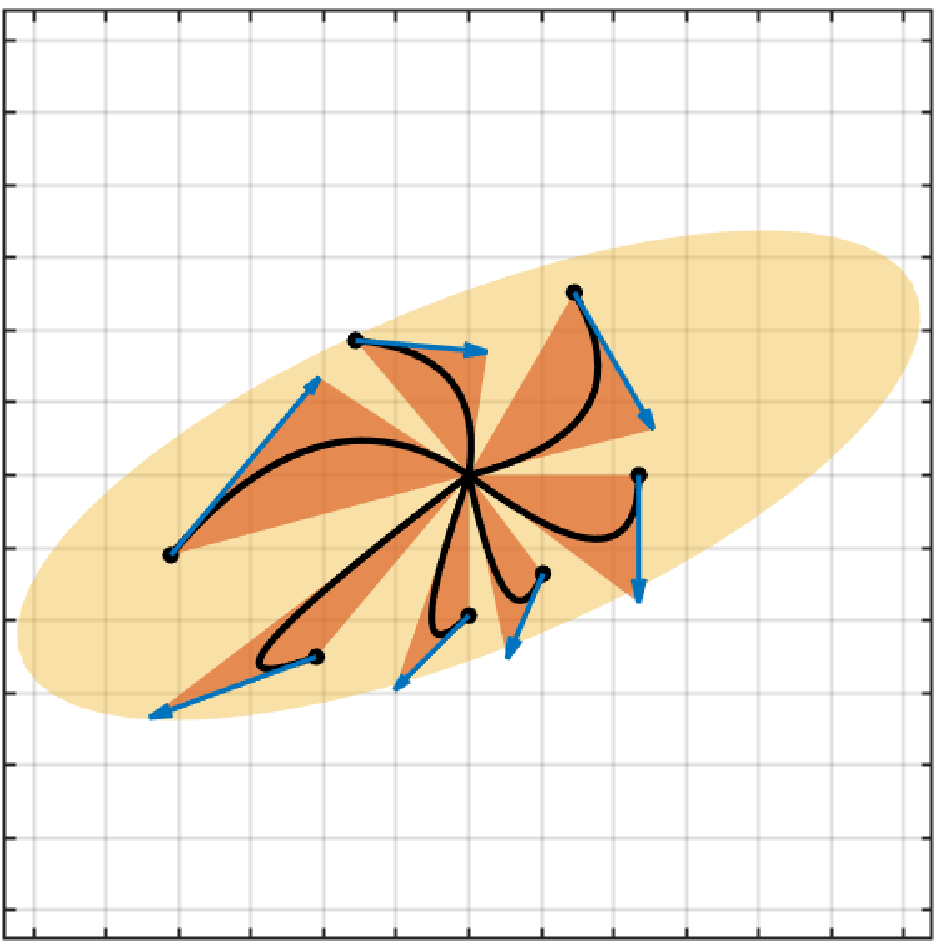} 
\end{tabular}
\caption{The effect of the decay matrix $\mat{D}$ on the projected Lyapunov ellipsoids (yellow) that bound the motion trajectory of the second-order companion system with $\slambdaset =(\lambda_1, \slambda_2) = (-1, -2)$: (left) $\mat{D} = \diag([0,1]) \otimes \diag([1, 1])$, (middle) $\mat{D} = \diag([1,0]) \otimes \diag([1, 1])$, (right) $\mat{D} = \diag([0,1]) \otimes \scalebox{0.6}{$\begin{bmatrix}1.0 & -0.5 \\ -0.5 & 2.0\end{bmatrix}$})$, and see \reffig{fig.VandermondeExponentialSimplexEigenValue} (top, left) for $\mat{D} = \diag([1,1]) \otimes \diag([1, 1])$.
The second-order companion system starts from an initial position denoted by a black dot with a velocity shown by a blue arrow at a shared Lyapunov level value. 
Vandermonde simplexes (orange) capture the companion motion more accurately compared to Lyapunov ellipsoids (yellow).}
\label{fig.LyapunovEllipsoidsDecayMatrix}
\end{figure}

\addtocounter{footnote}{1}\footnotetext{\label{fn.MatrixSquareRoot}The unique symmetric positive-definite square-root of $\elpmat$ that satisfies $\elpmat = \elpmat^{\frac{1}{2}} \tr{(\elpmat^{\frac{1}{2}})\!}$ can be determined as $\elpmat^{\frac{1}{2}}\!=\!\mat{V}\diag\plist{\sqrt{\sigma_1}, \ldots, \sqrt{\sigma_\sorder}} \tr{\mat{V}}$ using the singular value decomposition $\elpmat = \mat{V} \diag\plist{\sigma_1, \ldots, \sigma_n} \tr{\mat{V}} $, where $\diag\plist{\sigma_1, \ldots, \sigma_n}$ is the diagonal matrix with  elements ${\sigma_1, \ldots, \sigma_n}$.}

\section{Numerical Simulations}
\label{sec.NumericalSimulations}

In this section, we systematically investigate the accuracy of Vandermonde and exponential simplexes compared to Lyapunov ellipsoids in capturing companion system motion in extensive numerical simulations. 
We consider high-order companion systems with different differential order $\sorder$ moving in the 2D and 3D Euclidean spaces starting from random initial states that are uniformly sampled over the unit hyper-cube $[-1, 1]^{\sorder \sdim}$.
As a performance measure, we use the area/volume ratio of Vandermonde and exponential simplexes relative to Lyapunov ellipsoids.
We set the decay matrix as the identity matrix $\mat{D} = \mat{I}_{\sorder \sdim \times \sorder \sdim}$ because it yields the best performance for Lyapunov ellipsoids as seen in Figures \ref{fig.VandermondeExponentialSimplexEigenValue}-\ref{fig.LyapunovEllipsoidsDecayMatrix}.

\begin{figure}[b]
\centering
\begin{tabular}{@{}c@{}c@{}}
\includegraphics[width=0.49\columnwidth]{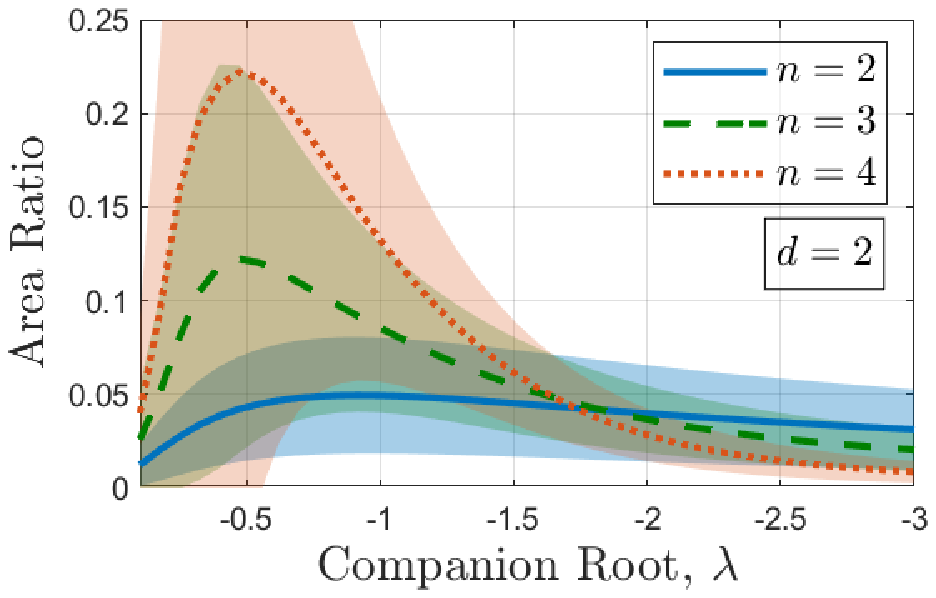}
&
\includegraphics[width=0.49\columnwidth]{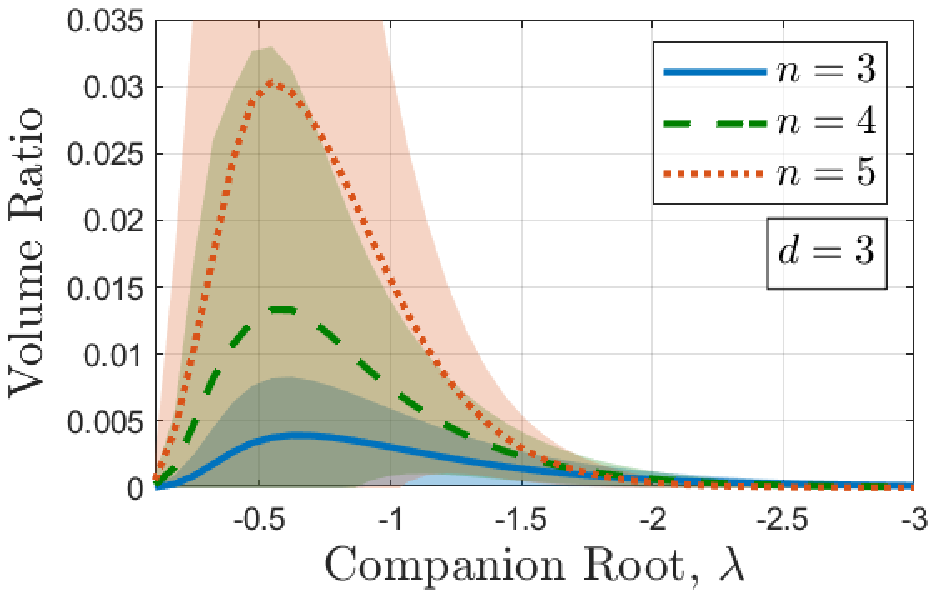}
\\
\includegraphics[width=0.49\columnwidth]{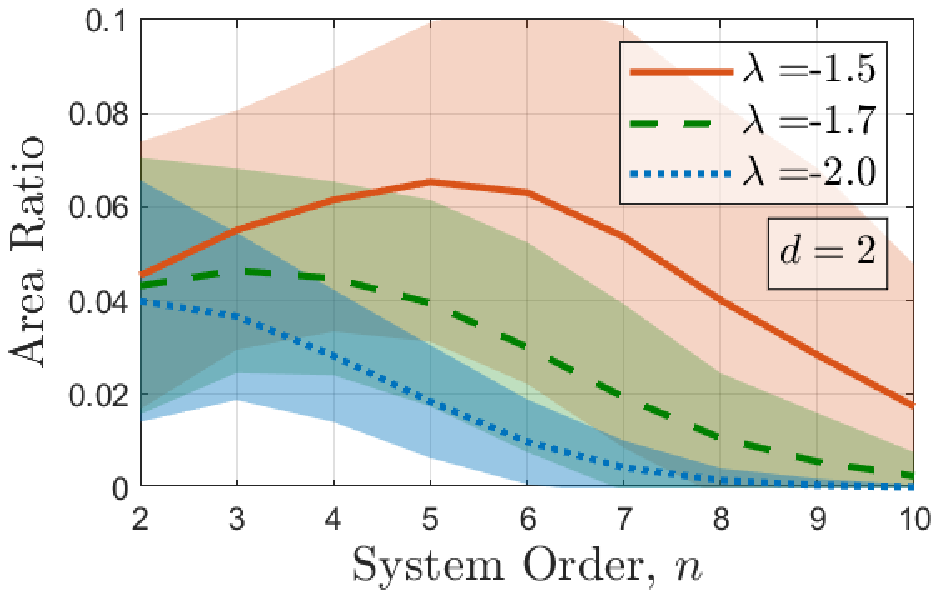}
& 
\includegraphics[width=0.49\columnwidth]{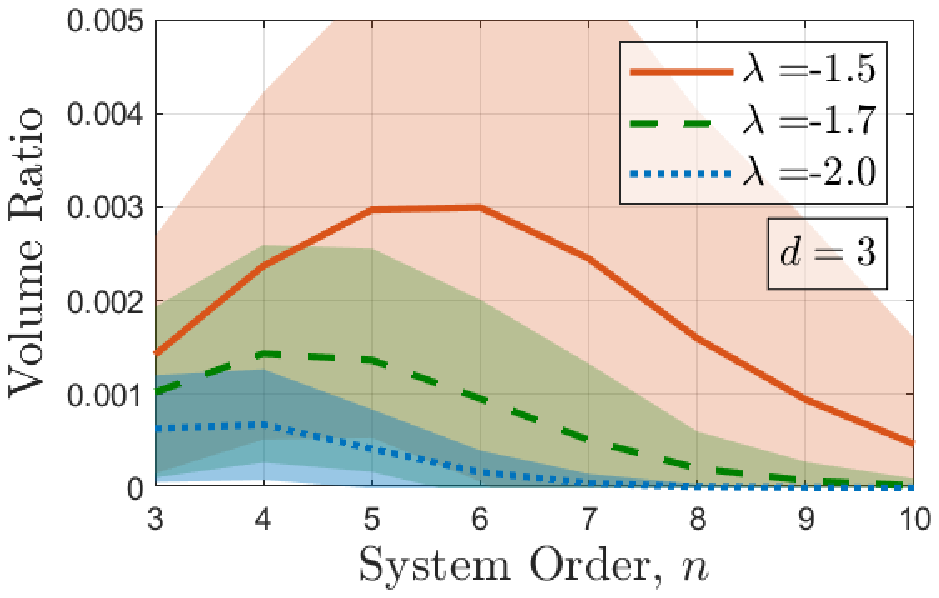}
\end{tabular}
\vspace{-2mm}
\caption{Accuracy (area/volume ratio) of Vandemonde simplexes relative to Lyapunov ellipsoids versus (top) the companion characteristic polynomial roots and (bottom) companion system order in the (left) 2D and (right) 3D Euclidean space. Here, the companion characteristic polynomial roots are assumed to be identical and the decay matrix is set to be $\mat{D} = \mat{I}_{\sorder \sdim \times \sorder \sdim}$.}
\label{fig.VandermondeSimplexLyapunovEllipsoidAccuracy}
\end{figure}

We first compare the accuracy of Vandermonde simplexes relative to Lyapunov ellipsoids in \reffig{fig.VandermondeSimplexLyapunovEllipsoidAccuracy} by investigating the role of characteristic polynomial roots and system order on their performance. 
We observe that Vandermonde simplexes outperform Lyapunov ellipsoids in defining an accurate bound on companion trajectory with a smaller volume/area. 
For example, Vandermonde simplexes offer at least an order (respectively, two orders) of magnitude improvement for the area (volume, respectively) of Lyapunov ellipsoids on average for companion systems when the characteristic polynomial roots are larger than unity in magnitude. 
We also see that the accuracy of Vandermonde simplexes relative to Lyapunov ellipsoids is often increasing with increasing characteristic polynomial root and system order when the magnitude of the characteristic polynomial root is larger than unity.

To compare all three convex companion trajectory bounds with respect to each other in \reffig{fig.CompanionTrajectoryBoundAccuracy}, we consider companion systems whose characteristic roots are distinct and uniformly distributed over the range $[-0.5, \slambda]$ for a selection of $\slambda$ in  $[-3.0, -0.1]$. We particularly select the range $[-0.5, \slambda]$ to preserve the peak performance of Lyapunov ellipsoids relative to Vandermonde simplexes observed in \reffig{fig.VandermondeSimplexLyapunovEllipsoidAccuracy}. 
We again observe that the accuracy (area/volume ratio) of Vandermonde simplexes relative to Lyapunov ellipsoids is increasing with the increasing characteristic polynomial root $\slambda$ for $|\slambda| \geq 0.5$.
Exponential simplexes show a similar trend like Vandermonde simplexes away from their singularity where the difference between the characteristic polynomial roots becomes small.
As expected, exponential simplexes are uninformative around their singularity. 
The performance of exponential simplexes significantly depends on the system order and the space dimension.
An interesting observation in \mbox{\reffig{fig.CompanionTrajectoryBoundAccuracy}~(bottom)} is that Vandermonde and exponential simplexes are strongly related to each other when the system order and the space dimension are the same (i.e., $\sorder = \sdim$), which can be explained by that the fact that  both Vandermonde and exponential simplexes are simplex (i.e., hyper-triangle) shaped for $\sorder = \sdim$, see \reffig{fig.VandermondeExponentialSimplexEigenValue}, and their vertices are related by a linear transformation, see \refeq{eq.VandermondeSimplex} and \refeq{eq.ExponentialSimplex}. 
In summary, we conclude that Vandermonde simplexes significantly outperform exponential simplexes and Lyapunov ellipsoids in capturing companion system motion because they do not suffer from the conservatism of invariant Lyapunov sets and the singularity of exponential simplexes.

\begin{figure}[t]
\centering
\begin{tabular}{@{}c@{}c@{}}
\includegraphics[width=0.49\columnwidth]{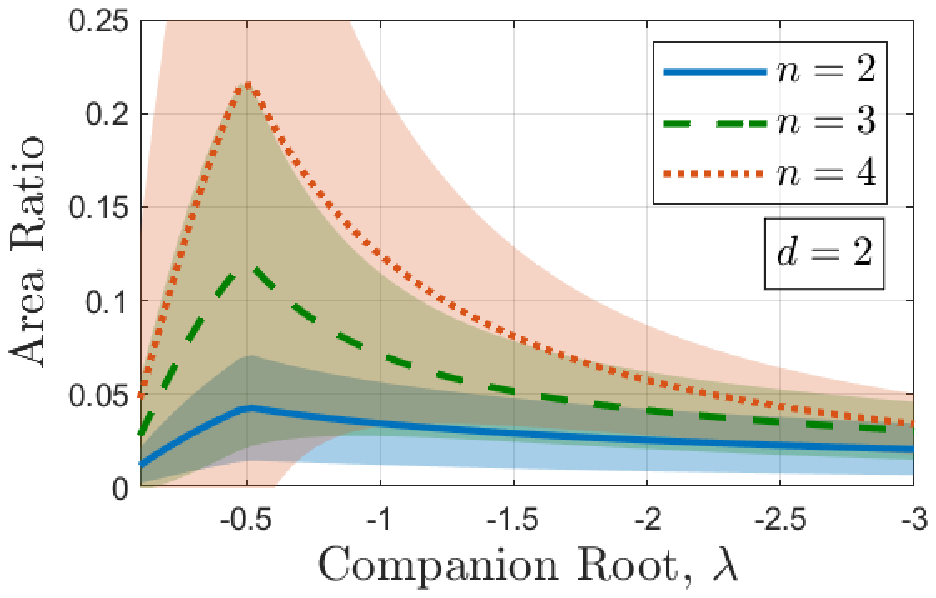} 
&
\includegraphics[width=0.49\columnwidth]{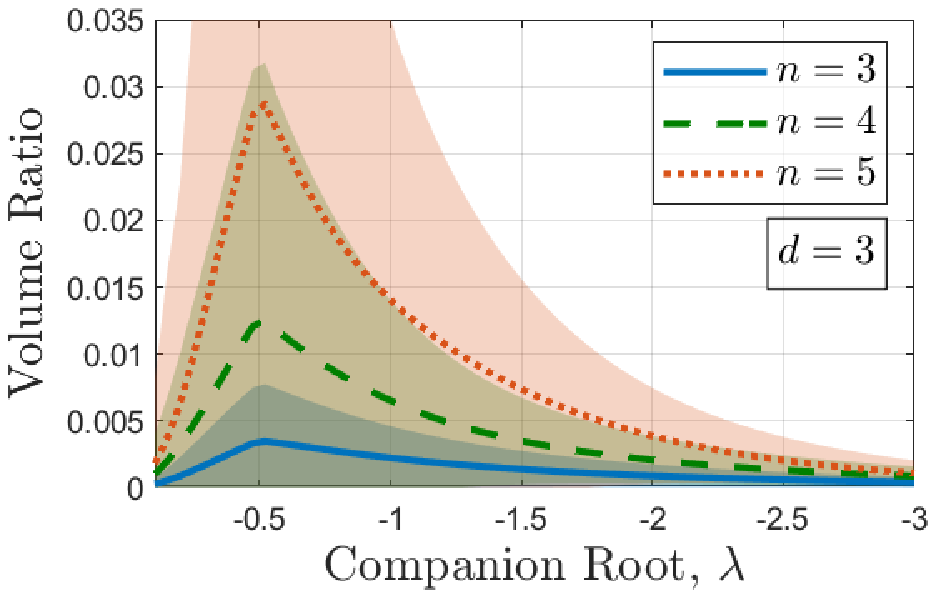}
\\ 
\includegraphics[width=0.49\columnwidth]{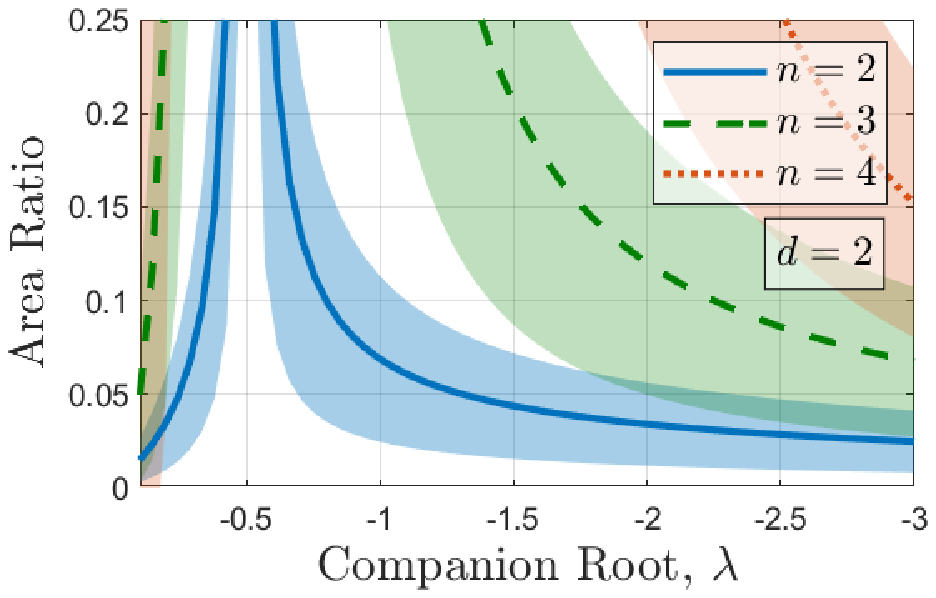} 
&
\includegraphics[width=0.49\columnwidth]{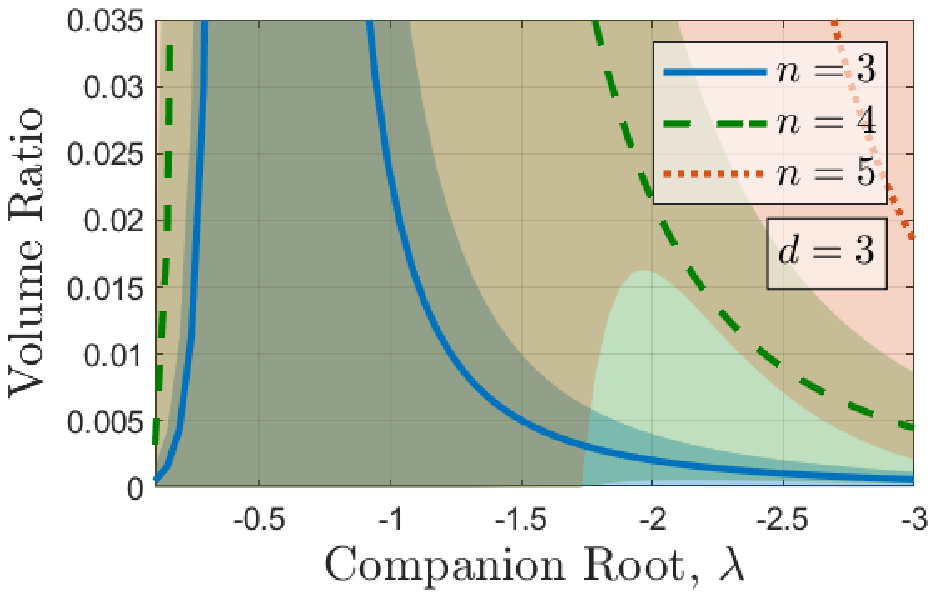} 
\\
\includegraphics[width=0.49\columnwidth]{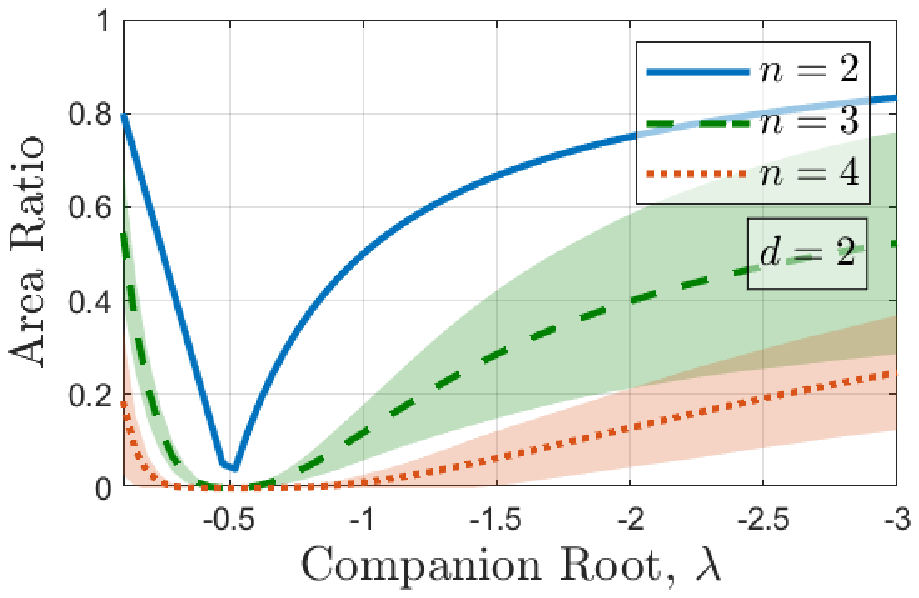} 
&
\includegraphics[width=0.49\columnwidth]{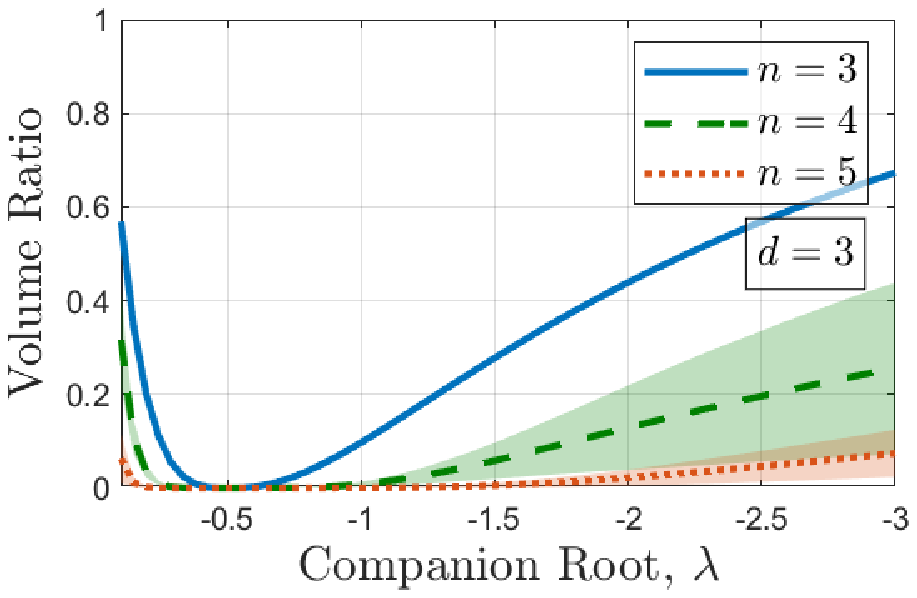}  
\end{tabular}
\vspace{-2mm}
\caption{Accuracy (area/volume ratio) of convex companion trajectory bounds in the (left) 2D and (right) 3D Euclidean space: (top) Vandermonde Simplex vs. Lyapunov Ellipsoid, (middle) Exponential Simplex vs. Lyapunov Ellipsoid, (bottom) Vandermonde Simplex vs. Exponential Simplex.
Here, the companion characteristic polynomial roots are assumed to be uniformly distributed over $[-0.5, \slambda]$, and the decay matrix is set to be $\mat{D} = \mat{I}_{\sorder \sdim \times \sorder \sdim}$.}
\label{fig.CompanionTrajectoryBoundAccuracy}
\end{figure}

\section{Conclusion}
\label{sec.Conclusion}

In this paper, we present two simple analytic complex trajectory bounds, Vandermonde and exponential simplexes, for companion systems that can be used for fast and accurate safety assessment in motion planning and control of nonlinear systems via feedback linearization.
As an alternative to the standard exponential basis functions, we introduce a novel family of Vandermonde basis functions for representing  and understanding companion motion trajectory that allows for new insights into companion system motion.
Using the common characteristic features (nonnegativity, relative ordering, and boundedness) of Vandermonde and exponential basis functions, we introduce a generic procedure to construct convex simplicial trajectory bounds for companion systems, such as Vandermonde and exponential simplexes. 
In extensive numerical simulations, we demonstrate that Vandermonde simplexes show superior performance in describing companion motion with lower spatial complexity (i.e., area and volume) compared to exponential simplexes and Lyapunov ellipsoids.       
We believe that this is yet another evidence supporting that invariant (Lyapunov) sets are conservative for motion prediction, safety assessment, and constraint satisfaction in motion planning and control.
Hence, the design of accurate motion bounds by abandoning set invariance but exploiting motion representation and motion control has significant future promises.

Our current work in progress focuses on accurate and fast feedback motion prediction (i.e., finding motion trajectory bounds) for complex robotic systems (e.g., using Vandermonde simplexes and feedback linearization) and their application for safe robot  motion planning and control \cite{isleyen_vandewouw_arslan_RAL2022, isleyen_vandewouw_arslan_arXiv2022} and path-following control \cite{arslan_arXiv2022}.
A promising research direction is applying Vandermonde simplexes in reachability analysis and model predictive control as a terminal condition for ensuring recursive feasibility with reduced conservatism compared to Lyapunov-like invariant terminal conditions.


\appendices 

\section{Proofs}
\label{sec.Proofs}

\subsection{Proof of \refthm{thm.VandermondeSimplex}}
\label{app.VandermondeSimplex}

\begin{proof}
The result is a direct consequence of \refprop{prop.SimplicialTrajectoryBound} because the companion trajectory can be explained in the Vandermone basis as $\sposn(t) = \sum_{k=0}^{\sorder-1} \vfunc_{k, \slambdaset}(t) \sposn_0^{(k)}$ (see \refprop{prop.CompanionTrajectoryVandermondeBasis}) and the nonnegative  Vandermonde basis functions (see \refprop{prop.VandermondeBasisNonnegativity}) are ordered as $\sgain_{k,\slambdaset_{\neg \max}} \vfunc_{k-1, \slambdaset} (t) \geq \sgain_{k-1, \slambdaset_{\neg \max}} \vfunc_{k,\slambdaset}(t)$ (see \refprop{prop.RelativeVandemondeBasisBound}) and bounded as $\vfunc_{0,\slambdaset}(t) \leq 1$ (\refprop{prop.ZerothVandermondeBasisFunction}).
Note that the characteristic polynomial roots do not need to be distinct due to the continuity of companion system dynamics and Vandermonde simplex vertices with respect to the characteristic roots.
\end{proof}

\subsection{Proof of \refprop{prop.VandermondeDynamicsRecursion}}
\label{app.VandermondeDynamicsRecursion}

\begin{proof}
Using the explicit form of the Vandermonde basis function $\vfunc_{k,\slambdaset}(t)$ in \refeq{eq.VandermondeSumOfExponential} and the recursive definition of companion coefficients $\sgain_{k, \slambdaset}$ in \refeq{eq.CompanionCoefficientRecursion}, one can verify the result as follows:
\begin{align*}
\dot{\vfunc}_{k,\slambdaset}(t) \hspace{-6mm}&\hspace{6mm}=  (-1)^{\sorder - 1} \sum_{i=1}^{\sorder} \frac{\slambda_i \sgain_{k, \slambdaset_{\neg i}}}{\prod_{j \neq i} (\slambda_j - \slambda_i)} e^{\slambda_i t} 
\\
&= \slambda_l \underbrace{(-1)^{\sorder - 1} \sum_{i=1}^{\sorder} \frac{\sgain_{k, \slambdaset_{\neg i}}}{\prod_{j \neq i} (\slambda_j - \slambda_i)} e^{\slambda_i t}}_{\vfunc_{k, \slambdaset}(t)} \nonumber 
\\ 
&\hspace{10mm} +  (-1)^{\sorder - 1} \sum_{i=1}^{\sorder} \frac{(\slambda_i - \slambda_l) \sgain_{k, \slambdaset_{\neg i}}}{\prod_{j \neq i} (\slambda_j - \slambda_i)} e^{\slambda_i t}
\\
& = \slambda_l \vfunc_{k, \slambdaset}(t) +  (-1)^{\sorder - 2} \!\sum_{i\neq l} \frac{\sgain_{k, \slambdaset_{\neg i}}}{\prod_{\substack{j \neq i\\ j \neq l}} (\slambda_j - \slambda_i)} e^{\slambda_i t}
\\
& = \slambda_l \vfunc_{k, \slambdaset}(t) +  (-1)^{\sorder - 2} \!\sum_{i\neq l} \frac{-\slambda_l \sgain_{k, \slambdaset_{\neg i \neg l}} + \sgain_{k-1, \slambda_{\neg i \neg l}}}{\prod_{\substack{j \neq i\\ j \neq l}} (\slambda_j - \slambda_i)} e^{\slambda_i t} \!\!
\\
& = \slambda_l \vfunc_{k, \slambdaset}(t) - \slambda_l \vfunc_{k, \slambdaset_{\neg l}}(t) + \vfunc_{k-1, \slambdaset_{\neg l}} (t)  \qedhere
\end{align*}
\end{proof}

\subsection{Proof of \refprop{prop.VandermondeBasisNonnegativity}}
\label{app.VandermondeBasisNonnegativity}

\begin{proof}
Note that, by convention, $\vfunc_{k,\slambdaset}(t) = 0$ for all $k< 0$ and $k \geq |\slambdaset| = \sorder$. 
One also  has from \refprop{prop.VandermondeDynamics} that $\vfunc_{k, \slambdaset}(0) \geq 0$. 
Hence, we provide proof by induction:

$\bullet$ (Base Case) If $|\slambdaset| = 0$, then  the result trivially holds since $\vfunc_{k,\slambdaset}(t) = 0$ for all $k < 0$ and $k \geq |\slambdaset|=0$.

$\bullet$ (Induction) Otherwise, let $l \in \clist{1, \ldots, |\slambdaset|}$ and suppose that $\vfunc_{k, \slambdaset_{\neg l}}(t) \geq 0$  for any $k$. Then, observe from \refeq{eq.VandermondeDynamicsRecursion} that for any $k = 0, \ldots, \sorder-1 $ and $l = 1, \ldots, \sorder$
\begin{align*}
\dot{\vfunc}_{k,\slambdaset}(t) &= \slambda_l \vfunc_{k,\slambdaset}(t) \underbrace{- \, \slambda_l \vfunc_{k,\slambdaset_{\neg l}}(t)}_{\substack{\geq  0 \\ \text{by induction}\\ \text{and $\slambda_l < 0$}}} + \underbrace{\vfunc_{k-1, \slambdaset_{\neg l}}(t)}_{\substack{\geq 0\\ \text{by induction}}} \\
&\geq \slambda_l \vfunc_{k,\slambdaset}(t)
\end{align*}
Accordingly, one can conclude using the comparison lemma \cite{khalil_NonlinearSystems2001} that  $\vfunc_{k,\slambdaset}(t)\geq \vfunc_{k,\slambdaset}(0) e^{\slambda_l t} \geq 0 $ for all $t\geq 0$ since $\vfunc_{k,\slambdaset}(0)\geq 0$, which completes the proof. 
\end{proof}

\subsection{Proof of \refprop{prop.RelativeVandemondeBasisBound}}
\label{app.RelativeVandemondeBasisBound}

\begin{proof}
We shall provide proof by induction using the recursive Vandermonde dynamics in \refprop{prop.VandermondeDynamicsRecursion} and the companion coefficient recursion in \reflem{lem.CompanionCoefficientRecursion}.

$\bullet$ Base Case ($\sorder = 2$): Consider a second-order companion system with $\slambdaset = \plist{\slambda_i, \slambda_j}$. Note that there is only one relative bound relation between $\vfunc_{0, \slambdaset}$ and $\vfunc_{1, \slambdaset}$. 
Also observe that $\sgain_{0, \slambdaset} \!=\! \slambda_i \slambda_j$, $\sgain_{1, \slambdaset} \!=\! - (\slambda_i \!+\! \slambda_j)$,  $\sgain_{0, \slambdaset_{\neg i}} \!=\! - \slambda_j$ and $\sgain_{1, \slambdaset_{\neg i}} \!=\! 1$.
Hence, the difference $\sgain_{1, \slambdaset_{\neg i}} \vfunc_{0, \slambdaset} - \sgain_{0, \slambdaset_{\neg i}}\vfunc_{1, \slambdaset}$ becomes
\begin{align*}
\sgain_{1, \slambdaset_{\neg i}} \vfunc_{0, \slambdaset} - \sgain_{0, \slambdaset_{\neg i}}\vfunc_{1, \slambdaset} = \vfunc_{0, \slambdaset} + \slambda_j \vfunc_{1, \slambdaset}
\end{align*}
and its time rate of change can be obtained using \refeq{eq.VandermondeDynamics} as
\begin{align*}
&\frac{\diff }{\diff t} \plist{\sgain_{1, \slambdaset_{\neg i}} \vfunc_{0, \slambdaset} - \sgain_{0, \slambdaset_{\neg i}}\vfunc_{1, \slambdaset}} = \dot{\vfunc}_{0, \slambdaset} + \slambda_j\dot{\vfunc}_{1, \slambdaset} 
\\
& \hspace{15mm} =  \plist{\vfunc_{-1, \slambdaset} - \sgain_{0, \slambdaset} \vfunc_{1, \slambdaset}} + \slambda_j \plist{\vfunc_{0,\slambdaset} - \sgain_{1,\slambdaset} \vfunc_{1, \slambdaset}}   
\\
& \hspace{15mm} = - \slambda_i \slambda_j \vfunc_{1, \slambdaset} + \slambda_j(\vfunc_{0, \slambdaset} + (\slambda_i + \slambda_j) \vfunc_{1, \slambdaset})
\\
& \hspace{15mm} =  \slambda_j(\vfunc_{0, \slambdaset} +  \slambda_j \vfunc_{1, \slambdaset})
\\
& \hspace{15mm} = \slambda_j \plist{\sgain_{1, \slambdaset_{\neg i}} \vfunc_{0, \slambdaset} - \sgain_{0, \slambdaset_{\neg i}}\vfunc_{1, \slambdaset}}
\end{align*}
Therefore, since $\vfunc_{0, \slambdaset}(0) +  \slambda_j \vfunc_{1, \slambdaset}(0) = 1$ (see \refprop{prop.VandermondeDynamics}), we have
\begin{align*}
\sgain_{1, \slambdaset_{\neg i}} \vfunc_{0, \slambdaset} - \sgain_{0, \slambdaset_{\neg i}}\vfunc_{1, \slambdaset} = \vfunc_{0, \slambdaset} +  \slambda_j \vfunc_{1, \slambdaset} = e^{\slambda_j t} \geq 0.
\end{align*}

$\bullet$ Induction Step ($\sorder > 2$): Using the recursive Vandermonde dynamics in \refprop{prop.VandermondeDynamicsRecursion}, we observe that for any $j \neq i$, the time rate of change of the difference $\sgain_{k, \slambdaset_{\neg i}} \vfunc_{k-1, \slambdaset} - \sgain_{k-1, \slambdaset_{\neg i}}\vfunc_{k,\slambdaset} $ satisfies  
\begin{align*}
\hspace{10mm}& \hspace{-10mm}\frac{\diff}{\diff t} \plist{\sgain_{k, \slambdaset_{\neg i}} \vfunc_{k-1, \slambdaset} - \sgain_{k-1, \slambdaset_{\neg i}}\vfunc_{k,\slambdaset}} \\
& = \sgain_{k, \slambdaset_{\neg i}} \dot{\vfunc}_{k-1, \slambdaset} -  \sgain_{k-1, \slambdaset_{\neg i}} \dot{\vfunc}_{k,\slambdaset} \nonumber 
\\
&= \sgain_{k, \slambdaset_{\neg i}} \plist{\slambda_{j} \vfunc_{k-1, \slambdaset} - \slambda_{j} \vfunc_{k-1, \slambdaset_{\neg j}} + \vfunc_{k-2, \slambdaset_{\neg j}}} \nonumber 
\\
& \quad\quad - \sgain_{k-1, \slambdaset_{\neg i}} \plist{\slambda_{j} \vfunc_{k, \slambdaset} - \slambda_{j} \vfunc_{k, \slambdaset_{\neg j}} + \vfunc_{k-1, \slambdaset_{\neg j}}} \nonumber
\\
&= \slambda_{j}\plist{\sgain_{k, \slambdaset_{\neg i}} \vfunc_{k-1, \slambdaset} - \sgain_{k-1, \slambdaset_{\neg i}}\vfunc_{k,\slambdaset}}  + \Delta_{k} \nonumber
\end{align*}
where $\slambdaset_{\neg j} = \plist{\slambda_1, \ldots, \slambda_{j-1}, \slambda_{j+1}, \ldots, \slambda_{\sorder}}$, and the term $\Delta_k$ is defined as  
\begin{align}
\Delta_k & =  \slambda_{j} \sgain_{k-1,\slambdaset_{\neg i}} \vfunc_{k, \slambdaset_{\neg j}} + \sgain_{k, \slambdaset_{\neg i}} \vfunc_{k-2, \slambdaset_{\neg j}} \nonumber 
\\
& \hspace{15mm}  -\plist{\slambda_{j} \sgain_{k, \slambdaset_{\neg i}} + \sgain_{k-1, \slambdaset_{\neg_i}}} \vfunc_{k-1, \slambdaset_{\neg j}}. \nonumber
\end{align}
As an induction hypothesis,  $\vfunc_{k-2, \slambdaset_{\neg j}}\geq \frac{\sgain_{k-2, \slambdaset_{\neg i \neg j}}}{\sgain_{k-1, \slambdaset_{\neg i \neg j}}} \vfunc_{k-1, \slambdaset_{\neg j}}$ where $\sgain_{k-1, \slambdaset_{\neg i \neg j}} >0$  for any $k = 1, \ldots, \sorder -1$. 
Hence, we can find a lower bound on $\Delta_k$ as
\begin{align}
\Delta_k & \geq  \slambda_{j} \sgain_{k-1,\slambdaset_{\neg i}} \vfunc_{k, \slambdaset_{\neg j}} \nonumber 
\\
& \quad -\!\plist{\!\slambda_{j} \sgain_{k, \slambdaset_{\neg i}} \!+ \sgain_{k-1, \slambdaset_{\neg i}} \! - \sgain_{k, \slambdaset_{\neg i}}\frac{\sgain_{k-2, \slambdaset_{\neg i \neg j}}}{\sgain_{k-1, \slambdaset_{\neg i \neg j}}}\!} \vfunc_{k-1, \slambdaset_{\neg j}}.\nonumber
\end{align}

Similarly, applying another induction hypothesis of  $\vfunc_{k-1, \slambdaset_{\neg j}}\geq \frac{\sgain_{k-1, \slambdaset_{\neg i \neg j}}}{\sgain_{k, \slambdaset_{\neg i \neg j}}} \vfunc_{k, \slambdaset_{\neg j}}$, where $\sgain_{k, \slambdaset_{\neg i, \neg j}} > 0$ for all $k = 1, \ldots, \sorder -2$, we obtain for $k = 1, \ldots, \sorder -2$ that
\begin{align}
\Delta_k & \geq \vfunc_{k, \slambdaset_{\neg j}} \frac{\sgain_{k-1, \slambdaset_{\neg i}}}{\sgain_{k, \slambdaset_{\neg i \neg j}}}  \underbrace{\plist{\slambda_{j}  \sgain_{k, \slambdaset_{\neg i \neg j}}  -   \sgain_{k-1, \slambdaset_{\neg i \neg j}}}}_{= -\sgain_{k, \slambdaset_{\neg i}} \nonumber
}
\\
&\quad \quad +   \vfunc_{k, \slambdaset_{\neg j}} \frac{\sgain_{k, \slambdaset_{\neg i}}}{\sgain_{k, \slambdaset_{\neg i \neg j}}}    \underbrace{ \plist{- \slambda_{j}  \sgain_{k-1, \slambdaset_{\neg i \neg j}} +  \sgain_{k-2, \slambdaset_{\neg i \neg j}}}}_{= \sgain_{k-1, \slambdaset_{\neg i}}} \nonumber  
\\
& = 0 \nonumber
\end{align}
where the equality is due to the recursion of companion coefficients in \reflem{lem.CompanionCoefficientRecursion}.
For $k = \sorder -1$, the term $\Delta_{k}$ is still bounded below by zero as can be seen below
\begin{align}
\Delta_{\sorder -1} &\geq \sgain_{\sorder-2,\slambdaset_{\neg i}} \underbrace{\vfunc_{\sorder-1, \slambdaset_{\neg i \neg j }}}_{=0} \nonumber 
\\
& \hspace{-7mm} -\!\plist{\!\slambda_{j} \sgain_{\sorder-1, \slambdaset_{\neg i}} \! + \sgain_{\sorder -2, \slambdaset_{\neg i}} \!- \sgain_{\sorder-1, \slambdaset_{\neg i}}\frac{\sgain_{\sorder-3, \slambdaset_{\neg i \neg j}}}{\sgain_{\sorder-2, \slambdaset_{\neg i \neg j}}}\!} \vfunc_{\sorder-2, \slambdaset_{\neg j}}\nonumber
\\
&= - \slambda_j +\sgain_{k - 3, \slambdaset_{\neg i \neg j}} - \sgain_{k-2, \slambdaset_{\neg i}} = 0 \nonumber
\end{align}
which follows from the fact that $\sgain_{n-1, \slambdaset_{\neg i}} = \sgain_{n-2, \neg i \neg j} = 1$ and $\sgain_{k-2, \slambdaset_{\neg i}} = -\slambda_j + \sgain_{k-3, \slambdaset_{\neg i \neg j}}$.

Therefore, since $\Delta_k \geq 0$, the results follows from the comparison lemma \cite{khalil_NonlinearSystems2001} because 
\begin{align*}
&\frac{\diff}{\diff t} \plist{\sgain_{k, \slambdaset_{\neg i}} \vfunc_{k-1, \slambdaset} - \sgain_{k-1, \slambdaset_{\neg i}}\vfunc_{k,\slambdaset}} \nonumber \\
&\hspace{15mm} \geq \slambda_{j}\plist{\sgain_{k, \slambdaset_{\neg i}} \vfunc_{k-1, \slambdaset} - \sgain_{k-1, \slambdaset_{\neg i}}\vfunc_{k,\slambdaset}}
\end{align*}
and $\sgain_{k, \slambdaset_{\neg i}} \vfunc_{k-1, \slambdaset} (0) - \sgain_{k-1, \slambdaset_{\neg i}}\vfunc_{k,\slambdaset} (0) \geq 0$  for all $k = 1, \ldots, \sorder -1$ (see \refprop{prop.VandermondeDynamics}).
\end{proof}

\subsection{Proof of \refprop{prop.VandermondeBasisRatioLimit}}
\label{app.VandermondeBasisRatioLimit}

\begin{proof}
Using the explicit form of Vandermonde basis function as a weighted combination of exponential basis in \refeq{eq.VandermondeSumOfExponential}, the result can be obtained as
\begin{align*}
\lim_{t \rightarrow \infty} \frac{\vfunc_{k-1, \slambdaset}(t)}{\vfunc_{k, \slambdaset}(t)}
& = \lim_{t \rightarrow \infty} \frac{\sum_{i=1}^{\sorder} \frac{\sgain_{k-1, \slambdaset_{\neg i}}}{\prod_{j \neq i}(\slambda_j - \slambda_i)} e^{\slambda_i t}}{\sum_{i=1}^{\sorder} \frac{\sgain_{k, \slambdaset_{\neg i}}}{\prod_{j \neq i}(\slambda_j - \slambda_i)} e^{\slambda_i t}}
\\
& = \lim_{t \rightarrow \infty} \frac{\sum_{i=1}^{\sorder} \frac{\sgain_{k-1, \slambdaset_{\neg i}}}{\prod_{j \neq i}(\slambda_j - \slambda_i)} e^{(\slambda_i - \slambda_{\max}) t}}{\sum_{i=1}^{\sorder} \frac{\sgain_{k, \slambdaset_{\neg i}}}{\prod_{j \neq i}(\slambda_j - \slambda_i)} e^{(\slambda_i - \slambda_{\max}) t}}
\\
&  = \frac{\sgain_{k-1, \slambdaset_{\neg \max}}}{\sgain_{k, \slambdaset_{\neg \max}}}
\end{align*}
where the last equality follows from the monotonicity of $e^{\slambda t}$ and \mbox{$0 > \slambda_{\max} > \max(\slambdaset_{\neg \max})$}.
\end{proof}

\subsection{Proof of \refprop{prop.ZerothVandermondeBasisFunction}}
\label{app.ZerothVandermondeBasisFunction}

\begin{proof}
We have from \refprop{prop.VandermondeDynamics} that $\vfunc_{0,\slambdaset}(0) = 1$ and 
\begin{align*}
\dot{\vfunc}_{0,\slambdaset}(t) = -\sgain_{0, \slambdaset} \vfunc_{\sorder-1,\slambdaset}(t) = - \prod_{l=1}^{\sorder} (-\slambda_l) \vfunc_{\sorder-1,\slambdaset}(t) \leq 0  \!\!
\end{align*}
which follows  from $\vfunc_{\sorder-1,\slambdaset}(t) \geq 0$ (\refprop{prop.VandermondeBasisNonnegativity}) and $\sgain_{0,\slambdaset} = \prod_{l=1}^{\sorder} (-\slambda_l) >0$. Hence, the result holds.
\end{proof}

\subsection{Proof of \refprop{prop.SimplicialTrajectoryBound}}
\label{app.SimplicialTrajectoryBound}

\begin{proof}
By defining $\beta_\sorder = 0$ and  $\theta_{\sorder} (t)=0$ and observing the equation pattern, one can rewrite  the companion trajectory as
\begin{align*}
\sposn(t)  
&= \sum_{i=0}^{\sorder-1} \vect{y}_i(\sstate_0) \theta_i (t) =  \vect{y}_0(\sstate_0) \theta_0(t) +  \sum_{i=1}^{\sorder-1} \vect{y}_i(\sstate_0) \theta_i(t)  
\\
&= \plist{\frac{\beta_0}{\beta_0}\theta_0(t) - \frac{\beta_{1}}{\beta_0} \theta_1(t)\!\!} \frac{\beta_0}{\beta_0} \vect{y}_0(\sstate_0)  \nonumber 
\\
& \hspace{10mm}+ \frac{\beta_1}{\beta_0} \theta_1(t) \plist{\frac{\beta_0}{\beta_0} \vect{y}_0(\sstate_0) + \frac{\beta_0}{\beta_1} \vect{y}_1(\sstate_0)\!\!} \nonumber 
\\ 
&\hspace{20mm} +  \sum_{i=2}^{\sorder-1} \vect{y}_i(\sstate_0) \theta_i(t)
\\
&= \plist{\frac{\beta_0}{\beta_0} \theta_0(t) - \frac{\beta_{1}}{\beta_0} \theta_1(t)\!\!} \frac{\beta_0}{\beta_0} \vect{y}_0(\sstate_0) \nonumber 
\\
& \hspace{5mm} + \plist{\frac{\beta_1}{\beta_0} \theta_1(t) - \frac{\beta_2}{\beta_0} \theta_2(t)\!\!} \plist{\frac{\beta_0}{\beta_0}\vect{y}_0(\sstate_0) + \frac{\beta_0}{\beta_1} \vect{y}_1(\sstate_0)\!\!} \nonumber 
\\
&\hspace{10mm} + \frac{\beta_2}{\beta_0} \theta_2(t) \plist{\frac{\beta_0}{\beta_0}\vect{y}_0(\sstate_0) + \frac{\beta_0}{\beta_1} \vect{y}_1(\sstate_0) + \frac{\beta_0}{\beta_2} \vect{y}_2(\sstate_0)\!\!} \nonumber \!\!
\\ 
&\hspace{15mm} +  \sum_{i=3}^{\sorder-1} \vect{y}_i(\sstate_0) \theta_i(t)
\\ 
& = \sum_{i=0}^{\sorder -1} \plist{\frac{\beta_i}{\beta_0} \theta_i (t) - \frac{\beta_{i+1}}{\beta_0} \theta_{i+1}(t)\!\!} \sum_{j=0}^{i} \frac{\beta_0}{\beta_j} \vect{y}_j(\sstate_0).
\end{align*}

Due to the relative ordering of the basis functions, we have $\frac{\beta_i}{\beta_0} \theta_i (t) - \frac{\beta_{i+1}}{\beta_0} \theta_{i+1}(t) \geq 0$.
Moreover, one can observe from the boundedness property that $\sum\limits_{i=0}^{\sorder -1} \plist{\frac{\beta_i}{\beta_0} \theta_i (t) - \frac{\beta_{i+1}}{\beta_0} \theta_{i+1}(t)\!} = \theta_0(t) - \frac{\beta_{\sorder}}{\beta_0}\theta_{\sorder}(t) = \theta_0(t) \leq 1$.
Hence, we conclude that the companion trajectory $\sposn(t)$ is a convex combination of $\vect{y}_0(\sstate_0), \vect{y}_0(\sstate_0) + \frac{\beta_1}{\beta_0} \vect{y}_1(\sstate_0) , \ldots, \sum_{j=1}^{\sorder-1} \frac{\beta_j}{\beta_0} \vect{y}_j(\sstate_0)$ with the origin (whose weight is $1 - \theta_0(t)$).
Thus, the result holds.
\end{proof}

\subsection{Proof of \refprop{prop.LyapunovEllipsoid}}
\label{app.LyapunovEllipsoid}

\begin{proof}
Let $\sstate(t)$ denote the state-space solution trajectory of the companion dynamics $\dot{\sstate} = (\cmat_{\slambdaset} \otimes \mat{I}_{\sdim \times \sdim}) \sstate$ starting at $t = 0$ from the initial state $\sstate_0$.
Consider the quadratic Lyapunov function $V_{\mat{P}}(\sstate) = \tr{\sstate} \mat{P} \sstate$ that is parameterized by the positive definite Lyapunov matrix $\mat{P}$ satisfying \refeq{eq.CompanionLyapunovEquation}.
Since the time rate of change of the Lyapunov function is nonincreasing, i.e.,
\begin{align*}
\dot{V}_{\mat{P}}(\sstate) &= \tr{\sstate}(\tr{(\cmat_{\slambdaset} \otimes \mat{I}_{\sdim \times \sdim})} \mat{P} + \mat{P} (\cmat_{\slambdaset} \otimes \mat{I}_{\sdim \times \sdim})) \sstate \\
& = 	- \norm{\mat{D} \sstate}^2 \leq 0
\end{align*}
the state-space trajectory $\sstate(t)$ of the companion system is contained in the Lyapunov ellipsoid $\elp(\mat{0}, \mat{P}^{-1}, \norm{\sstate_0}_{\mat{P}})$  for all $t \geq 0$.
Hence, since the position variable $\sposn$ and state-space variable $\sstate$ of the companion system are related to each other by the orthogonal transformation  $\sposn = \mat{I}_{\sdim \times \sorder \sdim} \sstate$, the orthogonal projection of the Lyapunov ellipsoid $\elp(\mat{0}, \mat{P}^{-1}, \norm{\sstate_0}_{\mat{P}})$ onto the column space of $\mat{I}_{\sdim \times \sorder \sdim}$  is another ellipsoid  $\elp(\mat{0}, \tr{\mat{I}_{\sorder \sdim \times \sdim}}\mat{P}^{-1} \mat{I}_{\sorder \sdim \times \sdim}, \norm{\sstate_0}_{\mat{P}})$ that contains the position trajectory $\sposn(t)$ for all $t \geq 0$ (see \reflem{lem.EllipsoidOrthogonalProjection}). 
\end{proof}

\vspace{-3mm}

\section{Vandermonde Determinant Recursion}

Vandermonde matrices enjoy determinant recursion \cite{neagoe_SPL1996}. 

\vspace{-1mm}

\begin{lemma}\label{lem.VandermondeDeterminantRecursion}
\emph{(Vandermonde Determinant Recursion)}
For any $\slambdaset = [\slambda_1, \ldots, \slambda_\sorder] \in \C^{\sorder}$, the Vandermonde matrix determinant can be recursively determined as 
\begin{align}
\det \vmat_{\slambdaset} = (-1)^{l-1}\prod_{k \neq l} (\slambda_k - \slambda_l) \det \vmat_{\slambdaset_{\neg l}}
\end{align}
with the base case $\det \vmat_{\slambdaset} \!= 1$ for $\sorder = 1$, where  \mbox{$l \!\in \!\{ 1, \ldots, \sorder\}$} and $\slambdaset_{\neg l} = [\slambda_1, \ldots, \slambda_{l-1}, \slambda_{l+1}, \slambda_{\sorder}]$. 
\end{lemma}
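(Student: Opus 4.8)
The plan is to prove the Vandermonde determinant recursion by cofactor expansion of $\det \vmat_{\slambdaset}$ along a chosen row or by column operations that isolate the contribution of the eigenvalue $\slambda_l$. The cleanest route is to recall the classical closed-form identity $\det \vmat_{\slambdaset} = \prod_{1 \le i < j \le \sorder} (\slambda_j - \slambda_i)$ for the Vandermonde matrix in \refeq{eq.VandermondeMatrix}, and then simply read off the recursion by separating all factors that involve the index $l$. First I would establish, or recall as standard, the product formula; this can be done by the usual argument that $\det \vmat_{\slambdaset}$, viewed as a polynomial in $\slambda_l$, vanishes whenever $\slambda_l = \slambda_k$ for $k \neq l$, has degree $\sorder - 1$ in $\slambda_l$, and has a known leading coefficient, which forces $\det \vmat_{\slambdaset} = \prod_{k \neq l}(\slambda_l - \slambda_k)\cdot(\text{lower-dimensional Vandermonde determinant})$ up to sign.

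The core computation is then the bookkeeping of signs. Splitting the full product $\prod_{i<j}(\slambda_j - \slambda_i)$ into the factors that contain $l$ and those that do not: the factors not containing $l$ assemble exactly into $\det \vmat_{\slambdaset_{\neg l}}$, where $\slambdaset_{\neg l}$ is the vector with the $l$-th entry removed. The factors that do contain $l$ are $\prod_{k < l}(\slambda_l - \slambda_k) \cdot \prod_{k > l}(\slambda_k - \slambda_l)$. To match the desired form $(-1)^{l-1}\prod_{k \neq l}(\slambda_k - \slambda_l)$, I would rewrite $\prod_{k<l}(\slambda_l - \slambda_k) = (-1)^{l-1}\prod_{k<l}(\slambda_k - \slambda_l)$, since there are exactly $l-1$ indices $k < l$; combining this with the $k > l$ factors gives precisely $(-1)^{l-1}\prod_{k \neq l}(\slambda_k - \slambda_l)$. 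The base case $\sorder = 1$ is immediate since $\vmat_{\slambdaset}$ is the $1 \times 1$ matrix $[1]$, and the convention that an empty product equals one handles the degenerate pieces.

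The main obstacle, such as it is, is purely the sign accounting: one must be careful that the indexing of $\slambdaset_{\neg l}$ and the pairing of factors in $\det \vmat_{\slambdaset_{\neg l}}$ is consistent with the convention used elsewhere in the paper, and that the count of transpositions $(l-1)$ is correct rather than, say, $\sorder - l$ or $\sorder - 1$. An alternative that avoids invoking the product formula is a direct Laplace expansion along the last column (the column of highest powers is not the natural choice here; rather expand along the column holding $\slambda_l$'s powers after suitable column reordering), but this introduces its own permutation-sign overhead and is no shorter. I would therefore present the product-formula argument, state the two sign identities explicitly, and note that the recursion, together with the base case, also furnishes an independent inductive proof of the product formula itself, which is a pleasant consistency check.
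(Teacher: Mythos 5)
Your proof is correct, but it takes a different route from the paper. The paper proceeds by explicit elementary row operations: subtracting $\slambda_l$ times row $(\sorder-1)$ from row $\sorder$, then $\slambda_l$ times row $(\sorder-2)$ from row $(\sorder-1)$, and so on, which annihilates the $l$-th column except for its top entry $1$; cofactor expansion along that column then produces the sign $(-1)^{l-1}$, and the common factor $(\slambda_k - \slambda_l)$ in each remaining column $k$ can be pulled out to leave $\det \vmat_{\slambdaset_{\neg l}}$. You instead invoke the closed-form product formula $\det\vmat_{\slambdaset} = \prod_{i<j}(\slambda_j - \slambda_i)$ and split off the $l$-dependent factors, tracking the sign via the $l-1$ reversed pairs with $k<l$. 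Both sign countings agree, and your bookkeeping is right.

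One thing to be careful about in your ordering of the argument: in the paper's logical structure the product formula is derived \emph{as a corollary} of this lemma, so if you prove the lemma by appealing to the product formula you risk circularity. You do anticipate this by sketching the standard polynomial argument (degree $\sorder-1$ in $\slambda_l$, roots at $\slambda_l=\slambda_k$ for $k\neq l$, leading coefficient from cofactor expansion), which is a legitimate independent derivation. In fact, that polynomial argument already yields the recursion directly without any detour through the full product formula: the leading coefficient of $\det\vmat_{\slambdaset}$ in $\slambda_l$ is $(-1)^{\sorder+l}\det\vmat_{\slambdaset_{\neg l}}$ by cofactor expansion along the $l$-th column, and $\prod_{k\neq l}(\slambda_l-\slambda_k)=(-1)^{\sorder-1}\prod_{k\neq l}(\slambda_k-\slambda_l)$, so $\det\vmat_{\slambdaset}=(-1)^{l-1}\prod_{k\neq l}(\slambda_k-\slambda_l)\det\vmat_{\slambdaset_{\neg l}}$. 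Presenting it that way avoids stating the product formula before the paper wants to, and is closer in spirit to the paper's self-contained elementary computation, though the paper's row-operation version has the advantage of being purely mechanical with no polynomial-degree bookkeeping.
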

\begin{proof}
By applying a series of elementary matrix operations (e.g., starting with subtracting $\slambda_l$ times $(\sorder-1)^{th}$ row from the $\sorder^{th}$ row), one can verify the result as in \refeq{eq.VandermondeDeterminantRecursion} in \reftab{tab.VandermondeDeterminantRecursion}. 
\end{proof}

\begin{table*}[h]
\caption{Vandermonde Determinant Recursion}
\label{tab.VandermondeDeterminantRecursion}
\vspace{-5mm}
\centering
\hrulefill
\begin{subequations} \label{eq.VandermondeDeterminantRecursion}
\begin{align}
\det \vmat_{\slambdaset} &= 
\scalebox{0.8}{$ 
\left |
\begin{matrix}
1 & \ldots & 1  & 1& 1 & \ldots & 1 \\
\slambda_1 & \ldots  &\slambda_{l-1} & \slambda_l & \slambda_{l+1}& \ldots & \slambda_\sorder \\
\slambda_1^2 & \ldots  &\slambda_{l-1}^2 & \slambda_l^2 & \slambda_{l+1}^2& \ldots & \slambda_\sorder^2 \\
\vdots  & \ddots  &\vdots & \vdots & \vdots& \ddots  & \vdots \\
\slambda_1^{\sorder-1} & \ldots  &\slambda_{l-1}^{\sorder-1} & \slambda_l^{\sorder-1} & \slambda_{l+1}^{\sorder-1}& \ldots & \slambda_\sorder^{\sorder-1}  
\end{matrix}
\right | 
$}
= 
\scalebox{0.8}{$ 
\left |
\begin{matrix}
1 & \ldots & 1  & 1& 1 & \ldots & 1 \\
(\slambda_1 - \slambda_l) & \ldots  &(\slambda_{l-1}-\slambda_l) & 0 & (\slambda_{l+1}-\slambda_l)& \ldots & (\slambda_\sorder-\slambda_l) \\
\slambda_1(\slambda_1 - \slambda_l) & \ldots  &\slambda_{l-1}(\slambda_{l-1} - \slambda_l) & 0 & \slambda_{l+1} (\slambda_{l+1} - \slambda_l)& \ldots & \slambda_\sorder(\slambda_\sorder - \slambda_l) \\
\vdots  & \ddots  &\vdots & \vdots & \vdots& \ddots  & \vdots \\
\slambda_1^{\sorder-2}(\slambda_1 - \slambda_l) & \ldots  &\slambda_{l-1}^{\sorder-2}(\slambda_{l-1} - \slambda_l) & 0 & \slambda_{l+1}^{\sorder-2}(\slambda_{1+1} - \slambda_l)& \ldots & \slambda_\sorder^{\sorder-2} (\slambda_\sorder - \slambda_l)  
\end{matrix}
\right |
$} 
\\
& = 
\scalebox{0.8}{$ 
(-1)^{l-1}
\left |
\begin{matrix}
(\slambda_1 - \slambda_l) & \ldots  &(\slambda_{l-1}-\slambda_l) & (\slambda_{l+1}-\slambda_l)& \ldots & (\slambda_\sorder-\slambda_l) \\
\slambda_1(\slambda_1 - \slambda_l) & \ldots  &\slambda_{l-1}(\slambda_{l-1} - \slambda_l) & \slambda_{l+1} (\slambda_{l+1} - \slambda_l)& \ldots & \slambda_\sorder(\slambda_\sorder - \slambda_l) \\
\vdots  & \ddots  &\vdots &  \vdots& \ddots  & \vdots \\
\slambda_1^{\sorder-2}(\slambda_1 - \slambda_l) & \ldots  &\slambda_{l-1}^{\sorder-2}(\slambda_{l-1} - \slambda_l)  & \slambda_{l+1}^{\sorder-2}(\slambda_{1+1} - \slambda_l)& \ldots & \slambda_\sorder^{\sorder-2} (\slambda_\sorder - \slambda_l)  
\end{matrix}
\right |
$}
\\
&= 
\scalebox{0.8}{$ 
(-1)^{l-1} \prod\limits_{k \neq l} (\slambda_k - \slambda_l)
\left |
\begin{matrix}
1 & \ldots  &1 & 1&  \ldots & 1 \\
\slambda_1 & \ldots  &\slambda_{l-1} & \slambda_{l+1} & \ldots & \slambda_\sorder \\
\vdots  & \ddots  &\vdots &  \vdots& \ddots  & \vdots \\
\slambda_1^{\sorder-2} & \ldots  &\slambda_{l-1}^{\sorder-2} & \slambda_{l+1}^{\sorder-2}& \ldots & \slambda_\sorder^{\sorder-2}   
\end{matrix}
\right |
$}
= (-1)^{l-1} \prod\limits_{k \neq l} (\slambda_k - \slambda_l) \det \vmat_{\slambdaset_{\neg l}}
\end{align}
\end{subequations}
\hrulefill
\end{table*}

By successively applying \reflem{lem.VandermondeDeterminantRecursion}, one can conclude: 
\begin{corollary}
For any $\slambdaset = [\slambda_1, \ldots, \slambda_\sorder] \in \C^{\sorder}$, the Vandermonde matrix determinant is given by
\begin{align}
\det \vmat_{\slambdaset} = \prod_{1\leq k \leq l \leq \sorder} (\slambda_l - \slambda_k).
\end{align}
\end{corollary}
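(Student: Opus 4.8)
The plan is to prove the closed-form expression by induction on the order $\sorder$, peeling off one root at a time via the single-step recursion of \reflem{lem.VandermondeDeterminantRecursion}. Any fixed choice of the index $l$ in that lemma works, but taking $l = \sorder$ keeps the index bookkeeping cleanest, so I would specialize to that case.

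For the base case $\sorder = 1$ the Vandermonde matrix is $[1]$, so $\det \vmat_{\slambdaset} = 1$, which equals the empty product $\prod_{1 \le k < l \le 1}(\slambda_l - \slambda_k)$. For the inductive step, assuming the identity holds for length-$(\sorder-1)$ vectors, \reflem{lem.VandermondeDeterminantRecursion} with $l = \sorder$ gives
\[
\det \vmat_{\slambdaset} = (-1)^{\sorder-1}\Big(\textstyle\prod_{k=1}^{\sorder-1}(\slambda_k - \slambda_\sorder)\Big)\det \vmat_{\slambdaset_{\neg \sorder}},
\]
where $\slambdaset_{\neg \sorder} = (\slambda_1, \ldots, \slambda_{\sorder-1})$. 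I would then rewrite the sign-times-product factor as $\prod_{k=1}^{\sorder-1}(\slambda_\sorder - \slambda_k)$ — valid because that product has exactly $\sorder-1$ factors and flipping each sign costs precisely $(-1)^{\sorder-1}$ — and recognize this as exactly the block of difference factors $(\slambda_l - \slambda_k)$ in the target product having $l = \sorder$. Applying the induction hypothesis $\det \vmat_{\slambdaset_{\neg \sorder}} = \prod_{1 \le k < l \le \sorder-1}(\slambda_l - \slambda_k)$ supplies all remaining factors (those with $l \le \sorder - 1$), and multiplying the two blocks yields $\prod_{1 \le k < l \le \sorder}(\slambda_l - \slambda_k)$, closing the induction.

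There is essentially no obstacle here: all the substantive work is already contained in \reflem{lem.VandermondeDeterminantRecursion}, and the corollary is merely its iterate. The one point that needs care is the sign-and-index accounting — verifying that $(-1)^{l-1}\prod_{k\neq l}(\slambda_k - \slambda_l)$ from the lemma, specialized to $l = \sorder$, is exactly the "column" of factors $\{(\slambda_\sorder - \slambda_k): 1 \le k < \sorder\}$ peeled off from the triangular index set $\{(k,l): 1 \le k < l \le \sorder\}$, so that each application of the recursion strips off one such column and the product telescopes cleanly down to the base case.
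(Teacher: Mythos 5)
Your proposal is correct and follows the paper's own route — successively applying Lemma~\ref{lem.VandermondeDeterminantRecursion} — merely spelled out as an explicit induction with the choice $l=\sorder$ and careful sign bookkeeping. (Incidentally, the corollary as printed reads $1\le k\le l\le \sorder$, which would force a zero factor $\slambda_l-\slambda_l$; the intended index set is $1\le k<l\le\sorder$, which is what your argument correctly produces.)
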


\vspace{-3mm}

\section{Companion Coefficient Recursion}
\label{app.CompanionCoefficientRecursion}

An essential property of companion coefficients in \refeq{eq.ControlGain} is their recursive nature that plays significant role in Vandermonde inverse in \refeq{eq.VandermondeInverse} and Vandermonde simplexes in  \refeq{eq.VandermondeSimplex}.

\vspace{-1mm}

\begin{lemma} \label{lem.CompanionCoefficientRecursion}
\emph{(Companion Coefficient Recursion)}
For any given  $\slambdaset = \blist{\slambda_1, \ldots, \slambda_{\sorder}} \in \C^{\sorder}$, the companion coefficients $\sgain_{k, \slambdaset}$ can be recursively determined for $0 \leq k \leq  \sorder-1$ as
\begin{align}\label{eq.CompanionCoefficientRecursion}
\sgain_{k, \slambdaset} = - \slambda_i \sgain_{k, \slambdaset_{\neg i}} + \sgain_{k-1, \slambdaset_{\neg i}}
\end{align} 
with the base case 
\begin{align}\label{eq.CompanionCoefficientBase}
\sgain_{k, \slambdaset} =  \left \{ \begin{array}{cl} 1, & \text{if } k = \sorder,\\
0, & \text{if } k \not \in [0,  \sorder],   \end{array} \right.
\end{align}
where $i \in\! [1, \ldots, \sorder]$ and  $\slambdaset_{\neg i} = \! [\slambda_1, \ldots, \slambda_{i-1}, \slambda_{i+1}, \ldots, \slambda_{\sorder}]$.
\end{lemma}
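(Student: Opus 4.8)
The plan is to work directly from the characteristic-polynomial representation in \refeq{eq.CharacteristicPolynomial}, which together with the convention $\sgain_{\sorder,\slambdaset}=1$ can be written compactly as $\prod_{j=1}^{\sorder}(\slambda-\slambda_j)=\sum_{k=0}^{\sorder}\sgain_{k,\slambdaset}\slambda^k$. The same identity applied to the $(\sorder-1)$-element vector $\slambdaset_{\neg i}$ gives $\prod_{j\neq i}(\slambda-\slambda_j)=\sum_{k=0}^{\sorder-1}\sgain_{k,\slambdaset_{\neg i}}\slambda^k$. The key step is simply to factor out the $i$-th root: since $\prod_{j=1}^{\sorder}(\slambda-\slambda_j)=(\slambda-\slambda_i)\prod_{j\neq i}(\slambda-\slambda_j)$, substituting the two polynomial expansions yields
\begin{align*}
\sum_{k=0}^{\sorder}\sgain_{k,\slambdaset}\slambda^k=(\slambda-\slambda_i)\sum_{k=0}^{\sorder-1}\sgain_{k,\slambdaset_{\neg i}}\slambda^k=\sum_{k=0}^{\sorder-1}\sgain_{k,\slambdaset_{\neg i}}\slambda^{k+1}-\slambda_i\sum_{k=0}^{\sorder-1}\sgain_{k,\slambdaset_{\neg i}}\slambda^k.
\end{align*}

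Next I would reindex the first sum on the right (replace $k+1$ by $k$) and, invoking the conventions $\sgain_{k,\slambdaset_{\neg i}}=0$ for $k\notin[0,\sorder-1]$, collect everything into the single sum $\sum_{k=0}^{\sorder}\bigl(\sgain_{k-1,\slambdaset_{\neg i}}-\slambda_i\sgain_{k,\slambdaset_{\neg i}}\bigr)\slambda^k$. Matching the coefficient of $\slambda^k$ on the two sides (a polynomial identity in $\slambda$, legitimate since both sides agree as functions) gives $\sgain_{k,\slambdaset}=-\slambda_i\sgain_{k,\slambdaset_{\neg i}}+\sgain_{k-1,\slambdaset_{\neg i}}$ for every $k$, in particular for $0\leq k\leq\sorder-1$ as claimed. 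The base case \refeq{eq.CompanionCoefficientBase} is immediate from \refeq{eq.ControlGain}: the empty-product convention gives $\sgain_{\sorder,\slambdaset}=1$, and there is no subset of $\{1,\ldots,\sorder\}$ of negative size or of size exceeding $\sorder$, so the defining sum is empty and $\sgain_{k,\slambdaset}=0$ for $k\notin[0,\sorder]$; equivalently, this is just the degree and monic-ness of the characteristic polynomial.

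As an alternative route I would note that the recursion can also be proved purely combinatorially from \refeq{eq.ControlGain}: partition the subsets $I\subseteq\{1,\ldots,\sorder\}$ with $|I|=\sorder-k$ according to whether $i\in I$. The subsets avoiding $i$ are precisely the size-$(\sorder-k)$ subsets of the indices of $\slambdaset_{\neg i}$ and, after reconciling the sign $(-1)^{\sorder-k}$ with the exponent $(-1)^{(\sorder-1)-(k-1)}$ appearing in $\sgain_{k-1,\slambdaset_{\neg i}}$, contribute exactly $\sgain_{k-1,\slambdaset_{\neg i}}$; the subsets containing $i$ factor as $\{i\}\cup J$ with $J$ a size-$((\sorder-1)-k)$ subset of $\slambdaset_{\neg i}$ and contribute $-\slambda_i\sgain_{k,\slambdaset_{\neg i}}$. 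Summing the two pieces reproduces \refeq{eq.CompanionCoefficientRecursion}.

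I do not expect a genuine obstacle: the argument is a one-line polynomial factorization. The only point requiring care is the index bookkeeping at the endpoints $k=0$ and $k=\sorder$, so that extending both sums to the common range $\{0,\ldots,\sorder\}$ and reindexing neither drops a term nor introduces a spurious one — which is exactly what the conventions $\sgain_{k,\slambdaset}=0$ for $k\notin[0,\sorder]$ and $\sgain_{\sorder,\slambdaset}=1$ are set up to guarantee, so the coefficient-matching step is unambiguous and the proof closes cleanly.
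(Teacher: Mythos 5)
Your proof is correct. Your primary argument---factoring $\prod_{j=1}^{\sorder}(\slambda-\slambda_j)=(\slambda-\slambda_i)\prod_{j\neq i}(\slambda-\slambda_j)$, expanding both sides via \refeq{eq.CharacteristicPolynomial}, and matching coefficients of $\slambda^k$---is a genuinely different route from the paper's proof. The paper works directly from the explicit elementary-symmetric-function formula \refeq{eq.ControlGain}, writing $\sgain_{k,\slambdaset}=\sum_{|I|=\sorder-k}\prod_{i\in I}(-\slambda_i)$ and splitting the sum over subsets $I$ according to whether $i\in I$ or $i\notin I$, which is exactly the ``alternative combinatorial route'' you sketch in your last paragraph. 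The two approaches are equivalent in content (they both encode the elementary observation that the elementary symmetric polynomials of $\slambdaset$ decompose via one distinguished variable), but the polynomial-factorization version is arguably the cleaner one: it makes the recursion a one-line consequence of the defining identity \refeq{eq.CharacteristicPolynomial} with no subset-cardinality bookkeeping, and the conventions $\sgain_{-1,\cdot}=0$, $\sgain_{\sorder,\slambdaset}=1$ are forced automatically by degree considerations rather than having to be invoked separately. The combinatorial version, which the paper uses, stays entirely at the level of the coefficient formula and so avoids even appealing to the polynomial identity, at the cost of slightly more index manipulation. Either is a complete and valid proof, and you correctly identify the only delicate point---the boundary behaviour at $k=0$ and $k=\sorder$---and resolve it via the stated conventions.
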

\begin{proof}
The base cases follow from the standard conversion that summation and multiplication over the empty set are zero and one, respectively.
The result can be verified using \refeq{eq.ControlGain} as
\begin{align*}
\sgain_{k, \slambdaset} &= \sum_{\substack{I \subseteq \clist{1, \ldots, |\slambdaset|} \\ |I| = |\slambdaset|  - k}} \prod_{i \in I} (- \slambda_i),
\\
&= -\slambda_i \plist{\sum_{\substack{J \subseteq \clist{1, \ldots,i-1, i+1, \ldots  |\slambdaset|} \\ |J| = |\slambdaset| - 1  - k}} \prod_{j \in J} (- \slambda_j)} \nonumber 
\\ 
& \hspace{15mm}+    \sum_{\substack{J \subseteq \clist{1, \ldots,i-1, i+1, \ldots  |\slambdaset|} \\ |J| = |\slambdaset|  - k}} \prod_{j \in J} (- \slambda_j), 
\\&=  -\slambda_i \sgain_{k, \slambdaset_{\neg i}} + \sgain_{k -1, \slambdaset_{\neg i}},
\end{align*}
which competes the proof.
\end{proof}

\noindent Since $\sgain_{-1, \slambdaset} = 0$ and $\sgain_{|\slambdaset|, \slambdaset} = 1$, the companion coefficient recursions for $\sgain_{0, \slambdaset}$ and $\sgain_{|\slambdaset|-1, \slambdaset}$ can be simplified as
\begin{align*}
\sgain_{0, \slambdaset} &= -\slambda_i \sgain_{0,\slambdaset_{\neg i}} = \prod_{i=1}^{|\slambdaset|} (-\slambda_i), 
\\
\sgain_{|\slambdaset|-1, \slambdaset} &= -\slambda_i + \sgain_{|\slambdaset|-2, \slambdaset_{\neg i}} = \sum_{i=1}^{|\slambdaset|} (-\slambda_i).
\end{align*}

\section{\mbox{Affine Transformation \small{and} Orthogonal Projection} of Ellipsoids}

The family of ellipsoids is closed under affine transformations and orthogonal projections, that is to say, an affine transformation or orthogonal transformation of an ellipsoid yields another ellipsoid. 
We provide below explicit formulas for performing these operations on ellipsoids as a reference.

Let $\elp(\elpctr, \elpmat, \elprad)$ denote the ellipsoid centered at $\elpctr \in \R^{n}$ and associated with a positive semidefinite matrix $\elpmat \in \PSDM^n$ and a nonnegative scaling factor $\elprad \in \R_{\geq 0}$ that is defined as\footnote{%
If $\elpmat$ is positive definite and so invertible, then the ellipsoid $\elp(\elpctr, \elpmat, \elprad)$ can be equivalently expressed as
\begin{align}
\elp(\elpctr, \elpmat, \elprad) = \clist{\vect{x} \in \R^{n} \Big | \tr{(\vect{x} - \elpctr)} \elpmat^{-1} (\vect{x} - \elpctr) \leq \elprad^2}. \nonumber
\end{align}
}
\begin{align}\label{eq.Ellipsoid}
\elp(\elpctr, \elpmat, \elprad) = \clist{\elpctr + \elpmat^{\frac{1}{2}} \vect{u} \big | \vect{u} \in \R^{n},  \norm{\vect{u}} \leq \elprad}, 
\end{align}
where $\elpmat^{\frac{1}{2}}$ is a square root\reffn{fn.MatrixSquareRoot} of $\elpmat$ that satisfies $ \elpmat^{\frac{1}{2}} \tr{(\elpmat^{\frac{1}{2}})\!} = \elpmat$, and $\norm{.}$ denotes the standard Euclidean norm.

\subsection{Affine Transformations of Ellipsoids}

An affine transformation applies a linear mapping on an ellipsoid followed by a translation, and so has a simple form.
\begin{lemma} \label{lem.EllipsoidAffineTransformation}
\emph{(Affine Transformation of Ellipsoids)}
For any $\mat{A} \in \R^{m \times n}$ and $\vect{b} \in \R^{m}$, the affine transformation $\vect{x} \rightarrow \mat{A} \vect{x} + \vect{b}$ maps an ellipsoid $\elp(\elpctr, \elpmat, \elprad)$ in $\R^n$  to another ellipsoid $ \elp(\mat{A} \elpctr + \vect{b}, \mat{A} \elpmat \tr{\mat{A}}, \elprad) $ in $\R^{m}$, i.e.,  
\begin{align}
\!\!\clist{ \mat{A}\vect{x} + \vect{b} \big| \vect{x} \in \!\elp(\elpctr, \elpmat, \elprad)} = \elp(\mat{A} \elpctr  + \vect{b}, \mat{A} \elpmat \tr{\mat{A}}\!, \elprad). \!\!\!
\end{align} 
\end{lemma}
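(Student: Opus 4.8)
The plan is to reduce the claim to the parametrization \refeq{eq.Ellipsoid} together with one structural fact about images of Euclidean balls under linear maps. First I would apply the affine map directly to the parametrization, obtaining
\begin{align*}
\clist{\mat{A}\vect{x} + \vect{b} \,\big|\, \vect{x} \in \elp(\elpctr, \elpmat, \elprad)}
&= \clist{(\mat{A}\elpctr + \vect{b}) + \mat{A}\elpmat^{\frac{1}{2}} \vect{u} \,\big|\, \vect{u} \in \R^{n},\ \norm{\vect{u}} \leq \elprad}.
\end{align*}
Comparing this with \refeq{eq.Ellipsoid} applied to $\elp(\mat{A}\elpctr + \vect{b}, \mat{A}\elpmat\tr{\mat{A}}, \elprad)$ and cancelling the common translation $\mat{A}\elpctr + \vect{b}$, the statement reduces to the set identity $\clist{\mat{A}\elpmat^{\frac{1}{2}} \vect{u} \,\big|\, \norm{\vect{u}} \leq \elprad} = \clist{(\mat{A}\elpmat\tr{\mat{A}})^{\frac{1}{2}} \vect{w} \,\big|\, \norm{\vect{w}} \leq \elprad}$, i.e.\ to showing that the matrices $\mat{M}_1 := \mat{A}\elpmat^{\frac{1}{2}}$ and $\mat{M}_2 := (\mat{A}\elpmat\tr{\mat{A}})^{\frac{1}{2}}$ send balls of the same radius to the same subset of $\R^{m}$.

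The key step is the general fact that, for any real matrix $\mat{M}$, the image $\clist{\mat{M}\vect{u} \,\big|\, \norm{\vect{u}} \leq \elprad}$ depends on $\mat{M}$ only through the Gram matrix $\mat{M}\tr{\mat{M}}$. I would prove this by observing that the image is a compact convex set and is therefore determined by its support function, which in a direction $\vect{p}$ equals $\sup_{\norm{\vect{u}} \leq \elprad} \tr{\vect{p}}\mat{M}\vect{u} = \elprad\norm{\tr{\mat{M}}\vect{p}} = \elprad\sqrt{\tr{\vect{p}}\mat{M}\tr{\mat{M}}\vect{p}}$ and hence depends only on $\mat{M}\tr{\mat{M}}$; alternatively, one can peel off the orthogonal factors of a singular value decomposition $\mat{M} = \mat{U}\mat{\Sigma}\tr{\mat{V}}$ (which map balls onto balls) and check the rectangular-diagonal case directly. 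With this fact in hand, I would simply compute $\mat{M}_1\tr{\mat{M}_1} = \mat{A}\elpmat^{\frac{1}{2}}\tr{(\elpmat^{\frac{1}{2}})}\tr{\mat{A}} = \mat{A}\elpmat\tr{\mat{A}}$, using the square-root property $\elpmat^{\frac{1}{2}}\tr{(\elpmat^{\frac{1}{2}})} = \elpmat$ from \refeq{eq.Ellipsoid}, and $\mat{M}_2\tr{\mat{M}_2} = \mat{A}\elpmat\tr{\mat{A}}$ since $\mat{M}_2$ is a symmetric square root. As the two Gram matrices coincide, so do the two images, which is exactly the set identity above; restoring the translation then yields $\elp(\mat{A}\elpctr + \vect{b}, \mat{A}\elpmat\tr{\mat{A}}, \elprad)$, as claimed.

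The main obstacle is precisely the general fact in the second paragraph: $\mat{M}_1$ and $\mat{M}_2$ are genuinely different matrices acting on spaces of different dimension ($\R^{n}$ versus $\R^{m}$), so their images cannot be matched by a naive change of variables; the support-function (equivalently, SVD) argument is what makes the identification rigorous, and it also transparently handles the degenerate case where $\elpmat$ or $\mat{A}\elpmat\tr{\mat{A}}$ fails to be invertible. Everything else is routine manipulation of \refeq{eq.Ellipsoid}.
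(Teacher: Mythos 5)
Your argument is correct, and your primary route (support functions) is genuinely different from the paper's, which uses a singular value decomposition argument that you mention only as an alternative. The paper first disposes of the degenerate case $\mat{A}\elpmat^{\frac{1}{2}} = \mat{0}_{m \times n}$, then establishes that any orthogonal $\mat{Q} \in \R^{n \times m}$ satisfies $\clist{\tr{\mat{Q}}\vect{v} \mid \norm{\vect{v}} \leq 1} = \clist{\vect{u} \in \R^{m} \mid \norm{\vect{u}} \leq 1}$ (proving both inclusions), factors $\mat{A}\elpmat^{\frac{1}{2}} = \mat{U}\mat{\Lambda}\tr{\mat{V}}$ via reduced SVD with nonsingular diagonal $\mat{\Lambda}$, and chains through the resulting set equalities, using $\mat{A}\elpmat\tr{\mat{A}} = \mat{U}\mat{\Lambda}^{2}\tr{\mat{U}}$ to identify the target ellipsoid's square root. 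Your support-function argument is cleaner in two respects: it replaces the two orthogonal-image inclusions plus the SVD bookkeeping with a single Cauchy--Schwarz computation $\sup_{\norm{\vect{u}} \leq \elprad}\tr{\vect{p}}\mat{M}\vect{u} = \elprad\norm{\tr{\mat{M}}\vect{p}}$, and it absorbs the degenerate case without a separate branch (the support function formula is valid even when $\mat{M} = \mat{0}$). What the paper's route buys is elementariness --- it never leaves the language of sets and matrix factorizations, whereas your route leans on the fact that a compact convex set is determined by its support function, a result from convex geometry that the paper otherwise never invokes. Both are sound; yours is shorter once the support-function lemma is granted, the paper's is more self-contained.
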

\begin{proof}
The result trivially holds if $\mat{A} \elpmat^{\frac{1}{2}} = \mat{0}_{m \times n}$. 
Hence, we consider below the case $\mat{A} \elpmat^{\frac{1}{2}} \neq \mat{0}_{m \times n}$.

We first start with the special case of orthogonal transformations of the unit Euclidean ball, that is to say, $\mat{A}$ is orthogonal, $\vect{b} = \mat{0}$, and $\elpmat = \mat{I}_{n \times n}$ and $\elprad = 1$.  
Any orthogonal matrix $\mat{Q} \in \R^{n \times m}$ with $\tr{\mat{Q}} \mat{Q} = \mat{I}_{m \times m}$ satisfies 
\begin{align}\label{eq.OrthogonalTransformationEuclideanBall}
\clist{\tr{\mat{Q}} \vect{v} \big | \vect{v} \in \R^{n}, \norm{\vect{v}} \leq 1} = \clist{\vect{u} \in \R^{m} \big| \norm{\vect{u}} \leq 1},
\end{align}
because for any $\vect{v} \in \R^{n}$ one has  $ \norm{\tr{\mat{Q}} \vect{v}} \leq \lambda_{\max}(\mat{Q}) \norm{\vect{v}}^2 \leq \norm{\vect{v}}^2 $ which implies 
\begin{align}
\clist{\tr{\mat{Q}} \vect{v} | \vect{v} \in \R^{n}, \norm{\vect{v}} \leq 1} \subseteq \clist{\vect{u} \in \R^{m} | \norm{\vect{u}} \leq 1},
\end{align} 
and, for any $\vect{u} \in \R^{m}$ and $\vect{v} = \mat{Q} \vect{u}$ one has $\tr{\vect{v}}  \vect{v} = \tr{\vect{u}} \tr{\mat{Q}} \mat{Q} \vect{u}= \tr{\vect{u}}  \vect{u}$ which implies 
\begin{align}
\clist{\tr{\mat{Q}} \vect{v} | \vect{v} \in \R^{n}, \norm{\vect{v}} \leq 1} \supseteq \clist{\vect{u} \in \R^{m} | \norm{\vect{u}} \leq 1}.
\end{align}

In general, let $\mat{A} \elpmat^{\frac{1}{2}} = \mat{U} \mat{\Lambda} \tr{\mat{V}}$ be the singular value decomposition of $\mat{A} \elpmat^{\frac{1}{2}}$ associated with orthogonal matrices $\mat{U} \in \R^{m \times r}$ and $\mat{V} \in \R^{n \times r}$ (i.e., $\tr{\mat{U}}\mat{U} = \tr{\mat{V}} \mat{V} = \mat{I}_{r \times r}$) and nonsingular diagonal matrix $\mat{\Lambda} \in \R^{r \times r}$.
Hence, we have $\mat{A} \elpmat \tr{\mat{A}} = \mat{A} \elpmat^{\frac{1}{2}} \tr{\plist{\mat{A} \elpmat^{\frac{1}{2}}}} = \mat{U} \mat{\Lambda}^{2} \tr{\mat{U}}$.
Therefore, using the orthogonal transformation of the unit Euclidean ball in \refeq{eq.OrthogonalTransformationEuclideanBall},
we can determine the affine transformation of  $\elp(\elpctr, \elpmat, \elprad^2)$ as
\begin{align*}
&\clist{ \mat{A}\vect{x} + \vect{b} \big| \vect{x} \in \!\elp(\elpctr, \elpmat, \elprad)} \nonumber \\
& \hspace{6mm} = \clist{\mat{A} \elpmat^{\frac{1}{2}} \vect{v} + \mat{A} \elpctr + \vect{b} \big | \vect{v} \in \R^{n} , \norm{\vect{v}} \leq \elprad },
\\
&  \hspace{6mm} = \clist{\mat{U} \mat{\Lambda} \tr{\mat{V}} \vect{v} + \mat{A} \elpctr + \vect{b} \big | \vect{v} \in \R^{n} , \norm{\vect{v}} \leq \elprad },
\\
& \hspace{6mm} = \clist{\mat{U} \mat{\Lambda} \vect{w}  +  \mat{A} \elpctr + \vect{b} \big | \vect{w} \in \R^{r} , \norm{\vect{w}} \leq 1\elprad  },
\\
& \hspace{6mm} = \clist{\mat{U} \mat{\Lambda} \tr{\mat{U}} \vect{u}  +  \mat{A} \elpctr + \vect{b} \big | \vect{u} \in \R^{m} , \norm{\vect{u}} \leq \elprad  },
\\
& \hspace{6mm} = \clist{\plist{ \mat{A} \elpmat \tr{\mat{A}}}^{\frac{1}{2}} \vect{u} +  \mat{A} \elpctr + \vect{b} | \vect{u} \in \R^{m}, \norm{\vect{u}} \leq \elprad}, \!\!
\\ 
& \hspace{6mm} = \elp(\mat{A} \elpctr + \vect{b}, \mat{A} \elpmat \tr{\mat{A}}, \elprad),
\end{align*}
which completes the proof.
\end{proof}

\subsection{Orthogonal Projection of Ellipsoids}

In this part, we briefly present the explicit forms of the orthogonal projections of points and ellipsoids onto affine subspaces \cite{karl_verghese_willsky_CVGIP1994, pope_AlgorithmEllipsoid2008}. 

Let $A(\mat{Q}, \vect{p})$ denote the affine subspace that is obtained by shifting the column space of  an orthogonal matrix $\mat{Q} \in \R^{n \times m}$ (i.e., $\mat{Q}^T \mat{Q} = \mat{I}_{m \times m}$) by a translation of $\vect{p} \in \R^{n}$, 
\begin{align}
A(\mat{Q}, \vect{p}) = \clist{\mat{Q} \vect{y} +  \vect{p}  \; \big | \; \vect{y} \in \R^{m}}.
\end{align}  
In other words,  $A(\mat{Q}, \vect{p})$ is the affine orthogonal transformation of $\R^{m}$ by the map $\vect{x} \mapsto \mat{Q} \vect{x} + \vect{p}$.
We also denote by $\proj_{A}(\vect{x})$ the (metric) projection of a point $\vect{x} \in \R^{n}$ onto a (closed) set $A \subseteq \R^{n}$ that is defined as 
\begin{align}
\proj_{A}(\vect{x}) := \argmin_{\vect{a} \in A} \norm{\vect{a} - \vect{x}}.
\end{align}

\begin{lemma} \label{lem.PointOrthogonalProjection}
\emph{(Orthogonal Projection of Points)}
For any given orthogonal matrix $\mat{Q} \in \R^{n \times m}$ and $\vect{p} \in \R^{n}$, the orthogonal projection $\proj_{A(\mat{Q}, \vect{p})}(\vect{x})$ of a point $\vect{x} \in \R^{n}$ is given by
\begin{align}
\proj_{A(\mat{Q}, \vect{p})}(\vect{x}) = \mat{Q} \tr{\mat{Q}} (\vect{x} - \vect{p}) + \vect{p}
\end{align}
which is the affine orthogonal transformation of $\tr{\mat{Q}} (\vect{x} - \vect{p})$ via $\vect{y} \mapsto \mat{Q} \vect{y} + \vect{p}$ 
\end{lemma}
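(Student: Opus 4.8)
The plan is to reduce the problem to the linear case by translating the affine subspace so that $\vect{p}$ becomes the origin, and then to recognize $\mat{Q}\tr{\mat{Q}}$ as the orthogonal projector onto the column space $\colspace(\mat{Q})$. First I would note that $A(\mat{Q},\vect{p}) = \vect{p} + \colspace(\mat{Q})$, so by the defining variational property the point $\vect{a}^\star := \proj_{A(\mat{Q},\vect{p})}(\vect{x})$ is the unique minimizer of $\norm{\mat{Q}\vect{y} + \vect{p} - \vect{x}}$ over $\vect{y}\in\R^{m}$, written as $\vect{a}^\star = \mat{Q}\vect{y}^\star + \vect{p}$. Setting $\vect{z} := \vect{x} - \vect{p}$, this is the same as minimizing $\norm{\mat{Q}\vect{y} - \vect{z}}$.

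Next I would exploit the orthonormality $\tr{\mat{Q}}\mat{Q} = \mat{I}_{m \times m}$, which gives the orthogonal splitting $\vect{z} = \mat{Q}\tr{\mat{Q}}\vect{z} + (\mat{I}_{n\times n} - \mat{Q}\tr{\mat{Q}})\vect{z}$, where the first term lies in $\colspace(\mat{Q})$ and the second is orthogonal to it because $\tr{\mat{Q}}(\mat{I}_{n\times n} - \mat{Q}\tr{\mat{Q}}) = \tr{\mat{Q}} - \tr{\mat{Q}} = \mat{0}$. Hence for every $\vect{y}$ the vector $\mat{Q}\vect{y} - \vect{z} = \big(\mat{Q}\vect{y} - \mat{Q}\tr{\mat{Q}}\vect{z}\big) - (\mat{I}_{n\times n} - \mat{Q}\tr{\mat{Q}})\vect{z}$ is a sum of a term in $\colspace(\mat{Q})$ and a fixed term orthogonal to it, so the Pythagorean identity yields $\norm{\mat{Q}\vect{y} - \vect{z}}^{2} = \norm{\mat{Q}\vect{y} - \mat{Q}\tr{\mat{Q}}\vect{z}}^{2} + \norm{(\mat{I}_{n\times n} - \mat{Q}\tr{\mat{Q}})\vect{z}}^{2}$. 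The second summand is independent of $\vect{y}$, and the first is nonnegative and vanishes exactly when $\mat{Q}\vect{y} = \mat{Q}\tr{\mat{Q}}\vect{z}$; since $\tr{\mat{Q}}\mat{Q} = \mat{I}_{m\times m}$ makes $\mat{Q}$ injective, this forces $\vect{y} = \tr{\mat{Q}}\vect{z}$. Therefore $\vect{a}^\star = \mat{Q}\tr{\mat{Q}}\vect{z} + \vect{p} = \mat{Q}\tr{\mat{Q}}(\vect{x} - \vect{p}) + \vect{p}$, and the closing remark of the lemma — that this equals the image of $\tr{\mat{Q}}(\vect{x}-\vect{p})$ under $\vect{y}\mapsto \mat{Q}\vect{y} + \vect{p}$ — is then immediate.

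There is no genuinely hard step; the one thing to be careful about is that $\mat{Q}\tr{\mat{Q}}$ is in general only a rank-$m$ orthogonal projector (equal to the identity only when $m = n$), so it is precisely the leftover component $(\mat{I}_{n\times n} - \mat{Q}\tr{\mat{Q}})\vect{z}$ that accounts for the residual distance to the subspace, and deriving the symmetry/idempotency of $\mat{Q}\tr{\mat{Q}}$ and the orthogonality $\tr{\mat{Q}}(\mat{I}_{n\times n} - \mat{Q}\tr{\mat{Q}}) = \mat{0}$ cleanly from $\tr{\mat{Q}}\mat{Q} = \mat{I}_{m\times m}$ is the crux. An equivalent and slightly shorter route, if preferred, is to verify the normal-equation characterization directly: $\vect{a}^\star \in A(\mat{Q},\vect{p})$ and $\vect{x} - \vect{a}^\star = (\mat{I}_{n\times n} - \mat{Q}\tr{\mat{Q}})(\vect{x}-\vect{p})$ is orthogonal to $\colspace(\mat{Q})$, which uniquely identifies the metric projection onto a closed affine subspace.
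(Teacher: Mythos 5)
Your proof is correct, but it follows a different route from the paper's. The paper rewrites $\proj_{A(\mat{Q},\vect{p})}(\vect{x})$ as a convex least-squares problem $\min_{\vect{y}\in\R^{m}}\norm{\mat{Q}\vect{y}+\vect{p}-\vect{x}}^{2}$ and then simply sets the gradient to zero, $\tr{\mat{Q}}(\mat{Q}\vect{y}+\vect{p}-\vect{x})=\mat{0}$, which with $\tr{\mat{Q}}\mat{Q}=\mat{I}_{m\times m}$ immediately yields $\vect{y}=\tr{\mat{Q}}(\vect{x}-\vect{p})$; convexity guarantees this stationary point is the global minimizer. You instead decompose $\vect{z}=\vect{x}-\vect{p}$ into its component in $\colspace(\mat{Q})$ and the orthogonal remainder, then invoke the Pythagorean identity to isolate the $\vect{y}$-dependent term and read off the minimizer, also using injectivity of $\mat{Q}$ to pin down uniqueness. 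Your argument is calculus-free and makes the uniqueness of the minimizer and the geometric meaning of the residual $(\mat{I}_{n\times n}-\mat{Q}\tr{\mat{Q}})\vect{z}$ explicit, at the cost of a longer derivation; the paper's argument is shorter and leans on convexity to avoid the explicit Pythagorean bookkeeping. The normal-equation alternative you mention at the end is, in fact, essentially what the paper does.
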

\begin{proof}
The metric projection onto an affine subspace can be rewritten as a convex least squares problem as 
\begin{align}
\proj_{A(\mat{Q}, \vect{p})}(\vect{x}) & = \argmin_{\vect{a} \in A(\mat{Q}, \vect{p})} \norm{\vect{a} - \vect{x}}\\
&  = \mat{Q} \plist{\argmin_{\vect{y} \in \R^{m}} \norm{\mat{Q} \vect{y} + \vect{p} - \vect{x}}^2} + \vect{p} 
\end{align} 
which is globally minimized at $\vect{y} = \tr{\mat{Q}} (\vect{x} - \vect{p})$ since 
\begin{align}
\nabla_{\vect{y}} \norm{\mat{Q} \vect{y} + \vect{p} - \vect{x}}^2 = \tr{\mat{Q}}(\mat{Q}\vect{y} + \vect{p} - \vect{x}) = 0,
\end{align}
where $\tr{\mat{Q}} \mat{Q} = \mat{I}_{m \times m}$.  Thus, the result holds.
\end{proof}

\begin{lemma}\label{lem.EllipsoidOrthogonalProjection} 
\emph{(Orthogonal Projection of Ellipsoids)}
For any given orthogonal matrix $\mat{Q} \in \R^{n \times m}$ and $\vect{p} \in \R^{n}$, the orthogonal projection $\proj_{A(\mat{Q}, \vect{p})}(\elp(\elpctr, \elpmat, \elprad))$ of an ellipsoid $\elp(\elpctr, \elpmat, \elprad)$ onto an affine subspace $A(\mat{Q}, \vect{p})$ is another ellipsoid $\elp(\mat{Q}\mat{Q}^T (\elpctr - \vect{p}) + \vect{p}, \mat{Q}\mat{Q}^T \elpmat \mat{Q}\mat{Q}^T, \elprad))$ which is the affine orthogonal transformation of a lower-dimensional ellipsoid $\elp(\mat{Q}^T (\elpctr - \vect{p}), \mat{Q}^T \elpmat \mat{Q}, \elprad)$ via the mapping $\vect{x} \mapsto \mat{Q} \vect{x} + \vect{p}$, i.e.,
\begin{align*}
\Pi_{A(\mat{Q}, \vect{p})}\!(\elp(\elpctr, \elpmat, \elprad)) \!& = \elp(\mat{Q}\mat{Q}^T (\elpctr\! -\! \vect{p}) \!+\! \vect{p}, \mat{Q}\mat{Q}^T \elpmat \mat{Q}\mat{Q}^T, \elprad)\!)
\\
& \hspace{-5mm} = \clist{\mat{Q} \vect{x} + \vect{p} \big | \vect{x} \in \elp(\mat{Q}^T \!(\elpctr\! -\! \vect{p}), \mat{Q}^T \elpmat \mat{Q}, \elprad)}. \! 
\end{align*}
\end{lemma}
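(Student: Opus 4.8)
The plan is to reduce the statement to the two results already established: the pointwise projection formula in \reflem{lem.PointOrthogonalProjection} and the affine-transformation formula for ellipsoids in \reflem{lem.EllipsoidAffineTransformation}. Since the metric projection $\proj_{A(\mat{Q}, \vect{p})}$ onto the (closed, convex) affine subspace $A(\mat{Q}, \vect{p})$ is single-valued, the orthogonal projection of the ellipsoid is simply its pointwise image, $\proj_{A(\mat{Q}, \vect{p})}(\elp(\elpctr, \elpmat, \elprad)) = \clist{\proj_{A(\mat{Q}, \vect{p})}(\vect{x}) \,\big|\, \vect{x} \in \elp(\elpctr, \elpmat, \elprad)}$, so the task is to identify this image set.

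First I would invoke \reflem{lem.PointOrthogonalProjection} to rewrite the projection map as the affine map $\vect{x} \mapsto \mat{Q}\tr{\mat{Q}}(\vect{x} - \vect{p}) + \vect{p} = \mat{A}\vect{x} + \vect{b}$ with $\mat{A} := \mat{Q}\tr{\mat{Q}}$ and $\vect{b} := \vect{p} - \mat{Q}\tr{\mat{Q}}\vect{p}$. Then I would apply \reflem{lem.EllipsoidAffineTransformation} with this $\mat{A}$ and $\vect{b}$, which gives that the image is the ellipsoid $\elp(\mat{A}\elpctr + \vect{b}, \mat{A}\elpmat\tr{\mat{A}}, \elprad)$. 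A short computation, using that $\mat{Q}\tr{\mat{Q}}$ is symmetric (so $\tr{\mat{A}} = \mat{A}$), yields $\mat{A}\elpctr + \vect{b} = \mat{Q}\tr{\mat{Q}}(\elpctr - \vect{p}) + \vect{p}$ and $\mat{A}\elpmat\tr{\mat{A}} = \mat{Q}\tr{\mat{Q}}\elpmat\mat{Q}\tr{\mat{Q}}$, which is exactly the first claimed form of the projected ellipsoid.

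For the second equality --- the factorization through the lower-dimensional ellipsoid --- I would observe that $\proj_{A(\mat{Q}, \vect{p})}(\vect{x}) = \mat{Q}\bigl(\tr{\mat{Q}}(\vect{x} - \vect{p})\bigr) + \vect{p}$ decomposes the projection as the composition of the affine map $g : \vect{x} \mapsto \tr{\mat{Q}}\vect{x} - \tr{\mat{Q}}\vect{p}$ from $\R^{n}$ to $\R^{m}$ followed by the affine orthogonal map $h : \vect{y} \mapsto \mat{Q}\vect{y} + \vect{p}$. Applying \reflem{lem.EllipsoidAffineTransformation} to $g$ gives $g(\elp(\elpctr, \elpmat, \elprad)) = \elp(\tr{\mat{Q}}(\elpctr - \vect{p}), \tr{\mat{Q}}\elpmat\mat{Q}, \elprad)$, and applying it once more to $h$ reproduces the first form; chaining the two shows that the projected ellipsoid is the $h$-image of this lower-dimensional ellipsoid, as asserted.

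There is no real obstacle here; the only points requiring minor care are confirming that the projection of a set is the pointwise image (legitimate because metric projection onto an affine subspace is a well-defined function) and that \reflem{lem.EllipsoidAffineTransformation} is stated for arbitrary $\mat{A} \in \R^{m \times n}$, so it applies even though $\mat{A} = \mat{Q}\tr{\mat{Q}}$ is a rank-deficient projector rather than an invertible or orthogonal matrix. The rest is bookkeeping with the identities $\tr{\mat{Q}}\mat{Q} = \mat{I}_{m \times m}$ and $(\mat{Q}\tr{\mat{Q}})^{2} = \mat{Q}\tr{\mat{Q}}$.
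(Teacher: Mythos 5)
Your proposal is correct and follows essentially the same route as the paper: identify the pointwise projection as an affine map via \reflem{lem.PointOrthogonalProjection}, then invoke \reflem{lem.EllipsoidAffineTransformation} to propagate the ellipsoid through it. The only difference is stylistic — the paper rederives the first form by direct computation on $\elpctr + \elpmat^{1/2}\vect{u}$ and cites \reflem{lem.EllipsoidAffineTransformation} only for the factored form, whereas you apply \reflem{lem.EllipsoidAffineTransformation} uniformly (once with $\mat{A}=\mat{Q}\tr{\mat{Q}}$, then twice through the composition $h\circ g$), which is slightly cleaner but not a different argument.
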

\begin{proof}
The orthogonal project $\proj_{A(\mat{Q}, \vect{p})}(\vect{x})$ of a point $\vect{x} \in \R^{n}$ onto the affine subspace $A(\mat{Q}, \vect{p})$ is  given by (\reflem{lem.PointOrthogonalProjection}) 
\begin{align}
\proj_{A(\mat{Q}, \vect{p})}(\vect{x}) = \mat{Q} \mat{Q}^{T}(\vect{x} - \vect{p}) + \vect{p}.
\end{align} 
Hence, the projection of an ellipsoid point $\vect{x}= \elpctr +  \elpmat^{\frac{1}{2}} \vect{u}  \in \elp(\elpctr, \elpmat, \elprad) = \clist{\elpctr + \elpmat^{\frac{1}{2}} \vect{u} \big | \vect{u} \in \R^{n}, \norm{\vect{u}} \leq \elprad}$ onto $A(\mat{Q}, \vect{p})$ is given by
\begin{align*}
\mat{Q}\mat{Q}^T\!(\elpctr \!+\! \elpmat^{\frac{1}{2}} \vect{u} \!-\! \vect{p}) \!+\! \vect{p} = \mat{Q}\mat{Q}^T\!(\elpctr \!-\! \vect{p})  \!+\! \vect{p} \!+\! \mat{Q} \mat{Q}^T \elpmat^{\frac{1}{2}} \vect{u}.  
\end{align*}
Therefore, one can observe that the projection of an  ellipsoid  is another ellipsoid as follows 
\begin{align*}
\Pi_{A(\mat{Q}, \vect{p})}(\elp(\elpctr, \elpmat, \elprad)) &  = \clist{\Pi_{A(\vect{p}, \mat{Q})}(\vect{x}) | \vect{x} \in \elp(\elpctr, \elpmat, \elprad)} 
\\
& \hspace{-10mm} = \clist{\mat{Q}\mat{Q}^T(\elpctr +  \elpmat^{\frac{1}{2}} \vect{u} - \vect{p}) | \norm{\vect{u}} \leq \elprad}
\\
& \hspace{-10mm} = \clist{\mat{Q}\mat{Q}^T(\elpctr - \vect{p})  + \vect{p} + \mat{Q}\mat{Q}^T \elpmat^{\frac{1}{2}} \vect{u} | \norm{\vect{u}} \leq \elprad}
\\
& \hspace{-10mm} = \elp(\mat{Q}\mat{Q}^T (\elpctr - \vect{p}) + \vect{p}, \mat{Q}\mat{Q}^T \elpmat \mat{Q}\mat{Q}^T, \elprad))
\end{align*}
because $\mat{Q}\mat{Q}^T \elpmat^{\frac{1}{2}} \tr{(\mat{Q}\mat{Q}^T \elpmat^{\frac{1}{2}})} = \mat{Q}\mat{Q}^T \elpmat \mat{Q}\mat{Q}^T$.
Moreover, it follows from \reflem{lem.EllipsoidAffineTransformation} that the orthogonal projection of an ellipsoid can be rewritten as an orthogonal transformation of another ellipsoid as 
\begin{align*}
&\proj_{A(\vect{p}, \mat{Q})}(\elp(\elpctr, \elpmat, \elprad))\! = \!\clist{\mat{Q} \vect{x} \!+\! \vect{p} \big | \vect{x} \!\in\!  \elp(\mat{Q}^T\! (\elpctr\! - \!\vect{p}), \mat{Q}^T \elpmat \mat{Q}, \elprad)\!}
\end{align*}
which completes the proof.
\end{proof}
\noindent Note that $\mat{P} = \mat{Q}\mat{Q}^T$ is known as the orthogonal projection matrix (i.e., $\mat{P}^2 = \mat{P}$ and $\mat{P}^T = \mat{P}$) defining the orthogonal projection of ellipsoids on the affine subspace $A(\mat{Q}, \vect{p})$.

\section{Projected Lyapunov Ellipsoids for Linear Systems}

Invariant sublevel sets of Lyapunov functions are widely used for constrained control and optimization of dynamical systems \cite{blanchini_Automatica1999}.
In particular, quadratic Lyapunov functions of stable linear time-invariant systems offer analytic ellipsoidal bounds on their state-space trajectories.  
In this part, we present orthogonal projections of Lyapunov ellipsoids to handle partial state constraints; for example,  spatial safety constraints expressed in terms of system position, and control input constraints expressed in terms of system velocity and acceleration.  

Consider an exponentially stable linear-time invariant system whose state vector $\vect{y} \in \R^{n}$ evolves based on
\begin{align*}
\dot{\vect{y}} = \mat{A} \vect{y}
\end{align*}
where  $\mat{A}$ is a stability (Hurwitz) matrix whose eigenvalues have strictly negative real parts. 
To express future system motion, starting from any initial state $\vect{y}(0) \in \R^{n}$, one can construct a quadratic Lyapunov function $V_{\mat{P}}(\vect{y}) = \vect{y}^T \mat{P} \vect{y}$ parametrized by a positive definite symmetric matrix $\mat{P} \in \PDM^{n}$ that uniquely solves the Lyapunov equation
\begin{align*}
\mat{A}^T \mat{P} + \mat{P} \mat{A} + \mat{C}^T \mat{C} = 0
\end{align*}
for some matrix $\mat{C} \in \R^{r \times n}$  such that $(\mat{A}, \mat{C})$ is observable \cite{khalil_NonlinearSystems2001}. 
Since the time rate of change of the Lyapunov function is nonincreasing, i.e.,
\begin{align*}
\dot{V}_{\mat{P}}(\vect{y}) = \vect{y}^T (\mat{A}^T \mat{P} + \mat{P} \mat{A}) \vect{y} = -\norm{\mat{C} \vect{y}}^2 \leq 0,
\end{align*}
the state-space trajectory $\vect{y}(t)$ of the system is contained in the Lyapunov ellipsoid $\elp(\mat{0}, \mat{P}^{-1}, \norm{\vect{y}(0)}_{\mat{P}})$, i.e.,
\begin{align*}
\vect{y}(t)  & \in \elp(\mat{0}, \mat{P}^{-1}, \norm{\vect{y}(0)}_{\mat{P}})  \quad \forall t \geq 0
\end{align*}
where $\elp(\elpctr, \elpmat, \elprad)$ denotes the ellipsoid centered at $\elpctr \in \R^{n}$ and associated with a positive semidefinite matrix $\elpmat \in \PSDM^{n}$ and a nonnegative scaling factor $\elprad \in \R_{\geq 0}$ that is defined as in \refeq{eq.Ellipsoid}, 
and $\norm{\vect{x}}_{\mat{W}} := \sqrt{\tr{\vect{x}} \mat{W} \vect{x}}$ is the weighted Euclidean norm associated with a positive definite matrix $\mat{W} \in \PDM^{n}$ and  $\norm{.}$ denotes the standard Euclidean norm.

In general, various system constraints are specified in terms of different subsets of system state variables e.g., position constraints for safe motion and velocity/acceleration constraints for control inputs.
Instead of converting such system constraints into high-dimensional state-space constraints, one can handle them in lower-dimension subspaces by using orthogonal projection of system trajectory and Lyapunov ellipsoids. 

\begin{proposition} \label{prop.ProjectedLyapunovEllipsoid}
\emph{(Projected Lyapunov Ellipsoids)} 
Given a quadratic  Lyapunov  function $V_{\mat{P}}(\vect{y}) = \tr{\vect{y}} \mat{P} \vect{y} $ for a stable linear-time invariant system $\dot{\vect{y}} = \mat{A} \vect{y}$, the orthogonal projection of the state-space trajectory $\vect{y} (t)$ of the system, starting from any initial state $\vect{y}(0) \in \R^{n}$, onto the column space of an orthogonal matrix $\mat{Q} \in \R^{n \times m}$  can be explicitly bounded by the orthogonal projection of the Lyapunov ellipsoid $\elp(\mat{0}, \mat{P}^{-1}, \norm{\vect{y}(0)}_{\mat{P}})$  onto the same subspace as
\begin{align*}
\tr{\mat{Q}} \vect{y}(t) \in \elp(\mat{0}, \tr{\mat{Q}} \mat{P}^{-1} \mat{Q}, \norm{\vect{y}(0)}_{\mat{P}}) \quad \forall t \geq 0.
\end{align*}
\end{proposition}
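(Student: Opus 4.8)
The plan is to compose two facts that are already in hand: the Lyapunov sublevel-set containment for the full state trajectory, and the explicit formula for the image of an ellipsoid under a linear map given in \reflem{lem.EllipsoidAffineTransformation}. The only conceptual point is to recognize that taking the orthogonal projection of $\vect{y}(t)$ onto the column space of $\mat{Q}$ and expressing it in the coordinates induced by $\mat{Q}$ is exactly the linear map $\vect{y} \mapsto \tr{\mat{Q}} \vect{y}$; this is legitimate because $\tr{\mat{Q}} \mat{Q} = \mat{I}_{m \times m}$, so composing $\tr{\mat{Q}}(\cdot)$ with the embedding $\vect{x} \mapsto \mat{Q} \vect{x}$ recovers the orthogonal projection matrix $\mat{Q} \tr{\mat{Q}}$.

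First I would establish the full-state containment. Along any trajectory of $\dot{\vect{y}} = \mat{A} \vect{y}$ the quadratic Lyapunov function $V_{\mat{P}}(\vect{y}) = \tr{\vect{y}} \mat{P} \vect{y}$ has derivative $\dot{V}_{\mat{P}}(\vect{y}) = \tr{\vect{y}} (\tr{\mat{A}} \mat{P} + \mat{P} \mat{A}) \vect{y} = -\norm{\mat{C} \vect{y}}^{2} \leq 0$, using the Lyapunov equation $\tr{\mat{A}} \mat{P} + \mat{P} \mat{A} + \tr{\mat{C}} \mat{C} = \mat{0}$; hence $V_{\mat{P}}(\vect{y}(t)) \leq V_{\mat{P}}(\vect{y}(0)) = \norm{\vect{y}(0)}_{\mat{P}}^{2}$ for all $t \geq 0$, which by the definition of the ellipsoid in \refeq{eq.Ellipsoid} means $\vect{y}(t) \in \elp(\mat{0}, \mat{P}^{-1}, \norm{\vect{y}(0)}_{\mat{P}})$ for all $t \geq 0$.

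Next I would push this containment through the linear map $\vect{x} \mapsto \tr{\mat{Q}} \vect{x}$ (an affine map with zero translation). By \reflem{lem.EllipsoidAffineTransformation}, the image of $\elp(\mat{0}, \mat{P}^{-1}, \norm{\vect{y}(0)}_{\mat{P}})$ under this map is the ellipsoid $\elp(\mat{0}, \tr{\mat{Q}} \mat{P}^{-1} \mat{Q}, \norm{\vect{y}(0)}_{\mat{P}})$ (the transpose of $\tr{\mat{Q}}$ being $\mat{Q}$). Since $\vect{y}(t)$ lies in the source ellipsoid for every $t \geq 0$, its image $\tr{\mat{Q}} \vect{y}(t)$ lies in $\elp(\mat{0}, \tr{\mat{Q}} \mat{P}^{-1} \mat{Q}, \norm{\vect{y}(0)}_{\mat{P}})$, which is precisely the asserted bound. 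Equivalently, one may invoke \reflem{lem.EllipsoidOrthogonalProjection} with $\vect{p} = \mat{0}$ and $\elpctr = \mat{0}$, whose coordinate description of $\proj_{A(\mat{Q}, \mat{0})}(\elp(\mat{0}, \mat{P}^{-1}, \norm{\vect{y}(0)}_{\mat{P}}))$ is the same ellipsoid.

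I do not expect any genuine obstacle here; the argument is a two-step composition of results proved earlier. The step most worth stating carefully is the identification in the first paragraph --- that $\tr{\mat{Q}} \vect{y}(t)$ is the correct object to bound --- since one could otherwise mistakenly report the projection as the higher-dimensional degenerate ellipsoid $\elp(\mat{0}, \mat{Q}\tr{\mat{Q}} \mat{P}^{-1} \mat{Q}\tr{\mat{Q}}, \norm{\vect{y}(0)}_{\mat{P}})$ in $\R^{n}$ rather than its $m$-dimensional representation $\elp(\mat{0}, \tr{\mat{Q}} \mat{P}^{-1} \mat{Q}, \norm{\vect{y}(0)}_{\mat{P}})$ in $\R^{m}$.
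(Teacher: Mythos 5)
Your proof is correct and follows essentially the same route as the paper: establish $\vect{y}(t) \in \elp(\mat{0}, \mat{P}^{-1}, \norm{\vect{y}(0)}_{\mat{P}})$ via the Lyapunov equation, then push that set through the linear map $\tr{\mat{Q}}(\cdot)$ using the ellipsoid-transformation machinery. The paper invokes \reflem{lem.PointOrthogonalProjection} and \reflem{lem.EllipsoidOrthogonalProjection} directly whereas you go through \reflem{lem.EllipsoidAffineTransformation} with $\mat{A} = \tr{\mat{Q}}$, but as you yourself note these are interchangeable here, so the arguments coincide.
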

\begin{proof}
Denote by $\colspace(\mat{Q})$ the column space of $\mat{Q}$ that is the set of all linear combinations of its column vectors, i.e.,
\begin{align*}
\colspace(\mat{Q}):= \clist{\mat{Q} \vect{x} | \vect{x} \in \R^{m}}.
\end{align*}
The orthogonal projection $\proj_{\colspace(\mat{Q})}(\vect{y})$ of a point $\vect{y} \in \R^{n}$  onto the column space $\colspace(\mat{Q})$ is given by (\reflem{lem.PointOrthogonalProjection}) 
\begin{align*}
\proj_{\colspace(\mat{Q})}(\vect{y}) = \argmin_{\vect{x} \in \colspace(\mat{Q})} \norm{\vect{x} - \vect{y}} = \mat{Q}\tr{\mat{Q}} \vect{y}.  
\end{align*}  
which is the orthogonal transformation of $\tr{\mat{Q}} \vect{y} \in \R^{m}$ via the linear map $\vect{x} \rightarrow \mat{Q} \vect{x}$.

The orthogonal projection $\proj_{\colspace{\mat{Q}}} \elp(\elpctr, \elpmat, \elprad)$ of an ellipsoid $\elp(\elpctr, \elpmat, \elprad)$ is another ellipsoid that is explicitly given by (\reflem{lem.EllipsoidOrthogonalProjection})
\begin{align*}
\proj_{\colspace(\mat{Q})} \elp(\elpctr, \elpmat, \elprad) &= \elp(\mat{Q} \tr{\mat{Q}} \elpctr, \mat{Q} \tr{\mat{Q}} \elpmat \mat{Q} \tr{\mat{Q}}, \elprad), 
\\
& = \clist{\mat{Q} \vect{x} | \vect{x} \in \elp(\tr{\mat{Q}}\elpctr, \tr{\mat{Q}} \elpmat \mat{Q}, \elprad)},
\end{align*}
which is the orthogonal transformation of the ellipsoid $\elp(\tr{\mat{Q}}\elpctr, \tr{\mat{Q}} \elpmat \mat{Q}, \elprad) \in \R^{m}$ via the linear map $\vect{x} \rightarrow \mat{Q} \vect{x}$.

Hence, the result follows from the definition of the metric projection since the system trajectory $\vect{y}(t)$ is contained in the Lyapunov ellipsoid $\elp(\mat{0}, \mat{P}^{-1}, \norm{\vect{y}(0)}_{\mat{P}})$.
\end{proof}


\bibliographystyle{IEEEtran}
\bibliography{references}


\end{document}